\numberwithin{equation}{section}
\newtheorem{observation}{Observation}
\newtheorem{theorem}{Theorem}
\newtheorem{example}{Example}
\newtheorem{lemma}{Lemma}
\newtheorem{proposition}{Proposition}
\newtheorem{remark}{Remark}
\newtheorem{corollary}{Corollary}
\DeclareMathOperator{\Ex}{\mathbb{E}}
\renewcommand*{\Pr}{\mathbb{P}}
\begin{document}

\title{Multiplayer Bandit Learning, from Competition to Cooperation}
\titlemark{Multiplayer Bandit Learning}

\emsauthor{1}{Simina Br\^anzei}{S.~Br\^anzei}
\emsauthor{2}{Yuval Peres}{Y.~Peres}


\emsaffil{1}{Postal address: Purdue University; \email{simina.branzei@gmail.com}}
\emsaffil{2}{Postal address: Kent State University; \email{yuval@yuvalperes.com}}

\classification[60Gxx]{91Axx}

\keywords{Game theory, bandit learning, multiplayer learning, zero-sum games, Nash equilibrium, cooperating players, strategic experimentation}

\begin{abstract}
The stochastic multi-armed bandit model captures the tradeoff between exploration and exploitation. We study the effects of competition and cooperation on this tradeoff. Suppose there are two arms, one predictable and one risky, and two players, Alice and Bob. In every round, each player pulls an arm, receives the resulting reward, and observes the choice of the other player but not their reward. Alice's utility is $\Gamma_A + \lambda \Gamma_B$ (and similarly for Bob), where $\Gamma_A$ is Alice's total reward and $\lambda \in [-1, 1]$ is a cooperation parameter. At $\lambda = -1$ the players are competing in a zero-sum game, at $\lambda = 1$, their interests are aligned, and at $\lambda = 0$, they are neutral: each player's utility is their own reward. The model is related to the economics literature on strategic experimentation, where usually players observe each other's rewards.

Suppose the predictable arm has success probability $p$ and the risky arm has prior $\mu$. If the discount factor is $\beta$, then the value of $p$ where a single player is indifferent between the arms is the Gittins index $g = g(\mu,\beta) > m$, where $m$ is the mean of the risky arm.

We answer, in this setting, a fundamental question posed by \cite{rotschild}. We show that competing and neutral players eventually settle on the same arm (even though it may not be the best arm) in every Nash equilibrium, while this can fail for players with aligned interests.

Moreover, we show that \emph{competing players} explore \emph{less} than a single player: there is $p^* \in (m, g)$ so that for all $p > p^*$, the players stay at the predictable arm. However, the players are not myopic: they still explore for some $p > m$. On the other hand, \emph{cooperating players} (with $\lambda =1$) explore \emph{more} than a single player. We also show that \emph{neutral players} learn from each other, receiving strictly higher total rewards than they would playing alone, for all $ p\in (p^*, g)$, where $p^*$ is the threshold above which competing players do not explore.
\end{abstract}

\maketitle

\section{Introduction}

The multi-armed bandit learning problem is a paradigm that captures the tradeoffs between exploration and exploitation ~\cite{Gittins_book,BCB12,Slivkins_book,Tor_book}.
We study the effects of competition and cooperation on exploration in a multiplayer stochastic bandit problem, where multiple players are selecting arms and receiving the corresponding rewards in each round.
Examples of scenarios that can be captured by this model include
competing firms in a saturated market, which can be seen as a zero-sum game where the utility of a player is the difference between their rewards and the rewards of the opponent. At the other extreme, the interaction of organisms that are (almost) genetically identical, such as ants and bees, can be modeled by a fully cooperative game~\cite{Hamilton64} where the utility of a player is the sum of the rewards of all players.

We consider a unifying framework that interpolates between these extremes, and focus on the so-called one-armed bandit problem, where the choices are playing a predictable arm or a risky one. The feedback received by each player is their own reward and the action taken by the other player.
If Alice's total reward is $\Gamma_A$ and Bob's total reward is $\Gamma_B$, then Alice's utility is $\Gamma_A + \lambda \Gamma_B$ and similarly for Bob, where $\lambda \in \mathbb{R}$. We study three choices for $\lambda$: the zero-sum case ($\lambda = -1$), the fully aligned case ($\lambda= 1$), and the neutral setting ($\lambda = 0$).

We study the long term behavior of the players, and show that both neutral and competing eventually settle on the same arm (even though it may not be the best arm) in every Nash equilibrium (Theorem~\ref{thm:neutral_competing_settle_same_arm}). However, this can fail for cooperating players, where there are Nash equilibria in which one of the players switches infinitely often between the arms.

Rotschild~\cite{rotschild} studied the theory of market pricing using a two-armed bandit model and asked the question of whether players eventually settle on the same arm.
Aoyagi~\cite{aoyagi,aoyagi_corr} answered this question positively in the model with imperfect monitoring, where players can see each other's actions but only their own rewards, under an assumption on the distributions.
To the best of our knowledge, our paper is the first progress on the Rothschild conjecture since Aoyagi's work.

A key question in the zero sum scenario is to quantify the value of information obtained by exploring (i.e. experimenting with the risky arm). Note that while Alice does not know Bob's rewards, she may try to infer them from his actions, which in turn may lead to Bob trying to change his actions to hide information from her. 
We find that under optimal play, information is less valuable in the zero-sum game than in the one player setting, which leads to reduced exploration compared to the one player optimum (Theorem~\ref{thm:competing_explore_less_discounted}). However, Theorem~\ref{thm:non_myopic_discounted} shows that information still has positive value. Our model is close to the games of incomplete information analyzed by Aumann and Maschler~\cite{aumann_book}, where a player may forego some rewards to hide information from their opponent.

In contrast to the zero-sum scenario, we show that exploration is increased in the fully cooperative setting compared to the single player optimum (Theorem~\ref{thm:cooperation}). We also study the neutral regime and find that in a range of parameters, with probability $1$ the players explore in every Nash equilibrium. Moreover, if the equilibrium is also perfect Bayesian, then the players learn from each other: each player gets in expectation strictly higher total reward than they would when playing alone (Theorem~\ref{neutral}). A corollary is that with positive probability the players do not follow the same trajectory in any perfect Bayesian equilibrium.

\section{Model} \label{sec:model}

Suppose there are two players, Alice and Bob, each of which pulls one of $K$ arms in each round. The rewards are drawn from $\{0,1\}$: for each arm $k$ there is a prior distribution $\mu_k$ so that the success probability is picked (by Nature) from $\mu_k$ before the game starts. In the finite horizon version, the players play for $T+1$ rounds indexed from $0$ to $T$. In the discounted version, the game continues forever, but the players have value reduced by a factor of $\beta^t$ for the reward in round $t$, where $\beta \in (0, 1)$\footnote{This can alternatively be interpreted as the game stopping with probability $1-\beta$ in each round.}.

\paragraph{Rewards} In every round, each player collects the reward from the arm they pulled (even if both chose the same arm). We denote by $\gamma_A(t)$ and $\gamma_B(t)$ the random variable corresponding to the reward received by Alice and Bob, respectively in round $t \geq 0$. The total reward of player $i$ in a finite horizon game is $\Gamma_i = \sum_{t=0}^{T} \gamma_i(t)$, while player $i$'s reward in the discounted game is $\Gamma_i = \sum_{t = 0}^{\infty} \gamma_i(t) \cdot \beta^t$.
In every round, after selecting the arm to pull, each player observes their own reward and the action taken by the other player, but not their reward.

\paragraph{Utilities} The utility of each player is a combination of their own reward and the reward of the other player.

More precisely, there is a \emph{cooperation parameter} $\lambda \in [-1, 1]$ so that Alice's utility is $u_A = \Gamma_A + \lambda \cdot \Gamma_B$, while Bob's utility is $u_B = \Gamma_B + \lambda \cdot \Gamma_A$. For $\lambda = -1$, we have a zero-sum game while for $\lambda = 1$, the interests of the players are aligned. The case $\lambda = 0$ is the neutral regime where each player's utility is their own total reward.

\paragraph{Arms}
We focus on the so-called one-armed bandit problem, where there is a predictable left arm, denoted L, with known success probability $p$ and a risky right arm, denoted R, with a prior $\mu$ that is not a point mass. A player is said to ``explore'' if that player selects the right arm at least once.
Denote by $m =\int_{0}^{1} x \mathop{d\mu(x)}$ the mean of the prior $\mu$, by $w = \int_{0}^{1} (x-m)^2 \mathop{d\mu(x)} > 0$ the variance of $\mu$, and by $M^*$ the maximum of the support of $\mu$:
\begin{align}
	\label{eq:M_star} M^* = \sup\{x \in [0,1] : \mu(x,1] > 0\}
\end{align}

\paragraph{Strategies} A pure strategy for player $i$ is a function $S_i : \bigcup_{t \in \mathbb{N}} Y_t \times Z_t^i \rightarrow \{L, R\}$, where $Y_t$ is the public history (i.e. the sequence of past actions of both players) until the end of round $t-1$ and $Z_t^i$ is the private history of player $i$, containing the bits observed by $i$ until the end of round $t-1$.

Thus a pure strategy tells player $i$ which arm to play next given the public and private history. The space of pure strategies is a compact metrizable space in the product topology. A mixed strategy is a probability distribution over the set of pure strategies, so the space of pure strategies is weak$^*$ compact.
The \emph{expected utility} of a player is computed using the player's beliefs about the private information of the other player. 
 For more details on extensive form games, see Chapter 3 in \cite{maschler_book}.

\section{Our Results} \label{sec:our_results}

We analyze the properties of the game when the players are competing  ($\lambda = -1$), cooperating ($\lambda = 1$), and neutral ($\lambda = 0$), for both the discounted and finite horizon settings.

In the discounted setting, the \textbf{Gittins index} $g = g(\mu, \beta)$ of the right arm is defined as the infimum of the success probabilities $p$ where playing always left is optimal for a single player. The index was first defined by Bradt, Johnson, and Karlin~\cite{bradt_johnson_karlin} and its importance for the multi-armed bandit problem was shown by Gittins and Jones~\cite{gittins_jones}; see also \cite{Gittins79,berry_book}.


\subsection{Competing Players ($\lambda = -1$)}

Our first theorem shows that when using optimal strategies, competing players explore less than a single player, for both discounted games and finite horizon. This zero-sum game has a value by Sion's minimax theorem~\cite{sion1958}; the earlier theorems of Glicksberg~\cite{Glicksberg1952} and Fan~\cite{Fan1953} also apply. Discounting yields continuity of the utility function.

\begin{theorem}[Competing players explore less] \label{thm:competing_explore_less_discounted}
	Suppose arm $L$ has a known probability $p$ and arm $R$ has i.i.d.\ rewards with unknown success probability with prior $\mu$ (which is not a point mass). Assume that Alice and Bob are playing optimally in the zero sum game with discount factor $\beta$.
	
	Then there exists a threshold $p^* = p^*(\mu, \beta,-1) < g$, where $g = g(\mu,\beta)$ is the Gittins index of the right arm, such that for all $p > p^*$, with probability $1$ the players will not explore arm R. More precisely, define
	\begin{align} \label{def:p_star}
		p^* = p^*(\mu, \beta, \lambda) = \sup \Bigl\{p : \mbox{arm R is explored in some Nash equilibrium }\Bigr\}		
	\end{align}
	Then
	$p^*(\mu, \beta,-1) \leq (m \beta + g)/(1+ \beta)$, where $m$ is the mean of $\mu$.
\end{theorem}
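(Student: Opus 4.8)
The plan is to combine the zero-sum symmetry with a single explicit ``copying'' deviation for the opponent that refuses to imitate the first exploratory pull.

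First I would record that, since the game is symmetric and zero-sum, the value $V$ guaranteed by Sion's theorem must satisfy $V=0$: swapping the roles of the two players leaves the game unchanged, so Bob can guarantee the same value for $\Gamma_B-\Gamma_A=-(\Gamma_A-\Gamma_B)$, forcing $V=-V$. Next I reduce to the moment of first exploration. As long as both players play $L$ no information about arm $R$ is revealed, so at the first round $\tau$ at which some player plays $R$ the posterior is still $\mu$ and the continuation game is a verbatim copy of the original; its value is therefore again $0$. Call the exploring player Alice. Since playing $R$ lies on the equilibrium path of a zero-sum game it is an optimal action, so Alice's continuation value conditioned on playing $R$ at round $\tau$ equals the subgame value $0$. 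Re-indexing so that $\tau=0$, the quantity to be bounded is $\Phi := \Ex\bigl[\sum_{t\ge 0}\beta^t(\gamma_A(t)-\gamma_B(t))\bigr]$ under optimal continuation, and $\Phi=0$.

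To upper bound $\Phi$ I fix one strategy for Bob and use that $\Phi$ is at most Alice's best-reply value against it. Let Bob play $L$ in rounds $0$ and $1$ and, from round $2$ onward, copy Alice's previous action, $b_t=a_{t-1}$. Because both players draw independent rewards from the same hidden parameter, Bob's copied pull has the same expected reward as Alice's original pull one round earlier, so a direct computation gives $\Ex[\Gamma_B]=p(1+\beta)+\beta(\Ex[\Gamma_A]-m)$; the $-\beta m$ records that Bob substitutes a safe pull worth $p$ for Alice's initial exploratory pull worth only $m$. Hence \[ \Ex[\Gamma_A-\Gamma_B]=(1-\beta)\,\Ex[\Gamma_A]-p(1+\beta)+\beta m. \] As Bob's moves are deterministic functions of Alice's past play, Alice learns nothing from observing him, and her problem collapses to the single-player one of maximizing $\Ex[\Gamma_A]$ subject to exploring once at round $0$. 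Writing $V_R(p)$ for that constrained single-player optimum, this gives $\Phi\le (1-\beta)V_R(p)-p(1+\beta)+\beta m$.

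Combining with $\Phi=0$ yields $p(1+\beta)\le (1-\beta)V_R(p)+\beta m$. I then bound $(1-\beta)V_R(p)\le g$: forcing one exploratory pull only lowers the single-player value, so $V_R(p)\le V_1(p)$ where $V_1(p)$ is the unconstrained single-player optimum; $V_1$ is nondecreasing in $p$; and at the Gittins index playing $L$ forever is optimal, so $V_1(g)=g/(1-\beta)$. For $p\le g$ this gives $(1-\beta)V_R(p)\le (1-\beta)V_1(g)=g$ and hence $p(1+\beta)\le g+\beta m$. For $p>g$ one has $(1-\beta)V_R(p)\le p$, so the same inequality reads $0\le \beta(m-p)<0$, a contradiction ruling out exploration. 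Taking the supremum over all $p$ that admit exploration in some Nash equilibrium gives $p^*\le (g+\beta m)/(1+\beta)$, as claimed. The delicate points are the reduction and the choice of deviation: the reduction relies on conservation of value along optimal play and on the first-exploration subgame inheriting value $0$, and making this precise for behavioral and mixed strategies — while noting that it also covers simultaneous exploration, since the copying deviation upper bounds Alice's value whatever Bob's equilibrium move is — is the main technical obstacle. The insight that yields the nontrivial factor $\beta/(1+\beta)$ rather than the trivial Gittins bound $p^*\le g$ is precisely that Bob declines to replicate Alice's first low-value exploratory pull and plays safe instead; isolating this one-round substitution is what pushes the threshold strictly below $g$.
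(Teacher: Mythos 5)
Your core quantitative step is exactly the paper's: the wait-one-round-then-copy deviation, the identity $\Ex[\Gamma_B]=p(1+\beta)+\beta\bigl(\Ex[\Gamma_A]-m\bigr)$, and the Gittins bound $(1-\beta)V_R(p)\le\max(p,g)$ together give the threshold $(m\beta+g)/(1+\beta)$. The gap is the step that connects this computation to equilibrium play: you assert that $\Phi$ --- Alice's continuation value conditional on exploring, which is by definition evaluated against \emph{Bob's equilibrium continuation} --- ``is at most Alice's best-reply value against'' a different, fixed Bob strategy. Values against two different opponent strategies are not comparable in this way: in matching pennies, Alice's conditional value of playing Heads against the equilibrium opponent is $0$, while her best reply \emph{constrained to Heads} against the pure strategy Tails is $-1$. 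And here the inequality you invoke is precisely the crux of the theorem, since in the regime $p>(m\beta+g)/(1+\beta)$ its left side is $0$ and its right side is negative; asserting it is tantamount to asserting the conclusion. You flag ``the main technical obstacle'' yourself, but the parenthetical you offer (simultaneous exploration) addresses a different and easier issue than this one.

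The correct bridge --- the one the paper's proof rests on --- is the maximin property of equilibrium strategies in a zero-sum game: Alice's equilibrium strategy $\sigma_A$ must secure payoff $\ge 0$ against \emph{every} Bob strategy, in particular against the globally defined $S_B$: ``play L until Alice first explores, at whatever round $k$ that happens; play L once more at round $k+1$; then copy Alice's previous move.'' Decomposing $\Ex[\Gamma_A-\Gamma_B]$ under $(\sigma_A,S_B)$ over Alice's first exploration time, every exploring branch contributes at most $\bigl(m\beta-p(1+\beta)+g\bigr)\beta^k<0$ and every non-exploring branch contributes $0$, so Alice's payoff against $S_B$ is strictly negative whenever she explores first with positive probability --- contradicting maximin; a short coupling observation (Alice's play up to her first exploration is the same whether Bob follows $\sigma_B$ or $S_B$, since both show her an all-L history) then kills exploration in the original equilibrium, and symmetry finishes. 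Crucially, this evaluation must cover the branches where Alice explores \emph{late}, which is why $S_B$'s two safe rounds must be keyed to Alice's first exploration time. Your $S_B^{\text{copy}}$ (``L in rounds $0,1$, then copy $a_{t-1}$ from round $2$'') is not so keyed: against a first exploration at round $k\ge 1$ it copies with lag one, and the same computation then certifies only the weaker threshold $m(1-\beta)+\beta g$, which exceeds $(m\beta+g)/(1+\beta)$ whenever $\beta>(\sqrt{5}-1)/2$. So in that range the single fixed deviation you wrote down cannot deliver the claimed bound, and the conditional-value shortcut does not substitute for the branch-by-branch deviation argument.
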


\begin{remark}
	Note that $p^*(\mu, \beta,-1)$ tends to  $(m+g)/2$ as $\beta \to 1$.
\end{remark}

An analogue of Theorem \ref{thm:competing_explore_less_discounted} for finite horizon is shown in Section \ref{sec:competitive} (Theorem \ref{thm:competing_explore_less_finite}).

\medskip

In the single player one-armed bandit, obtaining information about the risky arm can compensate for a lower mean reward compared to the predictable arm. In the zero-sum case, the value of acquiring information is less clear since it can be copied by the opponent in the next round. The next theorem shows that such information still has value: competing players do not follow a myopic policy of pulling the arm with the highest mean reward in each round.

\begin{theorem}[Competing players are not completely myopic] \label{thm:non_myopic_discounted}
	In the setting of Theorem \ref{thm:competing_explore_less_discounted},
	let $\widetilde{p} = \widetilde{p}(\mu,\beta,\lambda)$ be the maximal threshold such that for all $p < \widetilde{p}$, with probability $1$ both players will explore arm $R$ in the initial round of any Nash equilibrium.
	Then $\widetilde{p}(\mu,\beta,-1) \ge m+\beta w/4$, where  $m$ is the mean of $\mu$ and $w$ is its variance.
\end{theorem}
\begin{remark}
	Note that $\widetilde{p} \leq p^*$ and we conjecture they are in fact equal.
\end{remark}


\subsection{Cooperating Players ($\lambda = 1$)}

In the fully cooperative model the players aim to optimize the same function, namely the sum of their rewards.
The players are allowed to agree on their strategies before play.

\begin{theorem}[Cooperating players explore more] \label{thm:cooperation}
	Consider two cooperating players, Alice and Bob, playing a one armed bandit problem with discount factor $\beta \in (0,1)$. The left arm has success probability $p$ and the right arm has prior distribution $\mu$ that is not a point mass.
	
	Then there exists $\widehat{p} > g= g(\mu,\beta)$, so that for all $p < \widehat{p}$, at least one of the players explores the right arm with positive probability under any optimal strategy pair maximizing their total reward.
\end{theorem}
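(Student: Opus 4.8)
The plan is to reduce the statement to a single strict inequality, $V_2(g,\mu) > 2g/(1-\beta)$, and then extend it to an interval $(g,\widehat p)$ by continuity. Here I write $V_2(p,\mu)$ for the optimal total discounted reward of the cooperating pair and $V_1(p,\mu)$ for the single-player value; by the very definition of the Gittins index, the Gittins policy is optimal for one player and gives $V_1(g,\mu)=g/(1-\beta)$, with exploration being one of the optimal choices at $p=g$. The only strategy pair under which $R$ is never explored is ``both players play $L$ forever,'' of total value exactly $2p/(1-\beta)$. So it suffices to exhibit a \emph{feasible} pair whose value beats $2p/(1-\beta)$. I will do this at $p=g$ and then extend: for a fixed pair the total reward is affine in $p$ (only the $L$-reward depends on $p$, linearly, with discounted slope in $[0,\tfrac{2}{1-\beta}]$), so $V_2(\cdot,\mu)$ is a supremum of affine functions, hence convex and continuous; the strict gap $V_2(p,\mu)-2p/(1-\beta)$ then persists on a right-neighborhood of $g$.

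Fix $p=g$. The pair I would analyze has Alice \emph{free-ride information} to Bob. Alice plays the single-player Gittins policy ignoring Bob: she pulls $R$ in round $0$ and keeps pulling $R$ while the Gittins index of her current posterior is at least $g$, switching to $L$ permanently otherwise. Since $R$ has i.i.d.\ rewards, Alice's payoff is unaffected by Bob and equals $V_1(g,\mu)=g/(1-\beta)$, so at the indifference point her ``sacrifice'' is costless. The crucial point is that Alice's round-$1$ action is a clean signal of her round-$0$ reward: writing $\mu_1,\mu_0$ for the posteriors after a success, resp.\ failure, strict monotonicity of the Gittins index under a Bayesian update gives $g(\mu_1)>g>g(\mu_0)$, so Alice pulls $R$ in round $1$ exactly when round $0$ was a success. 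Bob plays $L$ in rounds $0$ and $1$; observing Alice's round-$1$ action, from round $2$ on he runs the single-player Gittins policy starting from the inferred posterior ($\mu_1$ after a signalled success, and $L$ forever otherwise).

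It remains to see that Bob gains strictly. Conditioned on a round-$0$ success (probability $m>0$, since $\mu$ is not the point mass at $0$), Bob from round $2$ faces a single-player one-armed bandit with prior $\mu_1$ and known arm $g$; because $g(\mu_1)>g$, exploring beats $L$-forever, so his continuation value is $V_1(g,\mu_1)>g/(1-\beta)$. Bob's strategy agrees with ``$L$ forever'' on rounds $0,1$ and on the failure branch, so
\begin{equation*}
\text{(Bob's value)}-\frac{g}{1-\beta}=m\,\beta^{2}\left(V_1(g,\mu_1)-\frac{g}{1-\beta}\right)>0 .
\end{equation*}
Hence $V_2(g,\mu)\ge V_1(g,\mu)+\text{(Bob's value)}>2g/(1-\beta)$. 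By the continuity argument this produces $\widehat p>g$ with $V_2(p,\mu)>2p/(1-\beta)$ for all $p<\widehat p$; for such $p$ any optimal pair must explore with positive probability, since otherwise it would be ``both play $L$ forever'' and attain only $2p/(1-\beta)$, contradicting optimality. (For $p<g$ the conclusion is immediate because already $V_1(p,\mu)>p/(1-\beta)$, so the pair in which Alice explores and Bob plays $L$ beats $2p/(1-\beta)$.)

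The main obstacle is the imperfect monitoring: a priori Bob sees only Alice's actions, not her rewards, so it is not clear that her exploration transmits usable information. The round-$1$ separation above is exactly the device that resolves this, turning Alice's (single-player optimal) continuation decision into a faithful one-bit report of her first reward, which Bob exploits for free. Making this signalling precise, together with the two standard ingredients I am invoking---strict monotonicity of the Gittins index under a success/failure update, and continuity (indeed convexity and Lipschitzness) of $V_2$ in $p$---is the technical heart of the argument; the conceptual payoff is that at $p=g$ the explorer's loss is exactly zero while the free-rider's gain is strictly positive.
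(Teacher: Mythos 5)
Your proposal shares the paper's top-level architecture---at $p=g$ let Alice play a single-player optimal strategy (costless at the indifference point) while Bob free-rides strictly, then perturb to $p=g+\delta$---but the mechanism you give Bob is genuinely different. The paper's Bob decodes nothing: he plays $L$ in rounds $0$ and $1$ and then blindly copies Alice's previous action, and the zero-sum copying inequality (\ref{eq:delta_gittins_myopic}) applied at $p=g$, $k=0$ gives $\Ex(\Gamma_A-\Gamma_B)\le (m-g)\beta$, hence total reward at least $2g/(1-\beta)+(g-m)\beta$; Lemma~\ref{lem:gbound} ($g-m\ge \beta w/2$) then yields the explicit threshold $\widehat p\ge g+\beta^2w(1-\beta)/4$. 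Your Bob instead decodes Alice's round-$0$ reward from her round-$1$ action and re-solves his own bandit from the posterior $\mu_1$, gaining $m\beta^2\bigl(V_1(g,\mu_1)-g/(1-\beta)\bigr)$. The copying route is more robust---it needs no structural facts about Alice's optimal policy and no property of the Gittins index beyond its definition plus Lemma~\ref{lem:gbound}---and it is quantitative, whereas your $\widehat p$ comes from a continuity argument (which is fine, with the small remark that affineness of a fixed pair's value in $p$ requires restricting, without loss of generality, to strategies that ignore the uninformative rewards of arm $L$). Your route is conceptually sharper about where the information actually flows.

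The genuine gap is the claim, invoked as ``standard,'' that $g(\mu_1)>g>g(\mu_0)$. This is not cosmetic: the strict inequality $g(\mu_1)>g$ is the sole source of Bob's strict gain, and $g(\mu_0)\le g$ is what makes Alice's round-$1$ action a faithful signal while remaining single-player optimal; these claims carry the whole weight of the theorem and are not quotable off the shelf in the form you need, so they must be proved. They are true, and can be derived from the paper's own toolkit. Weak monotonicity $g(\mu_0)\le g(\mu)\le g(\mu_1)$ (likelihood-ratio dominance of the posteriors, proved by coupling) is the part that can reasonably be treated as standard; note that only $g(\mu_0)\le g$, with ties broken toward $L$, is needed for the signalling. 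For strictness, use the dynamic-programming identity at the indifference point,
\[
\frac{g}{1-\beta} \;=\; m + \beta\Bigl[\, m\, V_1(g,\mu_1) + (1-m)\, V_1(g,\mu_0)\,\Bigr],
\]
and suppose $g(\mu_1)\le g$, so that $V_1(g,\mu_1)=g/(1-\beta)$. Solving the identity for the failure branch forces $V_1(g,\mu_0)=g/(1-\beta)+(g-m)/\bigl(\beta(1-m)\bigr)>g/(1-\beta)$, using $g>m$ from Lemma~\ref{lem:gbound}; that is, $g(\mu_0)>g\ge g(\mu_1)$, contradicting weak monotonicity. Hence $g(\mu_1)>g$. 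With this lemma supplied your proof is correct; without it, the strict gain is unsupported---and it is worth noting that the paper's copying argument sidesteps the issue entirely.
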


We also show that cooperating players do not explore indiscriminately for every value of $p$. Proposition~\ref{prop:cooperating_careful} in Section \ref{sec:cooperative} makes this precise by showing a non-empty interval for $p$ in which cooperating players do not explore.

\subsection{Neutral Players ($\lambda = 0$)}

For neutral play the utility of each player is their total reward and the solution concepts will be Nash equilibrium and perfect Bayesian equilibrium.
Recall player $i$'s strategy $\sigma_i$ is a \emph{best response} to player $j$'s strategy $\sigma_j$ if no strategy $\sigma_i'$ achieves a higher expected utility against $\sigma_j$.
A mixed strategy profile $(\sigma_A, \sigma_B)$ is a \emph{Bayesian Nash equilibrium} if $\sigma_i$ is a best response for each player $i$. For brevity, we refer to such strategy profiles as Nash equilibria. 

\medskip

A \emph{Perfect Bayesian Equilibrium} is the version of subgame perfect equilibrium for games with incomplete information. A pair of strategies $(\sigma_A, \sigma_B)$ is a perfect Bayesian equilibrium if $(i)$ starting from any information set, subsequent play is optimal, and $(ii)$ beliefs are updated consistently with Bayes' rule on every path of play that occurs with positive probability.
Such equilibria are guaranteed to exist in this setting (see Fudenberg and Levine~\cite{fudenberg}).


\medskip

We will say that players learn from each other under some strategies if the total expected reward of each player is strictly higher than it would be for a single player using an optimal strategy. This can happen if the players infer additional information from each other's actions, beyond the bits that they observe themselves.

\begin{theorem}[Neutral players learn from each other] \label{neutral}
	Consider two neutral players, Alice and Bob, playing a one armed bandit problem with discount factor $\beta \in (0,1)$. The left arm has success probability $p$ and the right arm has prior distribution $\mu$ that is not a point mass. Then in any Nash equilibrium:
	\begin{enumerate}
		\item For all $p < g(\mu,\beta)$, with probability $1$
		at least one player explores. Moreover, the probability that no player explores by time $t$ decays exponentially in $t$.
		\item Suppose $p \in (p^*, g)$, where $p^*$ is the threshold above which competing players do not explore~\footnote{For the formal definition of $p^*$, see  Theorem~\ref{thm:competing_explore_less_discounted}.}. If the equilibrium is furthermore perfect Bayesian, then every (neutral) player has expected reward strictly higher than a single player using an optimal strategy.
	\end{enumerate}
\end{theorem}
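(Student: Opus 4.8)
The plan is to prove the two parts separately, relying heavily on the competing-players threshold $p^*$ and the structure of Nash equilibria in the neutral game.

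For part (1), I would argue that for $p < g$ a single player strictly prefers to explore, and show this preference survives the presence of a second neutral player. The key observation is that in the neutral game a player's utility is exactly their own reward, so each player faces a problem that is at least as good as playing alone: the worst case for a given player is that the other player's actions convey no useful information, in which case the player's best response reduces to the single-player bandit problem with prior $\mu$ and predictable arm $p$. Since $p < g$ means exploring strictly beats committing to L for a lone player, in any Nash equilibrium a player who never explores would be strictly improvable by deviating to the single-player optimal (exploring) policy — so not-exploring-by-both cannot be an equilibrium outcome. To get the stronger ``with probability $1$ at least one explores,'' I would analyze the event that neither player has explored through round $t$: conditioned on this, each player's posterior is still $\mu$ (no information about R has been gathered by either), so each still strictly prefers to explore, giving a constant per-round probability bound that the configuration ``both stay at L forever'' is ruled out. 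The exponential decay then follows: let $q>0$ be a lower bound on the probability that some player explores in a round given nobody has explored yet (this $q$ is uniform because the posterior is frozen at $\mu$ on this event); then $\Pr[\text{no exploration by time } t] \le (1-q)^{t}$. The main care needed here is handling mixed strategies and making the ``frozen posterior'' argument rigorous when players may randomize.

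For part (2), the goal is a strict improvement over the single-player optimum, and here the range $p \in (p^*, g)$ is essential. Since $p > p^*$, by the definition of $p^*$ and Theorem~\ref{thm:competing_explore_less_discounted} competing players do not explore — intuitively, the information that exploration reveals is worth \emph{less} than the predictable reward $p$ when accounting for the opponent copying it. But in the neutral game there is no such penalty: a player actually \emph{benefits} when the opponent explores, because the opponent's action (observed each round) reveals information about R at no cost to oneself. The plan is to fix a perfect Bayesian equilibrium and consider a player, say Alice. By part (1), with probability $1$ at least one player explores, so with positive probability Bob explores at some point; when he does, perfect Bayesianness forces Alice's beliefs to update on Bob's action, and on the positive-probability event where Bob's exploration reveals a good (above-$p$) signal, Alice can condition on strictly better information than she would ever have alone. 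I would make this quantitative by comparing Alice's equilibrium continuation value to her single-player value: the single-player value is achievable by Alice ignoring Bob entirely (a feasible strategy), and the extra observations of Bob's informative actions can only help, with the inequality strict because the event ``Bob reveals information that changes Alice's optimal action'' has positive probability in the regime $p \in (p^*, g)$.

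The hard part will be establishing strictness in part (2) rather than mere weak improvement, and in particular ruling out degenerate equilibria in which neither player's exploration ever conveys usable information to the other. The subtlety is that ``at least one player explores'' (from part (1)) does not immediately mean \emph{each} player strictly benefits — one must show that in the regime $p>p^*$ the exploration that occurs is informative from the other player's perspective and occurs on a positive-probability, value-relevant event. I expect this requires combining the perfect-Bayesian belief-consistency condition (so that observed exploration genuinely updates beliefs) with a careful coupling between the two-player equilibrium and the single-player optimal policy, using $p < g$ to guarantee the single-player value strictly exceeds the always-L value and $p > p^*$ to guarantee the informational gain from the partner is not competed away. Symmetry between Alice and Bob then gives the conclusion for every player.
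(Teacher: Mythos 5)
There is a genuine gap in both parts, and in Part 2 it sits at the heart of the theorem. For Part 1, your uniform per-round exploration probability $q$ does not follow from the frozen-posterior observation: equilibrium strategies need not be stationary, so conditional on no exploration so far, a player may plan to explore only at specific later rounds, and the per-round conditional exploration probability can be $0$ in many rounds. What the frozen posterior plus the equilibrium property actually yield is a per-\emph{block} bound, and this is how the paper proceeds: each player's equilibrium reward is at least the single-player optimum $\Lambda = \alpha/(1-\beta)$ (achievable by ignoring the opponent, since in the neutral game one's reward depends only on one's own pulls), and comparing $\Lambda$ with the reward available on the event $D_k$ that nobody explores during $k$ rounds, where $k$ is chosen so that $\beta^k \leq (\alpha-p)/2$, forces $\Pr(D_k^c) \geq (\alpha-p)/2$. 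Iterating this conditionally over blocks of length $k$ gives the exponential decay. Your sketch asserts the key quantitative bound rather than proving it, and as stated (per round, uniform $q$) it is false.

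For Part 2, your route --- weak improvement by ignoring Bob, plus strictness because Bob's exploration is informative with positive probability --- cannot be completed in that form. Receiving extra information with positive probability does not imply strictly higher expected value, and the paper itself exhibits the obstruction: there are Nash equilibria in which both players follow the same trajectory (each plays the Gittins policy; any deviation is punished by left-forever), where each player gets \emph{exactly} the single-player optimum and the opponent's actions carry no information. So strictness genuinely fails for Nash equilibria, and perfect Bayesianness must be used to rule out non-credible punishments, not merely for belief consistency; you name this as ``the hard part'' but supply no mechanism for it. The paper's mechanism is entirely different and is where $p > p^*$ enters: suppose for contradiction some player, say Bob, earned exactly $\alpha/(1-\beta)$ in a PBE. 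Bob deviates to $S_B'$: play left until Alice first explores (she does, with positive probability), then play the optimal follower strategy of the zero-sum game from that point on. Since $p > p^*$, Theorem~\ref{thm:competing_explore_less_discounted} implies the first explorer strictly loses that zero-sum continuation, so on trajectories where Alice explores, $\Ex(\Gamma_B') > \Ex(\Gamma_A')$, with equality on the others. Sequential rationality in the PBE guarantees that against $S_B'$ Alice, learning nothing from Bob, still secures $\Ex(\Gamma_A') \geq \alpha/(1-\beta)$. Hence $\Ex(\Gamma_B') > \Ex(\Gamma_A') \geq \alpha/(1-\beta) = \Ex(\Gamma_B)$, a profitable deviation --- contradiction. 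The threshold $p^*$ is thus used to make a wait-and-exploit deviation strictly profitable, not to ensure ``informational gains are not competed away,'' and without some argument of this kind your proposal does not establish strictness.
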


We note that when $p > g$ there are perfect Bayesian equilibria in which the players do not explore; for example, always playing left is a perfect Bayesian equilibrium.

\medskip

A corollary of Theorem~\ref{neutral} is that in the range $(p^*, g)$ there is no perfect Bayesian equilibrium where the (neutral) players have the same trajectory with probability $1$.

\bigskip

Figure~\ref{fig:unified_plot} shows the different regions in which players explore depending on the success probability $p$ of the left arm as a function of the prior $\mu$ and the discount factor $\beta$.

\begin{figure}[h!]
	\centering
	\includegraphics[scale=0.9]{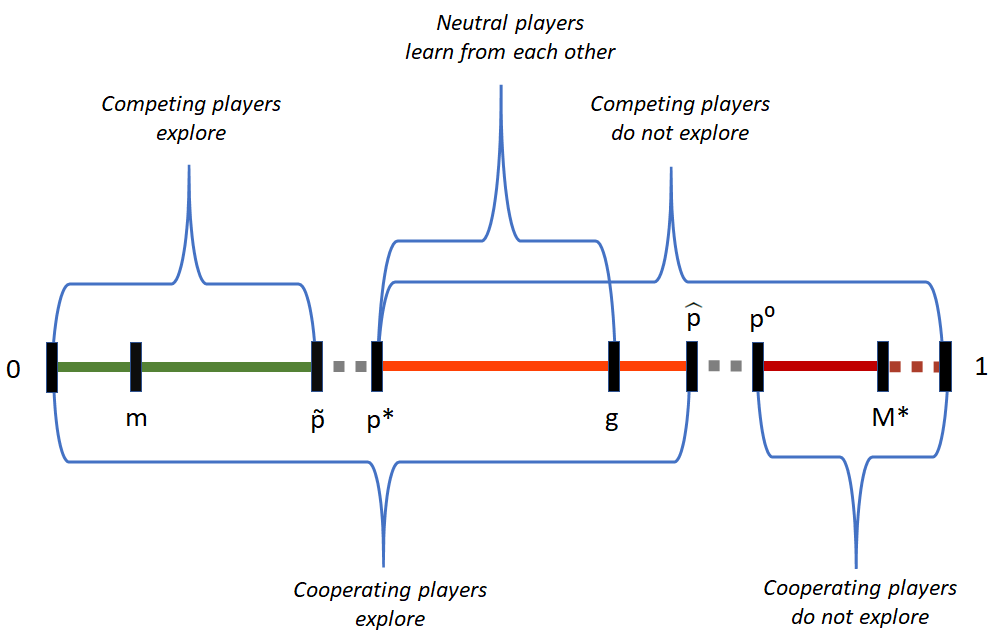}
	\caption{\emph{\small{Different regions in which players explore depending on the success probability $p$ of the left arm as a function of the prior $\mu$ and the discount factor $\beta$. Here $m$ is the mean of $\mu$, while $g = g(\mu, \beta)$ is the Gittins index of the right arm, $\widetilde{p}$ is the threshold where for all $p < \widetilde{p}$ competing players explore, $p^*$ the threshold where for all $p > p^*$ competing players do not explore, $\widehat{p}$ the threshold where for all $p < \widehat{p}$ cooperating players explore, and $p^{\circ}$ the threshold above which cooperating players do not explore. $M^*$ is the maximum of the support of $\mu$. Solid intervals are non-empty, dotted intervals may be empty. }}}
	\label{fig:unified_plot}
\end{figure}

\newpage

\subsection{Long Term Behavior}

We also analyze the long term behavior of the players and show that in every Nash equilibrium, competing and neutral players converge to playing the same arm forever from some point on. For $\lambda \not \in [-1,0]$, this can fail. The convergence proofs for competing and neutral players are technically very challenging and we explain the  intuition below.

\begin{theorem} \label{thm:neutral_competing_settle_same_arm}
	Consider two players, Alice and Bob, playing a one armed bandit problem with discount factor $\beta \in (0,1)$. The left arm has success probability $p$ and the right arm has prior distribution $\mu$ such that $\mu(p)=0$. Then in any Nash equilibrium, in both the competing ($\lambda = -1$) and neutral ($\lambda = 0$) cases, the players eventually settle on the same arm with probability $1$.
\end{theorem}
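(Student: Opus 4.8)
The plan is to classify every equilibrium play-path by how often the risky arm $R$ is pulled and to show that each class forces the two players to agree on a single arm from some finite time on. Let $N$ denote the total (random) number of rounds in which at least one player pulls $R$. If $N<\infty$, then after the last such round neither player ever pulls $R$ again, so both play $L$ forever and have trivially settled on the same arm. Hence the whole content lies on the event $\{N=\infty\}$, on which at least one player pulls $R$ infinitely often, and I will show that on this event both players settle on $R$.

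The first tool I would establish is \emph{full learning}: a player who pulls $R$ infinitely often observes an i.i.d.\ Bernoulli($\theta$) subsequence of rewards, so by the strong law their posterior concentrates at the true parameter $\theta$, and since $\mu(\{p\})=0$ we have $\theta\neq p$ almost surely. As the posterior concentrates, the value of further information about $R$ vanishes (it is controlled by the posterior variance, which tends to $0$), so such a player's incentives become the myopic stage-game incentives with $\theta$ known. A direct check of per-round payoffs shows that in both the neutral ($\lambda=0$) and competing ($\lambda=-1$) cases the better arm strictly dominates the worse one by a margin $|\theta-p|$, uniformly over the opponent's action: if $\theta>p$ then $R$ dominates $L$, and if $\theta<p$ then $L$ dominates $R$. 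Consequently a player pulling $R$ infinitely often cannot have $\theta<p$ (otherwise avoiding $R$ at large times would be a strictly profitable deviation, contradicting best response), and must in fact play $R$ in all sufficiently late rounds. Thus on $\{N=\infty\}$ we get $\theta>p$ almost surely, and every player who pulls $R$ infinitely often settles on $R$.

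It then remains to rule out, on $\{N=\infty\}$, the asymmetric outcome in which one player $i$ settles on $R$ while the other player $j$ pulls $R$ only finitely often and hence plays $L$ forever from some time on. Here I would use a belief-martingale argument in $j$'s own filtration $\mathcal{H}^j_t$. Let $E=\{i \text{ plays } R \text{ in all sufficiently late rounds}\}$; by the previous step $E\subseteq\{\theta>p\}$ up to a null set, and $E$ is $\mathcal{H}^j_\infty$-measurable because $j$ observes $i$'s actions. Since $\theta>p$ strictly on $E$ and $\mu(\{p\})=0$, the conditional mean satisfies $\mathbb{E}[\theta\mid\mathcal{H}^j_\infty]>p$ almost surely on $E$. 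The process $\mathbb{E}[\theta\mid\mathcal{H}^j_t]$ is a bounded martingale, hence converges to $\mathbb{E}[\theta\mid\mathcal{H}^j_\infty]$, so on $E$ there is a finite time after which $j$'s posterior mean for $\theta$ exceeds $p$. But then the one-shot deviation ``switch to $R$ and play $R$ thereafter'' gives $j$ a strictly higher continuation value than playing $L$ forever (its per-round advantage is bounded below by a positive multiple of $\mathbb{E}[\theta\mid\mathcal{H}^j_t]-p$), contradicting that $j$ best-responds. Therefore $j$ also pulls $R$ infinitely often, and by full learning $j$ likewise settles on $R$. Combining the cases, the players settle on the same arm with probability $1$: on $L$ when $N<\infty$ and on $R$ when $N=\infty$.

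I expect the main obstacle to be making ``the value of information vanishes, so incentives become myopic'' fully rigorous, particularly in the competing case, where one must also argue that a player's incentive to pull $R$ in order to manipulate the opponent's future beliefs decays, since the only payoff-relevant effect of a late-round action is informational and is bounded by the (vanishing) posterior uncertainty. Discounting forces this to be carried out as a localized single-round (or single-tail) profitable deviation rather than an accumulation over ``infinitely many'' rounds, because late per-round losses are themselves heavily discounted. The belief-inference step for the non-exploring player $j$ is the other delicate point, but it is handled cleanly by bounded-martingale convergence of $\mathbb{E}[\theta\mid\mathcal{H}^j_t]$ together with the strict separation $\theta>p$ on $E$ guaranteed by $\mu(\{p\})=0$.
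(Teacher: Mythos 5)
Your neutral-case ($\lambda=0$) skeleton is essentially the paper's own proof: a player who explores infinitely often learns $\Theta$, and localized stopping-time deviations (``play left forever'' / ``play right forever'') pin her to the better arm; the non-exploring player is then handled by a posterior-martingale argument in his own filtration followed by a ``switch to $R$ forever'' deviation. The paper carries this out quantitatively (Hoeffding plus Markov's inequality, stopping times, ``good'' histories), but your outline is a faithful sketch of that route, including the correct remark that the comparisons must be single-tail, localized deviations.

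The competing case ($\lambda=-1$), however, has a genuine gap. Your key step asserts that once $\Theta$ is essentially learned, the better arm ``strictly dominates\ldots uniformly over the opponent's action,'' so that avoiding it would be a strictly profitable deviation. Stage-game dominance is true, but it does not imply profitability of the deviation in the repeated game: with $u_A=\Gamma_A-\Gamma_B$, Alice's deviation changes the history Bob observes, hence Bob's future actions and rewards, and $\Gamma_B$ enters Alice's utility with a minus sign. This effect is \emph{not} informational and does not decay with posterior uncertainty (which is how you propose to control it); it is of the same order as Alice's own gain. Concretely, suppose $\Theta<p$ and both players explore infinitely often, and Bob's strategy prescribes ``keep exploring as long as Alice does, switch to $L$ forever once she stops.'' Then Alice's switch to $L$ raises her own reward by roughly $(p-\Theta)$ per round but raises Bob's reward by the same amount, so her zero-sum utility need not improve at all, and the best-response contradiction evaporates. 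This is exactly the difficulty the paper flags (``players might refrain from switching to the optimal arm in order to not trigger an adverse reaction from the opponent'') and resolves by a genuinely two-sided argument: in Case 1b (and again in Case 2) it writes down the deviation bound for \emph{both} players, on the events $D$ and $D'$, and then invokes the zero-sum identity $\Ex(u_A)+\Ex(u_B)=0$ to reach a contradiction; no single player's deviation suffices. Your proposal contains no substitute for this step, so the competing half of the theorem remains unproved.
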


The intuition behind the proof for neutral players is that
if both players explore finitely many times, then we are done. Otherwise, there is a player, say Alice, who explores infinitely many times. Then Alice will eventually know which arm is better, so if she continues exploring we must have $\Theta > p$. Thus if Bob sees that Alice keeps exploring, he will eventually realize that $\Theta > p$ and will join her at the right arm.

Two obstacles make the proof delicate. The first is that the theorem applies to all Nash equilibria, without assuming subgame perfection, so we need to rule out non-credible threats  (which do occur in the case $\lambda = 1$, see Example~\ref{eg:uncredible_coop} below).
The second obstacle is that $\Theta$ might be very close to $p$, which delays the time at which Alice determines the better arm. This issue did not arise in Aoyagi's work since the distributions were discrete.
We overcome this obstacle by careful concentration arguments and Bayesian analysis; see sections \ref{app:concentration_lemma} and \ref{app:neutral_players}.

In the zero-sum setting there is an additional difficulty compared to the neutral case: the players might refrain from switching to the optimal arm in order to not trigger an adverse reaction from the opponent. The key to overcoming this difficulty is realizing that it is impossible for both players to benefit from repeatedly pulling the inferior arm. Each player might subjectively believe for some time that they are playing the better arm, but if a player keeps exploring, then their subjective evaluation of the risky arm will converge to the objective reality. The proof can be found in section ~\ref{app:settle_competing}.

\medskip

The next example shows that when $\lambda = 1$ there are Nash equilibria where aligned players do not settle on the same arm.

\begin{example}[Nash equilibria where players do not converge, $\lambda = 1$] \label{eg:uncredible_coop}
	Suppose Alice and Bob are aligned players in a one-armed bandit problem with discount factor $\beta$, where the left arm has success probability $p$ and the right arm has prior distribution $\mu$ that is a point mass at $m > p$.
	
	Then for every discount factor $\beta > 1/2$, there is a Nash equilibrium in which Bob visits both arms infinitely often.
\end{example}

Let $k \in \mathbb{N}$. Bob's strategy $S_B$ is to play left in rounds $0, k, 2k, 3k, \ldots$, and right in the remaining rounds.
Alice's strategy $S_A$ is to play right if Bob follows the trajectory above; if Bob ever deviates from $S_B$, then Alice switches to playing left forever. In Section~\ref{subsec:oscillations_nash} we show that if $k$ is large enough, then this is indeed a Nash equilibrium.
The strategies are depicted in Figure~\ref{fig:oscillating_cooperating}.

\begin{figure}[h!]
	\centering
	\includegraphics[scale = 0.7]{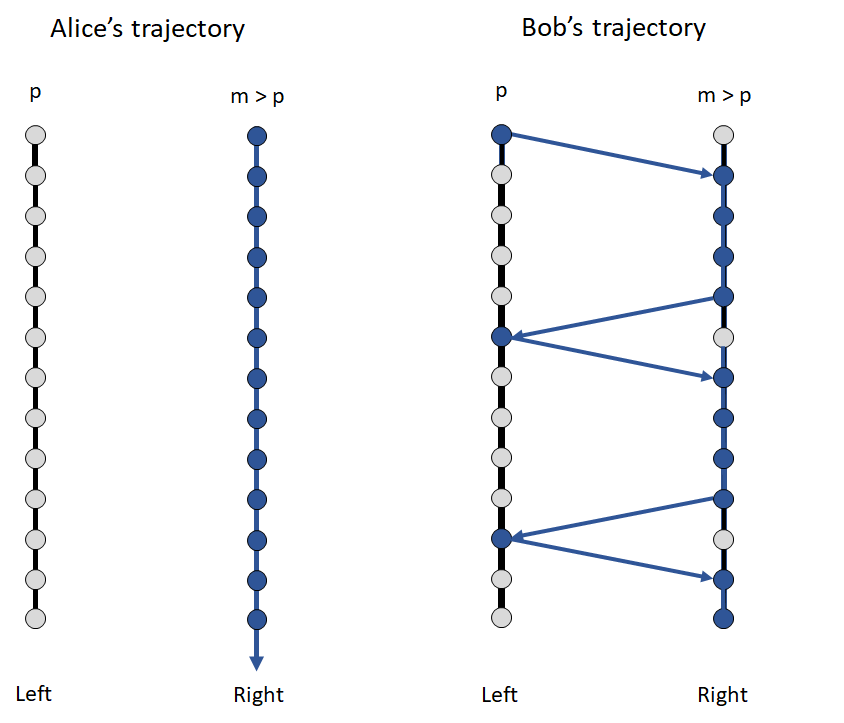}
	\caption{Trajectories of the players on the main line under strategies $(S_A, S_B)$ for $k=5$. The circles represent time units $0, 1, 2, \ldots$. The left arm has success probability $p$ and the right arm has prior distribution which is a point mass at $m > p$.}
	\label{fig:oscillating_cooperating}
\end{figure}


Note this Nash equilibrium is not Pareto optimal: both players could improve by a joint deviation to always playing right. This Nash equilibrium is also not subgame perfect since it has a non-credible threat by Alice (of playing left forever after any deviation by Bob).

\subsection{Roadmap of the Paper}

A more detailed discussion of related work is given in section~\ref{sec:related}. Background on the Gittins index is given in section~\ref{sec:background_gittins_nash}.

Section~\ref{sec:competitive} has the analysis for competing  players, showing that they explore less than a single player, but are also not completely myopic.
Section~\ref{sec:cooperative} studies fully cooperating players, showing they explore more than a single player.
Section~\ref{sec:neutral} studies neutral players, showing that they learn from each other.

The long term behavior of the players is studied in section~\ref{app:long_term}, where we show that both neutral and competing eventually settle on the same arm in every Nash equilibrium, while this can fail for cooperating players.

Directions for future research are described in section~\ref{sec:future_directions}.

\section{Related Work} \label{sec:related}


This work is related to several streams of research.
Bandit learning problems with multiple players have been studied in the collision model, where players are pulling arms independently. The players are cooperating---trying to maximize the sum of rewards---and can agree on a protocol before play, but cannot communicate during the game. Whenever there is a collision at some arm, then no player that selected that arm receives any reward. This is motivated by applications such as cognitive radio networks, where interference at a channel destroys the signal for all the players involved.

Several research directions in the setting with collisions include designing algorithms that allow the players to maximize the total reward, depending on whether the environment is adversarial (see, e.g., \cite{ALK19,BLPS19}) or stochastic (e.g.,  \cite{LJP08,LZ10,Kalathil14,LM18,BL18}), and on whether they receive feedback about the collision or not~\cite{AM14,RSS16,BBMKP17,BP18}. Hillel, Karnin,  Koren, Lempel, and Somekh~\cite{NIPS2013_4919} study exploration in multi-armed bandits in a setting with pure exploration, where players collaborate to identify an $\epsilon$-optimal arm.

Aoyagi~\cite{aoyagi,aoyagi_corr} studies bandit learning with the same feedback model as ours, when there are multiple neutral players and two risky arms.
The main result is that assuming a property on the distributions of the arms, in any Nash equilibrium the players settle eventually on the same arm. This is related to Aumann's agreement theorem~\cite{Aumann76}, which shows in a Bayesian setting that rational players cannot agree to disagree. While this game is not captured by the formal model of the theorem in~\cite{Aumann76}, the result is similar conceptually.

\smallskip

Another natural feedback model is perfect monitoring, where all players see the actions and rewards of each other. This is not interesting in the zero sum case (see Remark~\ref{remark:perfect_monitor_myopic}), but in the neutral and cooperative cases it is close to classic papers in the economics of strategic experimentation.
Bolton and Harris~\cite{BH99} study a multiplayer learning problem in
in the one-armed bandit model with continuous time and perfect monitoring, where the feedback model is that players can observe each other's past actions and rewards. The main effects observed in symmetric equilibria are a free rider effect and an encouragement effect, where a player may explore more in order to encourage further exploration from others. Cripps, Keller, and Rady~\cite{keller2005strategic} characterize the unique Markovian equilibrium of the game. They also show that asymmetric equilibria are more efficient than symmetric ones, as it is more useful for the players to take turns experimenting.

Heidhues, Rady, and  Strack~\cite{HRS15} study the discrete version of this model and establish that in any Nash equilibrium, players stop experimenting once the common belief falls below a single-agent cut-off. They also show that the total number of experiments performed in equilibrium differs from the single-agent optimum by at most one. The work in~\cite{HRS15} additionally studies a model with imperfect monitoring, where the players can observe each other's actions but only their own rewards. The players can also communicate via cheap talk. One of the main findings is that cheap talk is incentive compatible and the socially optimal symmetric experimentation profile can be supported as a perfect Bayesian equilibrium. Klein and  Rady~\cite{KR11} study the one-armed bandit model with public monitoring, where the correlation across bandits is negative.

Rosenberg, Solan, and Vieille~\cite{social_learning} and Rosenberg,  Salomon, and Vieille~\cite{RSV13} study the model with imperfect monitoring, 
except the decision to switch from the risky arm to the safe one is irreversible. As they write, \emph{``Dropping the assumption that payoffs are publicly observed raises new issues. Player $i$ would like to make inferences about player $j$'s observations on the basis of player $j$'s actions, but cannot do so without knowing how player $j$'s decisions relate to player $j$'s observations, that is, $j$'s strategy''.}
 Mansour, Slivkins, and Wu~\cite{MSW18} consider a setting where players arrive and depart over time.

\smallskip

Another line of work in economics studies the interplay between competition and innovation (see, e.g.,  \cite{AJ88}).
Close to our high level direction is the work of Besanko and  Wu~\cite{BW13}, which studies the tradeoff between cooperation and competition in R\&D via learning in a one-armed bandit model, where the safe arm is the established product and the risky arm is the novel product. Our model is also related to the literature on stochastic games with imperfect monitoring (see, e.g., \cite{APS90}). Rosenberg, Solan, and Vieille~\cite{RSV09} study a model of dynamic games with informational externalities and analyze the rate of experimentation, providing conditions under which players eventually reach a consensus. The work in ~\cite{RSV09} is also related to the Rothschild conjecture.

\smallskip

In evolutionary biology, there is a line of research dedicated to understanding the extremely high levels of cooperation observed in social insects (see, e.g., \cite{Anderson84} and \cite{Boomsma07}), including designing mathematical models to explain why eusociality would evolve from natural selection (e.g.,  \cite{NTW10}). Social learning was studied in the context of understanding altruistic behaviors such as sharing of information about food locations (see, e.g., \cite{GLR10}).

\smallskip

The theme of incentivizing exploration was studied, for example, by Kremer, Mansour and Perry~\cite{KMP13}, and
Frazier, Kempe, Kleinberg, and Kleinberg~\cite{FKKK14}, where the problem is that a principal wants to explore a set of arms, but the exploration is done by a stream of myopic agents that have their own incentives and may prefer to exploit instead. Mansour, Slivkins, and Syrgkanis~\cite{MSS15} design Bayesian incentive compatible mechanisms for such settings.

Aridor, Liu, Slivkins, and Wu~\cite{ALSW19} empirically study the interplay between exploration and competition in a model where multiple firms are competing for the same market of users and each firm commits to a multi-armed bandit algorithm. The objective of each firm is to maximize its market share and the question is when firms are incentivized to adopt better algorithms. Multi-armed bandit problems with strategic arms have been studied theoretically by  Braverman, Mao, Schneider, and Weinberg \cite{BMSW19}, in the setting where each arm receives a reward for being pulled and the goal of the principal is to incentivize the arms to pass on as much of their private rewards as possible to the principle.

\smallskip

Immorlica, Kalai, Lucier, Moitra, Postlewaite, Tennenholtz~\cite{IKLMPT11} study competitive versions of classic search and optimization problems by converting them to zero-sum games. One of the open questions from~\cite{IKLMPT11} was whether competition between algorithms improves or degrades expected performance in that framework, which was answered by Dehghani, Hajiaghayi, Mahini, and Seddighin~\cite{DHMS16} for the ranking duel and a more general class of dueling games.

Finally, we note that a conference version of the present paper appeared in COLT 2021~\cite{YP21}.

\section{Background on the Gittins index}  \label{sec:background_gittins_nash}


In this section we give a brief background on the Gittins index.
The Gittins index was first defined by  \cite{bradt_johnson_karlin} and its importance for the multi-armed bandit problem was shown by  \cite{gittins_jones}; see also \cite{Gittins79,berry_book}.
The description below    mainly follows \cite{Gittins_index_background_Weber}.

In the classic multi-armed bandit problem, there is a gambler who can play any of $n$ one-armed bandit machines. The goal of the gambler is to play in a way that maximizes its expected total discounted reward.

Let $[n] = \{1, \ldots, n\}$ be the set of bandits, each of which is  a Markov process. The state of bandit $j$ at time step $t \in \{0,1,\ldots, \}$ is denoted $x_j(t)$. When playing bandit $j$, the gambler receives reward $R_j(x_j(t))$ and the state of bandit $j$ changes in a known Markov fashion, while the states of the other bandits remain unchanged.

A policy stipulates which bandit to play next, given the history of play and the rewards obtained so far. Given policy $\pi$, let $j(t)$ denote the bandit played at time step $t$.
The goal is to find a policy that maximizes the expected discounted reward, defined as  follows:
\begin{align}
	V_{\pi}(x) = \Ex_{\pi}\left[ \sum_{t=0}^{\infty} \beta^t \cdot R_{j(t)}\bigl(x_j(t)\bigr) \mid x(0) = x \right]
\end{align}

The solution to this optimization problem is described by functions $G_j$, which are known as Gittins indices. Each function $G_j$ only depends on the current state bandit $j$. Gittins and Jones~\cite{gittins_jones} proved  that playing bandit $j$ at time $t$ is optimal if and only if
\begin{align}
	G_j(x_j(t)) = \max_{1 \leq i \leq n} G_i\bigl(x_i(t)\bigr)\,.
\end{align}

The Gittins index has several equivalent definitions. The most convenient for us is the {\em retirement value} of the arm. Suppose at every step the Gambler may choose  to ``retire'' and receive a payment $p$ in the current step and in all subsequent steps.  Alternatively, he may select arm $j$ and receive the current reward at that arm, while
maintaining the option to retire at any point in the future.
Given that arm $j$ is currently at state $x_j$, the Gittins index $G(x_j)$ is the infimum of the values  $p$ for which retirement now is preferable.


\paragraph{Bernoulli bandits} In the paper, we focus on a special case known as Bernoulli bandits, where the rewards are $1$ (successes) or $0$ (failures). Arm $j$ has a known prior $\mu_j^0$ on $[0,1]$; its success probability $\Theta_j$ (unknown to the player) is drawn from $\mu_j$, and each time arm $j$ is selected, the reward is 1 with probability $\Theta_j$, independently of previous picks.   The state of arm $j$ at time $t$ is described by a pair $(s_j(t), f_j(t))$, where $s_j(t)$ and $f_j(t)$ are the number of successes and failures, respectively, obtained at arm $j$ until time $t$. The posterior distribution of the success probability $\Theta_j$ after step $t$ is
$\mu_j^t$ which has density proportional to $   \theta^{s_j(t)} (1-\theta)^{f_j(t)}$ with respect to $\mu_j^0$, i.e., for Borel sets $A \subset [0,1]$,
$$\mu_j^t(A)= \frac{\int_A \theta^{s_j(t)} (1-\theta)^{f_j(t)} \, d \mu_j^0(\theta)}
{\int_0^1 \theta^{s_j(t)} (1-\theta)^{f_j(t)} \, d \mu_j^0(\theta)} \,.
$$

If the player selects arm $j$ at time $t+1$, then the expected reward from that move (given the history) is the mean of the posterior
$$\int_0^1 \theta   d\mu_j^t(\theta) \,.$$
That expected reward is also the transition probability  from the state $ (s_j(t),f_j(t))$ to  the state
$(s_j(t)+1,f_j(t))$. The transition probability  to $(s_j(t),f_j(t)+1)$
is the complementary probability
$\int_0^1 (1-\theta) d\mu_j^t(\theta) \,.$

In the case we focus on in this paper, there are just two arms, the left arm has success probability $p$ and the right arm has a nonatomic prior $\mu$ on the (unknown) success probability. In this case,
Let
\begin{equation} \label{defm1}  m_1 := \frac{1}{m} \int_{0}^{1} x^2 \cdot \mathop{d \mu}
\end{equation}
 denote the posterior mean at the right arm after observing $1$ in round zero, and
 \begin{equation} \label{defw}
 w := \int(x-m)^2 \mathop{d \mu}= m \cdot (m_1 - m)
\end{equation}
denotes the variance of the right arm.

\section{Competitive Play} \label{sec:competitive}

In this section we study the zero-sum game, corresponding to $\lambda = -1$.
This is an extensive-form game with an initial move by Nature; see ~\cite{maschler_book,KP17}. 
The game has a value by Sion's minimax theorem~\cite{sion1958}; moreover, the value is zero by symmetry.

Recall that in the discounted setting, the {Gittins index} $g = g(\mu, \beta)$ of the right arm is defined as the infimum of the success probabilities $p$ where playing always left is optimal for a single player.

\begin{observation} \label{obs1}
	If selecting L in every round is the only optimal strategy for a single player, then in every optimal strategy for Alice in the zero sum game, she never explores, and similarly for Bob.
\end{observation}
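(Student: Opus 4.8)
The plan is to use the fact that this zero-sum game has value $0$ (by symmetry, as noted above), so that a strategy $S_A$ is optimal for Alice precisely when it guarantees her $u_A = \Ex[\Gamma_A] - \Ex[\Gamma_B] \geq 0$ against every strategy of Bob. The whole argument then rests on confronting a candidate optimal $S_A$ with one particular, deliberately un-adaptive Bob strategy, namely $S_B^{L} = $ ``play $L$ in every round.'' Against $S_B^{L}$ Bob earns the deterministic discounted reward $\Ex[\Gamma_B] = p/(1-\beta)$, and, crucially, his observed action sequence is constant, so Alice learns nothing from him: her belief about arm $R$ is updated only from her own pulls, and her play is equivalent to a (possibly randomized) single-player strategy on her own history. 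Since by hypothesis playing $L$ forever is the \emph{unique} single-player optimum, with value $p/(1-\beta)$, any single-player strategy that pulls $R$ with positive probability at a positively-reached information set earns strictly less than $p/(1-\beta)$ in expectation. Hence, if $S_A$ explores with positive probability on a history in which Bob has so far played only $L$, then against $S_B^{L}$ we get $\Ex[\Gamma_A] < p/(1-\beta) = \Ex[\Gamma_B]$, i.e.\ $u_A(S_A, S_B^{L}) < 0$, contradicting optimality of $S_A$. Thus every optimal Alice strategy plays $L$ on all such histories, and symmetrically for Bob.

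Two auxiliary points need to be checked to make this airtight. First, I would record the same no-information observation in the other direction: against Alice's all-$L$ strategy Bob also learns nothing, so his reward is at most his own single-player optimum $p/(1-\beta)$; this both re-confirms that the value is $0$ and shows that all-$L$ actually secures $u_A \geq 0$ for Alice, so that all-$L$ is itself an optimal strategy. Second, I would spell out the reduction ``constant feedback from the opponent carries no information,'' whose content is just that Bayesian updating of the posterior on $R$ depends only on the rewards of arms the player pulls herself, and Alice pulls arms according to her own history when Bob's actions are fixed.

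The one point that requires care is the precise scope of ``never explores.'' The argument above rules out exploration on every history consistent with Bob having played only $L$ up to that round, which is exactly the set of histories realized when the opponent does not explore --- in particular the equilibrium path where both players always play $L$. It does not, by itself, constrain behavior at information sets that are never reached under $S_B^{L}$ (for example after Bob has deviated to $R$), so I would phrase the conclusion as a statement about actual play rather than about arbitrary off-path information sets. This is the only place where the quantifiers must be handled delicately, and I expect it to be the main, though minor, obstacle.
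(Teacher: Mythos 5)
Your proposal is correct and follows essentially the same route as the paper's (much terser) proof: confront the candidate optimal strategy for Alice with Bob's all-$L$ strategy, note that she then learns nothing from his actions and is effectively a single player, and invoke the uniqueness of the single-player optimum to conclude that exploring with positive probability makes her expected net payoff strictly negative, contradicting the guarantee of the game value $0$. Your auxiliary points (that all-$L$ itself secures the value under the hypothesis, and that the argument constrains play only on histories where the opponent has not yet explored) are sound refinements of that same argument rather than a different approach.
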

\begin{proof}
	If Alice explores with positive probability and Bob never explores, then her expected net payoff will be negative by the hypothesis, since she is not learning anything from Bob's actions.
\end{proof}

We also give a simple lower bound on the Gittins index, the proof of which is included for completeness. Recall that for an arm with prior $\mu$, we have $g = g(\mu,\beta)$ is the Gittins index of the arm, $m$ is the mean, and $w$ is the variance of $\mu$.
\begin{lemma} \label{lem:gbound}
	Consider one arm with prior $\mu$. Then
	$g(\mu,\beta) \geq m + \beta  w/2$.
\end{lemma}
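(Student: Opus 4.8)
The plan is to lower-bound the optimal single-player value by the value of one explicit, suboptimal strategy, and to show that this strategy already beats the always-safe payoff whenever $p < m + \beta w/2$. Since $g(\mu,\beta)$ is by definition the infimum of those $p$ for which always playing $L$ is optimal, exhibiting a strictly better strategy for every $p$ below $m + \beta w/2$ forces $g \ge m + \beta w/2$. Concretely, let $X \sim \mu$ denote the unknown success probability of $R$, and consider the strategy $\sigma$ that pulls $R$ in round $0$ and then commits forever: it plays $R$ in all later rounds if the round-$0$ reward was a success, and plays $L$ in all later rounds if it was a failure.

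First I would record the two posterior means. Writing $\Ex[X^2] = w + m^2$, a one-line Bayes computation gives that the round-$0$ pull succeeds with probability $m$ and that $m_1 := \Ex[X \mid \text{round-}0\text{ success}] = \Ex[X^2]/m = m + w/m$. By the tower property, conditioned only on a round-$0$ success the expected reward in every subsequent round is again $m_1$, so ``play $R$ forever'' yields discounted continuation value $\beta m_1/(1-\beta)$, while after a failure ``play $L$ forever'' yields $\beta p/(1-\beta)$.

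Next I would assemble the payoff. The round-$0$ reward contributes $m$ in expectation, so
\[ V_\sigma = m + \frac{\beta}{1-\beta}\bigl( m\, m_1 + (1-m)\, p\bigr), \]
while the always-safe value is $V_L = p/(1-\beta)$. Substituting $m_1 = m + w/m$ and simplifying,
\[ V_\sigma - V_L = (m-p)\Bigl(1 + \frac{\beta m}{1-\beta}\Bigr) + \frac{\beta w}{1-\beta}, \]
which is strictly positive exactly when $p < m + \beta w/\bigl(1-\beta(1-m)\bigr)$. Since $1 - \beta(1-m) \le 1 \le 2$, this threshold is at least $m + \beta w/2$; hence for every $p < m + \beta w/2$ the strategy $\sigma$ strictly outperforms always-$L$, so always-$L$ is not optimal and $p < g$. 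Taking the supremum over such $p$ gives $g \ge m + \beta w/2$.

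The main obstacle is conceptual rather than computational: the weaker bound $g \ge m$ is immediate, since exploring once and then abandoning $R$ costs nothing in expectation beyond the gap $m-p$. The content of the lemma is capturing the strictly positive value of information, which forces the comparison strategy to \emph{condition} future play on the round-$0$ observation. The upward jump $m_1 - m = w/m$ of the posterior mean upon a success, occurring with probability $m$, is precisely what produces the $\beta w$ term. I expect the only care needed is the justification that the conditional per-round expected reward equals the posterior mean (the tower property invoked above), together with the final slack estimate $1-\beta(1-m)\le 2$, which trades the exact indifference threshold for the clean bound $m + \beta w/2$.
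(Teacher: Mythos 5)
Your proof is correct, and it rests on the same core idea as the paper's: lower-bound the single-player value by an explicit strategy that explores once in round zero and conditions all future play on that observation, then use $m\, m_1 = m^2 + w$ and rearrange. Two differences are worth noting. First, your comparison strategy commits to $R$ forever after a success, whereas the paper's strategy plays $R$ exactly once more after a success and then retires to $L$ from round two onwards; both are valid lower bounds, but yours extracts more value from the success branch, and your exact threshold $m + \beta w/\bigl(1-\beta(1-m)\bigr)$ is at least $m + \beta w$, which is strictly stronger than the paper's intermediate bound $m + \beta w/(1+m\beta)$ (both of course imply the stated $m + \beta w/2$). Second, your logical framing is slightly cleaner: you show that always-$L$ is strictly suboptimal for every $p$ below the threshold and conclude $g \geq$ threshold directly from the definition of $g$ as an infimum, whereas the paper evaluates its strategy at $p = g$ and invokes the fact that the optimal value there equals $g/(1-\beta)$, a fact that itself requires a small continuity argument. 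The one step in your argument that genuinely needs care --- that after a round-zero success the conditional expected reward of every later pull of $R$ equals the posterior mean $m_1$, so the continuation value is $\beta m_1/(1-\beta)$ --- is exactly the point you flag and justify via the tower property, so the proposal stands as a complete proof.
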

\begin{proof}
	Suppose Alice is playing the one-armed bandit game by herself where  the right arm has distribution $\mu$ that is not a point mass and left arm has known probability $p = g = g(\mu,\beta)$. Consider the following strategy for Alice:
	\begin{itemize}
		\item Round zero: play right.
		\item Round one: play left if $0$ was observed in round zero, and right if $1$ was observed.
		\item Round two onwards: play left.
	\end{itemize}
	Then by the definition of $g$, we have that this strategy for Alice is at most as good as retiring and receiving $g$ forever, and so
	\begin{align} \label{eq:g_bound}
		\frac{g}{1-\beta} \geq m + (1 - m) \cdot \frac{g  \beta}{1-\beta} + m\left(m_1  \beta + \frac{g  \beta^2}{1-\beta}\right)
	\end{align}
	Recall that $m \cdot m_1 = m^2 + w$. Using this in (\ref{eq:g_bound}) and rearranging, we obtain:
	\begin{align} \label{eq:g_lower_bound}
		& g \geq m + \frac{\beta w}{1 + m \beta}
		\geq m + \frac{\beta w}{2}
	\end{align}
\end{proof}



Our first theorem shows that when using optimal strategies, competing players explore less than a single player, for both discounted games and finite horizon.

\bigskip

\noindent \textbf{Theorem} \ref{thm:competing_explore_less_discounted} [Competing players explore less] (restated).  \emph{Suppose arm $L$ has a known probability $p$ and arm $R$ has i.i.d.\ rewards with unknown success probability with prior $\mu$ (which is not a point mass). Assume that Alice and Bob are playing optimally in the zero sum game with discount factor $\beta$. \\
	Then there exists a threshold $p^* = p^*(\mu, \beta,-1) < g$, where $g = g(\mu,\beta)$ is the Gittins index of the right arm, such that for all $p > p^*$, with probability $1$ the players will not explore arm R. More precisely, define
	\begin{align} \label{def:p_star}
		p^* = p^*(\mu, \beta, \lambda) = \sup \Bigl\{p : \mbox{arm R is explored in some Nash equilibrium }\Bigr\}		
	\end{align}
	Then
	$p^*(\mu, \beta,-1) \leq (m \beta + g)/(1+ \beta)<g$, where $m$ is the mean of $\mu$.}

\begin{remark}
	Note that the upper bound on $p^*(\mu, \beta,-1)$ tends to $(m+g)/2$ as $\beta \to 1$.
\end{remark}

We establish the theorem by showing that for $p$ close enough to $g$, a player using an optimal strategy is never the first to explore.
However, each player will need as part of their strategy a contingency plan for what to do if the other player deviates from the main path. In fact, for $p < g$, the strategy of always playing left is not part of any equilibrium; if Bob always plays left, then Alice can play the one player optimum (of pulling the arm with the highest Gittins index) and win.

\medskip

\begin{proof}[Proof of Theorem~\ref{thm:competing_explore_less_discounted}]
Given a strategy pair $(S_A,S_B)$ for Alice and Bob, the utility to Alice is the random variable
$$u_A(S_A,S_B):=  \Gamma_A(S_A,S_B)-\Gamma_B(S_A,S_B)  \,.$$
Due to symmetry, the value of the game
$\min_{S_B} \max_{S_A}  \Ex u_A(S_A,S_B)$ equals zero.

Moreover, once  we fix a strategy $S_B$ for Bob,  the expected utility $\Ex u_A(S_A,S_B)$ to Alice is a convex combination of expected utilities $\Ex u_A(S_A^{\rm pure},S_B)$ obtained when Alice plays a pure strategy $S_A^{\rm pure}$ in the support of $S_A$. Thus Alice has a pure best response $S_A^{\rm pure}$ to Bob's strategy $S_B$, so:
\begin{equation}
\label{avalue1}
\min_{S_B} \max_{S_A}  \Ex u_A(S_A,S_B)=   \min_{S_B} \max_{S_A^{\rm pure}} \Ex u_A (S_A^{\rm pure},S_B) \,,
\end{equation}

If $p>g$, then Observation \ref{obs1} implies that both players never explore when using optimal strategies. So we may and shall assume that $p \le g$. We will show that if 
\begin{align}  \label{assume-exp}
\bullet \; \;	& (S_A^*,S_B^*) \; \text{is a pair of optimal strategies in which at least one of the players explores} \notag \\
	& \text{with positive probability,}
\end{align}
then $ p \le (m \beta + g)/(1+ \beta)$. 

Denote by $D_k^A$ the event that neither player explores before round $k$, and Alice explores at round $k$. Let $D_k^B$ denote the corresponding event with Bob in place of Alice.  (Note these events are not necessarily disjoint.) Finally, let $D_\emptyset$ be the event that neither player ever explores.

The assumption \eqref{assume-exp} implies that at least one of $D_k^A$ and $D_k^B$ has positive probability for some $k$. Without loss of generality, we may assume it is $D_k^A$.

Consider the following pure strategy $S_B$ for Bob:
\begin{itemize}
	\item Play left until Alice selects the right arm, say in some round $k$. Then play left again in round $k+1$, and then starting with round $k+2$, copy Alice's move from the previous round. In particular, under $S_B$, Bob never explores (i.e., plays right) first.
\end{itemize} Then by optimality of $S_A^*$,
\begin{equation}
\label{avalue2}
\Ex u_A(S_A^*,S_B^{ }) \ge
\Ex u_A(S_A^*,S_B^*)= 0.
\end{equation}
When Alice plays $S_A^*$ against $S_B$, the event $D_k^A$ has  positive probability  whenever this holds under $(S_A^*,S_B^*)$, because Alice  cannot distinguish $S_B^{}$ from $S_B^*$ until that first exploration.
Therefore
\begin{align}
	0 & \le \Ex u_A(S_A^*,S_B^{}) \notag \\
	& = \mathbb{P}_{S_{A}^{*}, S_{B}^{ }}(D_\emptyset) \cdot \Ex \left[u_A(S_A^*,S_B^{})|D_\emptyset \right] + \sum_{k=0}^\infty \mathbb{P}_{S_{A}^{*}, S_{B}^{ }}(D_k^A)
	 \Ex \left[u_A(S_A^*,S_B^{})|D_k^A \right] \,. \notag
\end{align}

We have $\Ex [u_A(S_A^*,S_B)|D_\emptyset]=0$ since neither player explores under $D_{\emptyset}$, and \\$\sum_{k=0}^\infty \mathbb{P}_{S_{A}^{*}, S_{B}^{ }}(D_k^A) >0$. Then there must exist some integer $k \ge 0$ such that
\begin{align}
	\mathbb{P}_{S_{A}^{*}, S_{B}^{ }}(D_k^A)>0 \; \text{ and } \; \Ex \left[u_A(S_A^*,S_B^{})|D_k^A\right] \ge 0\,.
\end{align}
Therefore, in the support of $S_A^*$ there exists a pure strategy $S_A$ for Alice such that $S_A$ explores first at  round $k$ against $S_B$, and
\begin{equation}
\label{avalue3}
\Ex u_A(S_A,S_B) =\Ex \bigl(\Gamma_A(S_A,S_B)-\Gamma_B(S_A,S_B)\bigr ) \ge 0.
\end{equation}
	
	Then Alice's total reward $\Gamma_A = \Gamma_A(S_A, S_B)$ has expectation
	$$
	\Ex(\Gamma_A) =  \sum_{t=0}^{k-1} p \cdot \beta^t + \sum_{t = k}^{\infty} \Ex(\gamma_A(t)) \cdot \beta^t\,.
	$$
Here and in the rest of the proof, we assume that Alice and Bob employ the pair of pure strategies $(S_A,S_B)$ described above.
	Bob's total reward $\Gamma_B = \Gamma_B(S_A, S_B)$ has expectation
	\begin{align}
		\Ex(\Gamma_B) & = \sum_{t = 0}^{k+1} p \cdot \beta^t +
		\sum_{t =k+ 2}^{\infty} \Ex(\gamma_B(t)) \cdot \beta^t
		= \sum_{t = 0}^{k+1} p \cdot \beta^t + \sum_{t = k+2}^{\infty} \Ex(\gamma_A(t-1)) \cdot \beta^t \notag \\
		& =  \sum_{t = 0}^{k+1} p \cdot \beta^t + \sum_{t=k+1}^{\infty} \Ex(\gamma_A(t)) \cdot \beta^{t+1}  \,.
	\end{align}
	
	Note that $\Ex(\gamma_A(k)) = m$. Then
	the difference in expected rewards from the strategy pair $(S_A,S_B)$  is
	\begin{small}
	\begin{align}
		\Ex(\Gamma_A) - \Ex(\Gamma_B) & = \left(\sum_{t=0}^{k-1} p \cdot \beta^t + m \cdot \beta^k + \sum_{t = k+1}^{\infty} \Ex(\gamma_A(t)) \cdot \beta^t\right)
		- \left( \sum_{t = 0}^{k+1} p \cdot \beta^t + \sum_{t=k+1}^{\infty} \Ex(\gamma_A(t)) \cdot \beta^{t+1}\right) \notag \\
		& =  (m\beta - p - p \beta) \beta^k+ (1-\beta) \cdot \left( m \beta^k + \sum_{t=k+1}^{\infty} \Ex(\gamma_A(t)) \cdot \beta^t \right)  \,.
	\end{align}
\end{small}
	Since Bob is copying Alice, she is not learning anything from his actions, so her total reward from round $k$ to $\infty$ is at most the maximum reward that a single player can obtain, namely $g/(1-\beta)$.
	Thus we can bound the difference between the players by
	\begin{align} \label{eq:delta_gittins_myopic}
		\Ex(\Gamma_A) - \Ex(\Gamma_B) \leq \Bigl(m \beta - p (1 + \beta)  + g\Bigr) \beta^k\,.
	\end{align}
	
	In view of \eqref{avalue3}, the inequality (\ref{eq:delta_gittins_myopic}) implies that
	\begin{align}
		m \beta - p (1 + \beta) + g \ge 0 \,, \; \, \text{i.e.,} \; \, p  \le  \frac{m \beta + g}{1 + \beta}\,.
	\end{align}
  Thus $p^* \leq (m \beta + g)/(1 + \beta)$ as required.
\end{proof}

We also show that competition reduces exploration for a large finite horizon.


\begin{theorem}[Competing players explore less, finite horizon] \label{thm:competing_explore_less_finite}
	Suppose arm $L$ has a known probability $p$ and arm $R$ has a known distribution $\mu$ with mean $m$. Let $T$ be the horizon and $M^*$ the maximum of the support of $\mu$. We have
	\begin{enumerate}
		\item If $p > (m + M^*)/2$, then the players do not explore arm $R$.
		\item However, for every $p < M^*$, if $T$ is large enough, then a single player will explore given the same two arms.
	\end{enumerate}
\end{theorem}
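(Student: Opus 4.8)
The plan for part (1) is to mirror the proof of Theorem~\ref{thm:competing_explore_less_discounted}, specializing to no discounting and replacing the Gittins bound by $M^*$. I would again let Bob use the copying strategy $S_B$: play $L$ until Alice first selects $R$, in some round $K$; play $L$ once more in round $K+1$; and from round $K+2$ onward repeat Alice's action from the previous round. Since Bob's actions then carry no information about $\theta$ beyond what Alice already sees, Alice effectively faces a single-player problem after she explores. Fixing any pure Alice strategy and conditioning on the (random, stopping-time) first-exploration round $K\le T$, the same telescoping as in the discounted proof applies: Bob's round-$t$ reward has the same expectation as Alice's round-$(t-1)$ reward, because both pull the identical arm and draw i.i.d.\ rewards. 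For $K\le T-2$ every term cancels except two boundary terms, leaving
\[
\Ex(\Gamma_A)-\Ex(\Gamma_B)\;=\;m-2p+\Ex\big(\gamma_A(T)\big),
\]
where the $m$ comes from $\Ex(\gamma_A(K))=m$ at the first pull of $R$ (the prior mean, since $L$-rewards are independent of $\theta$). Because any posterior mean of arm $R$ is at most $M^*$, we have $\Ex(\gamma_A(T))\le\max(p,M^*)=M^*$ in the only nontrivial regime $p\le M^*$, giving $\Ex(\Gamma_A)-\Ex(\Gamma_B)\le m+M^*-2p<0$ whenever $p>(m+M^*)/2$. The boundary rounds $K\in\{T-1,T\}$, where Bob never reaches his copying phase, give $m+\Ex(\gamma_A(T))-2p$ and $m-p$ respectively, both strictly negative in this regime, so the bound holds for every realization of $K$. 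I would highlight that this telescoping is cleaner than the discounted one: only the boundary term $\gamma_A(T)$ survives, which is precisely why the crude per-round cap $M^*$ replaces $g$.

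From this per-branch negativity I would conclude that no exploration occurs in any optimal play. Since the game is zero-sum with value $0$, any optimal $S_A^\ast$ guarantees $\Ex(\Gamma_A-\Gamma_B)\ge 0$ against every Bob strategy, in particular against $S_B$. If $S_A^\ast$ were to explore first with positive probability against $S_B$, the exploring branches would each contribute a strictly negative net payoff while the non-exploring branches contribute $0$, forcing $\Ex(\Gamma_A-\Gamma_B)(S_A^\ast,S_B)<0$, a contradiction. Hence $S_A^\ast$ never plays $R$ first at any information set reached while the public history is all-$L$. By the symmetric argument, Bob's optimal $S_B^\ast$ never plays $R$ first either. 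A short induction on rounds then shows that under $(S_A^\ast,S_B^\ast)$ the all-$L$ public history persists forever, so with probability $1$ neither player explores.

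For part (2) I would exhibit a single explore-then-commit strategy that beats always playing $L$ once $T$ is large. Because $p<M^*$, the definition \eqref{eq:M_star} gives $\mu(p,1]>0$, i.e.\ $\Pr(\theta>p)>0$. Consider pulling $R$ for a fixed number $s$ of rounds, forming the empirical mean $\widehat\theta_s$, and thereafter playing $R$ if $\widehat\theta_s>p$ and $L$ otherwise. Comparing against the reward $(T+1)p$ of always playing $L$, the per-round advantage over the remaining $\approx T$ rounds is governed by $\Ex\big[(\theta-p)\,\mathbbm{1}(\widehat\theta_s>p)\big]$; conditioning on $\theta$ and letting $s\to\infty$, the law of large numbers together with dominated convergence drive this quantity to $\Ex[(\theta-p)^+]>0$. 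I would fix $s$ large enough that this expectation exceeds some constant $c>0$, and then send $T\to\infty$, so that the total advantage grows like $c\,(T-s)-O(s)$ and is eventually positive. Thus always-$L$, the unique non-exploring strategy, is suboptimal for large $T$, so the optimal single-player strategy must explore.

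The main obstacle will be the bookkeeping in part (1) once $K$ is a genuine stopping time rather than a fixed round: I must verify that the telescoping identity and the bound $m+M^*-2p$ survive conditioning on $\{K=k\}$ for every $k\le T$, including the boundary rounds where Bob never enters his copying phase, and that the passage from ``$S_A^\ast$ does not explore first against $S_B$'' to ``no exploration on the equilibrium path'' is made rigorous by pairing the copying strategy with each player's optimality symmetrically. Everything else reduces to the computation already carried out for Theorem~\ref{thm:competing_explore_less_discounted} and to the routine concentration estimate in part (2).
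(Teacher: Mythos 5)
Your proposal is correct. Part (1) is essentially the paper's own proof: the same copying strategy $S_B$ (left until Alice explores at $K$, left once more, then mirror her previous move), the same telescoping leaving $\Ex(\Gamma_A)-\Ex(\Gamma_B)=m-2p+\Ex(\gamma_A(T))$ for $K\le T-1$ and $m-p$ for $K=T$, and the same bound $\Ex(\gamma_A(T))\le\max\{p,M^*\}$ giving strict negativity when $p>(m+M^*)/2$; your two additional points of care --- conditioning on $\{K=k\}$ when $K$ is a stopping time (harmless, since pre-exploration left-arm rewards are independent of $\Theta$), and the explicit maximin argument that value $0$ plus strictly negative exploring branches forces ``never explore first,'' hence by induction no exploration at all --- are both steps the paper glosses over, and they go through exactly as you sketch. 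Part (2) is where you genuinely diverge: the paper simply invokes a black-box sublinear-regret algorithm, writing $\xi=p\cdot\mu(0,p)+\int_p^1 x\,d\mu(x)>p$ and noting that $\xi(T+1)-C\sqrt{T+1}>p(T+1)$ once $T>C^2/(\xi-p)^2$ (and attributes the result to Bradt, Johnson, and Karlin), whereas you construct an explicit explore-then-commit strategy and prove from scratch, via the law of large numbers and dominated convergence, that $\Ex\bigl[(\theta-p)\,\mathbbm{1}(\widehat\theta_s>p)\bigr]\to\Ex\bigl[(\theta-p)^+\bigr]>0$, so the advantage grows like $c(T-s)-O(s)$. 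Your route is longer but self-contained and elementary; the paper's is a one-line reduction to standard bandit regret bounds. Both are valid, and your accounting of the exploration cost ($s$ rounds at mean $m$ versus $p$) is handled correctly.
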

\begin{proof}
	For part 1 of the statement, we use the same strategy $S_B$ for Bob as in the proof of Theorem~\ref{thm:competing_explore_less_discounted}:  play left until Alice selects the right arm, say in some round $k$. Then play left again in round $k+1$, and then starting with round $k+2$ copy Alice's move from the previous round.
	
	Fix an arbitrary pure strategy $S_A$ for Alice. If $S_A$ never plays right first we are done. If Alice plays right for the first time in round $k$, then $\gamma_A(k) = m$. Her expected total reward can be written as
	$$
	\Ex(\Gamma_A) = \left(\sum_{t = 0}^{k-1} p \right) + m + \sum_{t = k+1}^{T} \Ex(\gamma_A(t))\,.
	$$
	Since from round $k+2$ Bob is copying Alice's previous move, we have $$\Ex(\gamma_B(t)) = \Ex(\gamma_A(t-1)) \mbox{ for all } t \geq k+2\,.$$
	Then Bob's total expected reward is
	$$
	\Ex(\Gamma_B)  = p \cdot (k+2) + \sum_{t = k+2}^{T} \Ex(\gamma_B(t)) = p \cdot (k+2) + \sum_{t = k+1}^{T-1} \Ex(\gamma_A(t))\,. 
	$$
	
	
	If $k = T$, the difference in rewards is $$\Ex(\Gamma_A) - \Ex(\Gamma_B) = \Ex(\gamma_A(T)) - \Ex(\gamma_B(T)) =  m - p < 0\,.$$
	
	If $k \leq T-1$, by choice of $p > (m+M^*)/2$, the difference in rewards is bounded by $$\Ex(\Gamma_A) - \Ex(\Gamma_B) = (m - p) + \Bigl(\Ex(\gamma_A(T)) - p\Bigr)
	\leq m - 2p + \max\{M^*, p\} < 0\,.$$
	Thus Alice loses in expectation.
	
	
	\medskip
	
	Part 2 of the statement holds since a single player (say Bob) optimizing his expected total reward has an algorithm with sublinear regret. Formally, the expected mean of the best arm is
	$$
	\xi = p \cdot \mu(0,p) + \int_{p}^{1} x \mathop{d\mu(x)} \,.
	$$
	If $p < M^*$, then $\xi > p$. Bob's expected reward when playing optimally can be bounded from below by using a low regret algorithm (see, e.g.,~\cite{BCB12}), which gives
	$$
	\Ex(\Gamma_B) \geq \xi \cdot (T+1) - C \cdot \sqrt{T+1}, \; \mbox{for some constant } C \geq 0
	$$
	We want that
	$$
	\xi \cdot (T+1) - C \cdot \sqrt{T+1} > p \cdot (T+1)
	$$
	This will hold if $$T> \frac{C^2}{(\xi - p)^2}\,.$$
	For such values of $T$, a regret minimizing algorithm will do better than playing left forever (in expectation).
\end{proof}

\begin{remark} \label{rmk:finite_index}
	Part 2 of Theorem~\ref{thm:competing_explore_less_finite} is due to 
	\cite{bradt_johnson_karlin} (see page 1073) and included for comparison. We can define in the finite horizon setting an index $g_T = g(\mu, T)$ as the infimum of $p$ where playing always left is optimal for a single player. Part 2 of Theorem~\ref{thm:competing_explore_less_finite} is equivalent to
	$\lim_{T \to \infty} g_T = M^*$.
	A similar proof shows that for any prior $\mu$, we have
	$\lim_{\beta \to 1} g(\mu, \beta) = M^*$.
\end{remark}


In the single player one-armed bandit, obtaining information about the risky arm can compensate for a lower mean reward compared to the predictable arm. In the zero-sum case, the value of acquiring information is less clear since it can be copied by the opponent in the next round. The next theorem shows that such information still has value: competing players do not follow a myopic policy of pulling the arm with the highest mean reward in each round.

\medskip

\noindent \textbf{Theorem} \ref{thm:non_myopic_discounted} [Competing players are not completely myopic] (restated). \emph{In the setting of Theorem \ref{thm:competing_explore_less_discounted}, let $\widetilde{p} = \widetilde{p}(\mu,\beta,\lambda)$, be the maximal threshold such that for all $p < \widetilde{p}$, with probability $1$ both players will explore arm $R$ in the initial round of any Nash equilibrium.\\
Then $\widetilde{p}(\mu,\beta,-1) \ge m+\beta w/4$. (Recall that  $m = \int_{0}^{1} x \mathop{d\mu}(x)$ and $w = \int_{0}^{1} (x-m)^2 \mathop{d \mu(x)}$.)}\\

	Note that $\widetilde{p} \leq p^*$ and we conjecture they are in fact equal.

\begin{remark} \label{remark:perfect_monitor_myopic}
	In the feedback model with perfect monitoring, where players observe each other's actions and rewards, the myopic strategy of selecting the arm with the highest current mean (in each round) is optimal. The reason is that exploration gives no future (informational) advantage to the exploring player.
\end{remark}

\begin{proof}[Proof of Theorem~\ref{thm:non_myopic_discounted}]
	Let $S_A$ and $S_B$ denote strategies for Alice and Bob, respectively, with the constraint that in round zero, Alice plays $R$ and Bob plays the predictable arm $L$. Define
	$$v_0^{(p)} = v_0 = \sup_{S_A} \, \inf_{S_B} \,  \Ex(\Gamma_{A}(S_A, S_B) - \Gamma_{B}(S_A, S_B)).$$
	We will show   that for $p <m + \beta  w/4$, we have $v_0^{(p)} > 0$.

	Take any mixed strategy of Alice in which she plays R in round zero and
	consider what happens from round one onwards. First recall that if Alice saw 1 in round zero, then the posterior mean is $$m_1 = \frac{1}{m} \cdot \int_{0}^{1} x^2 \cdot  \mathop{d\mu(x)}.$$ If Alice saw 0, then the posterior mean is $$m_0 = \frac{1}{1-m} \cdot \int_{0}^{1} x(1-x) \cdot  \mathop{d\mu(x)}.$$
	Note that $m \cdot (m_1 - m)= (1-m)\cdot (m-m_0) = w.$
	
	Fot $t=0,1,2$, Let $v_t(x,y) = v_t^p(x,y)$ denote the net value for Alice (= maxmin over mixed strategies) from the beginning of round $t$, when in round 1 she plays $x$ and Bob plays $y$.
	Here, since we will let Alice declare her strategy first, we have $y \in \{L, R\}$ (that is, $y = L$ means that Bob plays left in round 1 no matter what; similarly $y = R$ means that he plays right in round 1) and $x \in \{L, R, S\}$, where $S$ is the following Alice round 1 strategy:
	\begin{itemize}
		\item Play left upon seeing 0 in round zero.
		\item Play right upon seeing 1 in round zero.
	\end{itemize}
	
	Write $\delta = p - m \geq 0$. Then $v_0(x,y) = - \delta + v_1(x,y)$. Let $\tilde{v}_t(x,y)$ be defined in the same way as $v_t(x,y)$ when Bob knows Alice's record in round zero. Clearly, $\tilde{v}_t(x,y) \leq v_t(x,y)$, for all strategies $x,y$.
	
	By comparing the information available to the players based on the strategies played in rounds 0 and 1, observe:
	\begin{itemize}
		\item $v_2(x, L) \geq 0$ for all $x \in \{L, R, S\}$.
		\item $v_2(R, y) \geq 0$ for all $y\in \{L, R\}$.
	\end{itemize}
	Next we bound Alice's net value at round zero if Bob plays L and Alice plays strategy S:
	\begin{align} \label{eq:v0_SL_bound}
		v_0(S, L) = -\delta + \beta m(m_1 - p)+ v_2(S, L)
		\geq - \delta + \beta m(m_1 - m - \delta) 
		= \delta(-\beta  m - 1) + \beta  w\,.
	\end{align}
	If Bob plays L in round one and Alice plays R in round one no matter what bit she observed at round zero, then we have:
	\begin{align} \label{eq:v0_RL_bound}
		v_0(R, L) & = - \delta + \beta \cdot (- \delta) + v_2(R, L) \geq - 2 \delta\,.
	\end{align}
	If both players play right in round one we get
	\begin{align} \label{eq:v0_RR_bound}
		v_0(R, R) = - \delta + v_2(R, R)\,.
	\end{align}
	We establish a few facts about the net gains of the players.
	
	\begin{lemma}
		$\tilde{v}_2(L, R) = - v_2(R, R)$.
	\end{lemma}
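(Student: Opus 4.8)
The plan is to exploit the symmetry of the zero-sum game. Relabeling Alice and Bob turns the net reward $\Gamma_A-\Gamma_B$ into its negative, so any two continuation problems that are mirror images of one another have maxmin values that are negatives. I would therefore aim to realize the round-$2$ continuation defining $\tilde v_2(L,R)$ as the player-swap of the round-$2$ continuation defining $v_2(R,R)$, which would give exactly $\tilde v_2(L,R)=-v_2(R,R)$.

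First I would pin down the information each player holds entering round $2$. In both scenarios round $0$ is $(R,L)$, so Alice owns the bit $X_0$ she drew from R in round $0$. In the $(R,R)$ branch Alice also plays R in round $1$, acquiring a second sample $X_1$, while Bob plays R once and owns a single bit $Y_1$; neither sees the other's reward, so Alice holds two private samples against Bob's one. In the $\tilde v_2(L,R)$ branch Alice plays L in round $1$ (so her only R-information is $X_0$) and Bob plays R (owning $Y_1$), and the tilde hypothesis hands Bob the bit $X_0$, so Bob holds $\{X_0,Y_1\}$ against Alice's single $X_0$. Thus the roles of ``two-sample player'' and ``one-sample player'' are interchanged between the two branches, which is exactly what the symmetry requires, and this is precisely the purpose both of the round-$1$ choice $(L,R)$ and of revealing $X_0$ to Bob. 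It is convenient to write the per-round contribution as $\Ex[\gamma_A(t)-\gamma_B(t)\mid\mathcal F_t]=(\mathbbm 1[a_t=R]-\mathbbm 1[b_t=R])(\Ex[\theta\mid\mathcal F_t]-p)$, where $\theta$ is the realized success probability of R and $a_t,b_t$ are the arms chosen, so that the entire continuation value equals $\Ex\big[\sum_{t\ge 2}\beta^{t}(\mathbbm 1[a_t=R]-\mathbbm 1[b_t=R])(\theta-p)\big]$, an expression manifestly antisymmetric in the two players. Given the symmetry, I would transfer a value-guaranteeing strategy of the two-sample player in one branch to the two-sample player in the other, and likewise for the one-sample player, reading off the net payoff with the opposite sign.

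The main obstacle is that the two information structures are not literally mirror images: in $v_2(R,R)$ the one-sample player's bit $Y_1$ is independent of the two-sample player's bits and unknown to them, whereas in $\tilde v_2(L,R)$ the one-sample player's bit $X_0$ is shared with, and known to, the two-sample player, and there are only two distinct R-samples rather than three. Consequently the joint law of the two posteriors differs (the posteriors are conditionally independent given $\theta$ in the first branch but correlated through the shared $X_0$ in the second), so the identity is not a pure relabeling. The crux is to show this difference is invisible to the value. Here I would use that $(\theta,X_0,X_1)$ and $(\theta,X_0,Y_1)$ have the same law, so the two-sample player's posterior has the same joint distribution with $\theta$ in both branches (and likewise the one-sample player's), and then construct a coupling of the two sample spaces under which the induced action processes, and hence the antisymmetric payoff above, correspond. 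Proving that the \emph{inference dynamics}---what each player learns over time from the other's moves, given that in the tilde branch the informed player already knows the other's bit---respect this coupling is the step I expect to be the most delicate, and is where the argument must do real work beyond invoking symmetry.
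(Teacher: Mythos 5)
Your high-level idea---player-swap antisymmetry of the zero-sum payoff---is exactly the paper's argument, but your execution stalls on an obstacle that arises only because you read the model differently from what the paper's own proofs require, and your proposed way around that obstacle is left unproven. In the reading under which the paper's reasoning goes through, when both players pull the right arm in the same round they observe the \emph{same} realization (``each player collects the reward from the arm they pulled''); this is also the only reading under which the neighboring observation $v_2(R,y)\geq 0$ can be justified, as the paper does, ``by comparing the information available to the players,'' i.e., by set inclusion. Under that reading the $(R,R)$ branch contains only \emph{two} distinct samples: Alice knows $\{X_0,Z_1\}$ and Bob knows $\{Z_1\}$, so Bob's information is a subset of Alice's. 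The game defining $\tilde v_2(L,R)$ (Alice knows $X_0$; Bob knows $\{X_0,Y_1\}$) is then the exact player-swap of the game defining $v_2(R,R)$: in both, there are two conditionally i.i.d.\ samples, the informed player knows both, the uninformed player knows exactly one, and it is common knowledge which one. Exchangeability of the samples plus antisymmetry of $\Gamma_A-\Gamma_B$ makes the equality a pure relabeling---the paper's one-line proof, with no residual work to do.

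Under your independent-draws reading, the asymmetry you identify (three samples versus two; the uninformed player's bit private in one branch, shared in the other) is real, and your proposal does not overcome it: you explicitly defer the crux (``proving that the inference dynamics respect this coupling''), so what you have is a plan, not a proof. Moreover, the deferred step cannot be carried by the fact you invoke. That $(\theta,X_0,X_1)$ and $(\theta,X_0,Y_1)$ have the same law only matches each player's \emph{marginal} posterior process; the value of a zero-sum game of incomplete information depends on the \emph{joint} information structure---in particular on whether the uninformed player's signal is known to (hence predictable and exploitable by) the informed player, and on how much each player's actions can leak to the other. Your myopic computation matches precisely because the one-round payoff is additively separable across players, but from round $2$ onward the players learn from each other's actions, and that is exactly where one-sided and two-sided private information differ. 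So, as written, the proposal has a genuine gap; it closes only if you adopt the shared-realization model, in which case the coupling machinery is unnecessary and the statement is immediate by symmetry.
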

	\begin{proof}
		Recall that by definition of $\tilde{v}_2(L, R)$, Bob knows two results from the right arm, while Alice knows only one. For $v_2(R,R)$ this is reversed, since in this case Alice knows two results from the right arm while Bob knows only one.
	\end{proof}
	\begin{lemma}
		$v_2(S,R) \geq \tilde{v}_2(S,R) \geq \tilde{v}_2(L,R) = -v_2(R,R)$.
	\end{lemma}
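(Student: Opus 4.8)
The plan is to split the displayed chain into its three links and verify them separately, since the outer equality and the left inequality are essentially free and all the content sits in the middle. The equality $\tilde v_2(L,R) = -v_2(R,R)$ is exactly the preceding lemma, so I would just cite it. The left inequality $v_2(S,R) \geq \tilde v_2(S,R)$ is the already-recorded monotonicity $\tilde v_t(x,y) \leq v_t(x,y)$: passing from $v$ to $\tilde v$ only endows Bob with the extra knowledge of Alice's round-zero outcome, and in a zero-sum game giving the minimizing opponent more information can only decrease Alice's maxmin value.

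For the middle inequality $\tilde v_2(S,R) \geq \tilde v_2(L,R)$ I would condition on the bit $b_0$ that Alice draws from the risky arm in round zero; this bit is present in both scenarios and, under the tilde convention, is known to Bob in both. When $b_0 = 0$, strategy $S$ tells Alice to play left in round one, exactly as $L$ does, so the two scenarios induce the same information for both players at the start of round two and the conditional continuation values coincide. When $b_0 = 1$, strategy $S$ has Alice play right in round one and observe one further risky draw $b_1$, whereas $L$ has her play left and learn nothing new; meanwhile Bob's information is identical in the two scenarios, namely his round-zero knowledge of $b_0$, his own round-one draw $c_1$, and Alice's round-one action --- but never the reward $b_1$, which he does not see. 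Thus, conditioned on $b_0 = 1$, the only difference between the scenarios is that Alice holds one extra private observation. To sign this I would invoke value of information for the maximizer: in the $S$ scenario Alice may adopt any optimal continuation strategy from the $L$ scenario while discarding $b_1$, which makes her action distribution identical to the $L$ scenario and hence leaves Bob's best response --- and the resulting net value --- unchanged, so Alice guarantees at least the $L$-scenario value and possibly more by exploiting $b_1$. Averaging the two cases of $b_0$ gives $\tilde v_2(S,R) \geq \tilde v_2(L,R)$, which closes the chain.

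The main obstacle is the bookkeeping in the $b_0 = 1$ case: the argument works precisely because Bob's information is the same across the two scenarios, isolating a pure information gain for Alice. If Bob also learned $b_1$ --- for instance if ``knowing Alice's record'' were read as extending past round zero --- the change would add information to \emph{both} players, and value-of-information monotonicity would no longer determine the sign of the difference. The proof therefore rests on the exact content of the tilde convention (Bob sees Alice's round-zero outcome and all her actions, but never her round-one reward) together with the fact that $v_2$ scores only the continuation from round two onward, so the reward Alice actually collects by playing right in round one under $S$ never enters the comparison.
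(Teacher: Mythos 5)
Your proof is correct and follows essentially the same route as the paper: the equality is the preceding lemma, the left inequality is the recorded monotonicity $\tilde{v}_t(x,y) \leq v_t(x,y)$ (information given to Bob can only hurt Alice), and the middle inequality is the value-of-information argument that $S$ gives Alice at least as much information as $L$. Your write-up is simply a more careful version of the paper's two-sentence proof, making explicit the point the paper leaves implicit --- that the comparison is valid precisely because Bob's information about $\Theta$ is unchanged between the two scenarios, so only the maximizer gains information.
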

	\begin{proof}
		Note the first inequality holds since for $\tilde{v}_2(S,R)$ Alice tells Bob the bit she saw in round zero, while in the case of $v_2(S,R)$ she does not. The second inequality holds since by playing strategy $S$ Alice has at least as much information as when playing $L$ in round one, so she will do at least as well under $S$ as she would do under $L$ from round two onwards.
	\end{proof}
	By decomposing $v_0(S,R)$ into the payoff obtained in rounds zero, one, and the payoff from round two onwards, we obtain
	\begin{align} \label{eq:v2_RR_bound}
		v_2(R,R) + v_0(S,R) & \geq - \delta + (1-m)  \beta  (p - m_0) \notag \\
		& = - \delta + \beta  (1-m)  (m - m_0 + \delta) \notag \\
		& = \delta  \left[\beta  (1-m) - 1 \right] + \beta  w
	\end{align}
	Denote by $\frac{R+S}{2}$ the mixed strategy of Alice for round one in which she plays strategy $R$ with probability $1/2$ and strategy $S$ with probability $1/2$.
	Averaging (\ref{eq:v0_RR_bound}) and (\ref{eq:v2_RR_bound}), we have
	\begin{align} \label{eq:v0_mixed_bound_R}
		v_0\left(\frac{R+S}{2}, R\right) \geq \delta  \left[ \frac{\beta  (1-m)}{2} - 1 \right] + \frac{\beta w}{2} \geq \frac{\beta w}{2} - \delta
	\end{align}
	Averaging (\ref{eq:v0_RL_bound}) and (\ref{eq:v0_SL_bound}), we obtain
	\begin{align} \label{eq:v0_mixed_bound_L}
		v_0\left(\frac{R+S}{2}, L\right) \geq \delta  \left[ \frac{-\beta  m}{2} - 3/2 \right] + \frac{\beta  w}{2} \geq \frac{\beta  w}{2} - 2\delta  \,.
	\end{align}
	Thus   for $p<m+\beta w/4$, we have $v_0^{(p)} > 0$.	To complete the proof, consider any (unconstrained) strategy pair $(S_A,S_B)$ that form an equilibrium.
	Since the game has value zero, these strategies must yield the same expected payoff to the two players. If (aiming for a contradiction)  at least one of the strategies (e.g, $S_B$) assigns positive probability to playing $L$ in move 0, then by playing $R$ in   move 0, and continuing optimally, Alice could ensure that her (net) utility is  strictly positive
	(Since on the event that Bob's initial move is $R$ their expected payoff's are equal, and on the complementary event she receives a strictly greater expected payoff.)
	This improvement shows that such a strategy pair cannot be an equilibrium.
\end{proof}

In Section \ref{app:competitive} we show improved bounds for a uniform prior and a plot with the bounds.

\section{Cooperative Play} \label{sec:cooperative}

In this section we study the scenario of $\lambda = 1$, where the players are cooperating.
Cooperating players aim to maximize the sum of their rewards and can agree on their strategies before the game starts. Later the players cannot communicate beyond seeing each other's actions.

We show that when the players are cooperating they explore even more than a single player. In particular, they explore even if the known arm has a probability $p$ higher than the Gittins index of the right arm.

Recall the definitions of $m_1 = \frac{1}{m} \int_{0}^{1} x^2 \cdot \mathop{d \mu}$ as posterior mean at the right arm after observing $1$ in round zero, and of $w = m \cdot (m_1 - m)$.

\medskip

\noindent \textbf{Theorem} \ref{thm:cooperation} [Cooperating players explore more] (restated). \emph{Consider two cooperating players, Alice and Bob, playing a one armed bandit problem with discount factor $\beta \in (0,1)$. The left arm has success probability $p$ and the right arm has prior distribution $\mu$ that is not a point mass.
	Then there exists $\widehat{p} > g= g(\mu,\beta)$, so that for all $p < \widehat{p}$, at least one of the players explores the right arm with positive probability under any optimal strategy pair maximizing their total reward.}

\medskip

\begin{proof}
	In this proof we write $\Gamma_A = \Gamma_A(p)$ to emphasize the dependence on $p$.
	At $ p = g$, by definition of the Gittins index, Alice has a strategy starting at the right arm such that $\Ex(\Gamma_A(g)) = \frac{g}{1-\beta}\,.$
	Suppose Bob responds by playing L in rounds zero and one, then from round two onwards copies Alice's previous move. Then by inequality (\ref{eq:delta_gittins_myopic}), applied at $p=g$, we get
	$\Ex\Bigl(\Gamma_A(g) - \Gamma_B(g)\Bigr) \leq  (m-g) \beta\,.$
	Therefore
	$$
	\Ex\Bigl(\Gamma_A(g) + \Gamma_B(g)\Bigr) = 2 \Ex(\Gamma_A(g)) - \Ex\Bigl(\Gamma_A(g) - \Gamma_B(g)\Bigr) \geq \frac{2g}{1-\beta} + (g-m) \beta\,.
	$$
	Applying the same strategies at $p = g+ \delta$ and comparing the total rewards to both players staying left gives
	$$
	\Delta_p := \Ex\Bigl(\Gamma_A(g) + \Gamma_B(g)\Bigr) - {2p}/{(1-\beta)} = (g-m)\beta - {2\delta}/{(1-\beta)}  \,.
	$$
	By Lemma \ref{lem:gbound}, we get
	$
	\Delta_p \geq {\beta^2 w}/{2}  - {2\delta}/{(1-\beta)}\,.
	$
	This is positive when $\delta < {\beta^2 w (1-\beta)}/{4}\,.$
\end{proof}

Recall the index $g_T$ for finite horizon was defined in Remark~\ref{rmk:finite_index}.

\begin{theorem}[Cooperating players explore more, finite horizon] \label{thm:cooperation_finite_horizon}
	Consider two cooperating players, Alice and Bob, playing a one armed bandit problem. The left arm is known and has probability $p$, while the right arm has a known distribution $\mu$ that is not a point mass. Let $T$ be the horizon and $M^*$ the maximum of the support of $\mu$.
	Then there exists $\delta = \delta(T) > 0$ so that optimal players explore the right arm for all $p < g_T + \delta$.
\end{theorem}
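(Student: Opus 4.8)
The plan is to mirror the discounted argument (proof of Theorem~\ref{thm:cooperation}), replacing the Gittins index $g$ and the geometric factor $1/(1-\beta)$ by the finite-horizon index $g_T$ and the horizon length $T+1$, and replacing inequality~(\ref{eq:delta_gittins_myopic}) by the finite-horizon reward difference computed in the proof of Theorem~\ref{thm:competing_explore_less_finite}. First I would anchor at $p = g_T$: by the definition of $g_T$ (Remark~\ref{rmk:finite_index}) and continuity, at $p=g_T$ a single player is indifferent between always playing left and exploring, so there is a single-player-optimal strategy $S_A$ that plays right in round $0$ and attains expected reward $g_T(T+1)$. I let Alice use $S_A$ and let Bob use the delayed-copy strategy from the competing proofs: play $L$ in rounds $0$ and $1$, and from round $2$ on repeat Alice's action of the previous round. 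Since Bob's actions are a deterministic function of Alice's past actions, they convey no information to Alice, so her reward distribution is unchanged and $\Ex(\Gamma_A) = g_T(T+1)$.

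Next I would compute the cooperative advantage over always-left. Specializing the reward-difference computation in the proof of Theorem~\ref{thm:competing_explore_less_finite} to first-exploration round $k=0$ gives $\Ex(\Gamma_A - \Gamma_B) = m + \Ex(\gamma_A(T)) - 2g_T$, and bounding the last-round reward by $\Ex(\gamma_A(T)) \le M^*$ yields a lower bound on the total reward. Writing $V(p) = \Ex(\Gamma_A + \Gamma_B)$ for this fixed strategy pair, $V(g_T) = 2\Ex(\Gamma_A) - \Ex(\Gamma_A - \Gamma_B) \ge 2g_T(T+1) - (m + M^* - 2g_T)$, so $V(g_T) - 2g_T(T+1) \ge 2g_T - m - M^*$. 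I then use monotonicity in $p$: holding the strategy pair fixed and raising the left-arm value to $g_T+\delta$ only increases each player's reward (every left pull now pays more), so $V(g_T+\delta) \ge V(g_T)$, and comparing to the always-left value $2(g_T+\delta)(T+1)$ gives advantage $\ge 2g_T - m - M^* - 2\delta(T+1)$. The range $p < g_T$ is immediate separately: having both players run the single-player optimum independently already beats always-left, so the cooperative optimum must explore there.

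The main obstacle is positivity of the constant gain $2g_T - m - M^*$, the finite-horizon analogue of the always-positive quantity $(g-m)\beta$ from the discounted proof. The difficulty is the last-round term $\Ex(\gamma_A(T))$: because Bob's copying lags Alice by a round, the benefit of free-riding is eroded by the final round, and the only a~priori bound available is $\Ex(\gamma_A(T)) \le M^*$. Thus the clean argument requires $g_T > (m+M^*)/2$. This is exactly where I would invoke Remark~\ref{rmk:finite_index}: since $g_T \to M^* > (m+M^*)/2$ as $T \to \infty$, the inequality holds once $T$ is large enough, and then choosing any $\delta(T) < (2g_T - m - M^*)/(2(T+1))$ makes the advantage strictly positive, so always-left is not optimal and the cooperating players explore for every $p < g_T + \delta(T)$.

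Finally I would note that positivity at $p = g_T$ (hence in a neighborhood) shows the cooperative optimal value strictly exceeds $2p(T+1)$, which rules out the only non-exploring strategy (both always left) and therefore forces exploration under any optimal pair. I would flag explicitly that the delayed-copy construction only yields genuine cooperative gains once the horizon is long enough for the signal to pay off, so the positive $\delta(T)$ is obtained for $T$ sufficiently large; for very short horizons the one-round lag can overwhelm the free-riding benefit, which is consistent with the $O(1)$ nature of the gain.
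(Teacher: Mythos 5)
Your proposal is correct and follows essentially the same route as the paper's proof: the identical strategy pair (Alice explores in round $0$ and plays single-player optimally, Bob plays left in rounds $0$ and $1$ and then copies her previous move), the same comparison of $\Ex(\Gamma_A + \Gamma_B)$ against the never-explore value $2p(T+1)$, the same appeal to $g_T \to M^*$ (Remark~\ref{rmk:finite_index}) to make $2g_T - m - M^* > 0$ once $T$ is large, and the same scaling $\delta(T) = O\bigl((M^* - m)/T\bigr)$. Your anchoring at $p = g_T$ followed by monotonicity of the fixed strategy pair's value in $p$ is a minor (and, if anything, slightly more careful) variant of the paper's direct bound $\sum_{t=0}^T \Ex(\gamma_A(t)) \geq g_T (T+1)$, and your explicit caveat that the horizon must be long enough matches the paper's requirement $T \geq T_0$.
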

\begin{proof}	
	Consider the same strategy for the players as in Theorem~\ref{thm:competing_explore_less_finite}:
	Alice plays the one player optimal, while
	Bob plays left in rounds zero and one, and then from round two onwards he copies Alice's move from the previous round.
	The expected total reward of Alice is
	$$
	\Ex(\Gamma_A) = m + \sum_{t = 1}^T \Ex(\gamma_A(t))\,.
	$$
	Bob's expected reward is
	$$
	\Ex(\Gamma_B) = 2p + \sum_{t=2}^T \Ex(\gamma_A(t-1)) = 2p + \sum_{t=1}^{T-1} \Ex(\gamma_A(t))\,.
	$$
	On the other hand, the expected total reward of the players if they never explore is $2p (T+1)$. The difference between the two strategies is
	$$
	\Delta = \Ex(\Gamma_A) + \Ex(\Gamma_B) - 2p(T+1) =2p + 2 \cdot \sum_{t=0}^T \Ex(\gamma_A(t)) - m - \Ex(\gamma_A(T)) - 2p(T+1)\,.
	$$
	Given that Alice is playing her single player optimal strategy, we have that for all $p \geq g_T$ her total reward is bounded as follows: $\sum_{t=0}^T \Ex(\gamma_A(t)) \geq g_T \cdot (T+1)$. Then
	\begin{align}
		\Delta & \geq 2p + 2g_T \cdot (T+1) - m - \Ex(\gamma_A(T)) - 2p(T+1) \notag \\
		& = 2g_T \cdot (T+1) - m - \Ex(\gamma_A(T)) - 2pT\,. \notag
	\end{align}
	Taking $p = g_T + \delta$ for $\delta \geq 0$, the previous inequality is equivalent to
	\begin{align}
		& \Delta \geq 2 g_T - m - \Ex(\Gamma_A(T)) - 2 \delta T \,. \notag
	\end{align}
	For $\Delta$ to be strictly positive it suffices that
	\begin{align} \label{eq:gamma_T_bound}
		\Ex(\gamma_A(T)) < 2g_T - m - 2\delta T
	\end{align}
	Recall that $\lim_{T \to \infty} g_T = M^*$ (Remark~\ref{rmk:finite_index}). Moreover, Alice's expected reward in the last round satisfies $ \Ex(\gamma_A(T)) \leq M^*$ when $p < M^*$.
	\medskip
	
	Let $\alpha = (M^* - m)/4$. Then there exists $T_0 = T_0(\alpha)$ so that for all $T \geq T_0$, we have $|g_T - M^*| < \alpha$. For all such $T$, by choice of $\alpha$ we have that
	$$
	2 g_T - m - \Ex(\gamma_A(T)) > 2 (M^* - \alpha) - m - M^* = \frac{M^* -m }{2} > 0
	$$
	Then inequality (\ref{eq:gamma_T_bound}) holds for all $\delta < (M^* - m)/(4T)$ as required.
\end{proof}

Note that cooperating players do prefer the predictable arm for high enough values of $p$. Recall that $M^*$ is the maximum of the support of $\mu$.

\begin{proposition}  \label{prop:cooperating_careful}
	There is a threshold $p^{\circ} < M^*$ so that cooperating players do not explore for any $p > p^{\circ}$.
\end{proposition}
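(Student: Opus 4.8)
The plan is to reduce the statement to a bound on the total discounted number of right-arm pulls. Write $\theta$ for the (random) success probability of arm $R$ drawn from $\mu$, and let $R = \sum_{t\ge 0}\beta^t(a_t+b_t)$ be the total discounted number of $R$-pulls, where $a_t,b_t\in\{0,1\}$ indicate whether Alice and Bob pull $R$ in round $t$. In each round a pull of $L$ yields $p$ and a pull of $R$ yields $\theta$ in expectation, so summing over rounds gives the identity
\[
\Ex(\Gamma_A+\Gamma_B) \;=\; \frac{2p}{1-\beta} \;+\; \Ex\bigl[(\theta-p)R\bigr].
\]
Since both players staying at $L$ forever achieves exactly $2p/(1-\beta)$, it suffices to show that $\Ex[(\theta-p)R] < 0$ for every strategy with $\Pr(R>0)>0$, once $p$ is close enough to $M^*$; this makes any exploring strategy strictly suboptimal, so no optimal pair explores.

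Two facts drive the argument. First, because $\mu$ is supported on $[0,M^*]$, every posterior mean of $\theta$ is at most $M^*$. Second, and crucially, before any player has ever pulled $R$ no information about $\theta$ has been generated (the only observations are $L$-rewards and private randomness, all independent of $\theta$), so at the very first $R$-pull the posterior mean of $\theta$ equals the prior mean $m$, and $m<M^*$ since $\mu$ is not a point mass. I would then interleave the two players' pull opportunities in order of decreasing discount, producing a weight sequence $w_0,w_1,\dots = 1,1,\beta,\beta,\beta^2,\beta^2,\dots$, let $E_i$ be the ($\mathcal F_i$-measurable) event that opportunity $i$ is used to pull $R$, and let $\tau$ be the index of the first $R$-pull. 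By the tower property $\Ex[(\theta-p)\mathbbm{1}_{E_i}] = \Ex[(\hat\theta_i-p)\mathbbm{1}_{E_i}] \le (M^*-p)\Pr(E_i)$ for every $i$, while for the first pull it equals exactly $(m-p)\Pr(\tau=i)$. Decomposing on the value of $\tau$ yields
\[
\Ex[(\theta-p)R]\;\le\;\sum_{j}\Pr(\tau=j)\Bigl(w_j(m-p)+(M^*-p)\textstyle\sum_{i>j}w_i\Bigr).
\]

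The main obstacle is to make every bracket negative \emph{uniformly in $j$}: a strategy could postpone its first exploration to a round with tiny discount, shrinking the guaranteed cost term $w_j(m-p)$. This is handled by the observation that the tail of the interleaved weight sequence is comparable to the weight itself, namely $\sum_{i>j}w_i\le K\,w_j$ with $K=(1+\beta)/(1-\beta)$, a direct computation. Hence each bracket is at most $w_j\bigl[(m-p)+(M^*-p)K\bigr]$, and the sign is controlled by a single $j$-independent factor.

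Finally, solving $(m-p)+(M^*-p)K=0$ gives the threshold $p^{\circ}=\tfrac12\bigl((1-\beta)m+(1+\beta)M^*\bigr)$, and $p^{\circ}<M^*$ precisely because $m<M^*$. For $p\in(p^{\circ},M^*)$ every bracket is strictly negative, so $\Ex[(\theta-p)R]<0$ whenever $\Pr(\tau<\infty)=\Pr(R>0)>0$. Thus any strategy that explores is strictly worse than both players remaining at $L$, so under any optimal strategy pair the cooperating players do not explore, which proves the proposition.
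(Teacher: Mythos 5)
Your proof is correct, and it rests on the same two facts as the paper's own argument: the very first pull of $R$ by either player has expected value $m$ (no information about $\theta$ exists before it), every later pull has conditional mean at most $M^*$, and one compares against the never-explore baseline $2p/(1-\beta)$. Where you differ is in execution, and your version is the more rigorous one. The paper argues by a terse ``best case scenario'': if the pair explores at all, the most it can earn is $p+m$ in the first exploration round and $2M^*$ per round thereafter, which is beaten by $2p/(1-\beta)$ once $p>\bigl((1-\beta)m+2\beta M^*\bigr)/(1+\beta)$; the reduction of an arbitrary (randomized, possibly late, possibly probabilistic) first exploration to this round-zero picture is left implicit. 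Your identity $\Ex(\Gamma_A+\Gamma_B)=2p/(1-\beta)+\Ex\bigl[(\theta-p)R\bigr]$, the conditioning on the first exploration index $\tau$, and the uniform tail estimate $\sum_{i>j}w_i\le \frac{1+\beta}{1-\beta}\,w_j$ address exactly that issue explicitly, which is the genuine technical point of the proposition. The cost is a slightly weaker constant: you allow the partner's pull in the \emph{same} round as the first exploration to be worth $M^*$, whereas that decision is simultaneous with (hence blind to) the first exploration, so if it is an $R$-pull its posterior mean is also $m\le p$ and its contribution can be dropped; doing so shrinks the tail constant to $2\beta/(1-\beta)$ and recovers the paper's threshold $\bigl((1-\beta)m+2\beta M^*\bigr)/(1+\beta)$ exactly, rather than your $\bigl((1-\beta)m+(1+\beta)M^*\bigr)/2$. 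Both thresholds are below $M^*$, so either suffices for the statement. One small repair: your displayed inequality uses $\Pr(E_i\cap\{\tau=j\})\le\Pr(\tau=j)$ multiplied by the factor $(M^*-p)$, which is only valid when $p\le M^*$; since the proposition quantifies over all $p>p^{\circ}$, add the one-line case $p\ge M^*$, where you bound the non-first terms by $0$ directly and each bracket is at most $w_j(m-p)<0$.
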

\begin{proof}
	When $m < p$, the best case scenario for the cooperating players is that one of them (say Alice) plays right in round zero while Bob stays at left, and then from round one onwards both play right and the mean of the right arm is $M^*$ from round one onwards. Then any strategies $S_A$ and $S_B$ of the cooperating players will give at most this total reward, so:
	$$
	\Ex(\Gamma(S_A)) + \Ex(\Gamma(S_B)) \leq p + m + \frac{2\beta M^*}{1-\beta}
	$$
	The total reward of both players staying at the left arm is $2p/(1-\beta)$ which is more than any strategies involving exploration when
	\begin{align}
		p + m + \frac{2\beta M^*}{1-\beta} < \frac{2p}{1-\beta} \iff  p > \frac{(1-\beta)m +2 \beta M^*}{1+\beta} \notag
	\end{align}
	Setting $ p^{\circ} = \left((1-\beta)m + 2\beta M^*\right)/ (1+\beta)$ gives the required statement.
\end{proof}

\section{Neutral Play} \label{sec:neutral}

In this section we study the scenario of $\lambda = 0$, where the players are only interested in their own rewards. 

Recall the solution concepts are Nash equilibrium and perfect Bayesian equilibrium.
For neutral play the utility of each player is their total reward. The solution concepts will be Nash equilibrium and perfect Bayesian equilibrium.
Recall player $i$'s strategy $\sigma_i$ is a \emph{best response} to player $j$'s strategy $\sigma_j$ if no strategy $\sigma_i'$ achieves a higher expected utility against $\sigma_j$.
A mixed strategy profile $(\sigma_A, \sigma_B)$ is a \emph{Bayesian Nash equilibrium} if $\sigma_i$ is a best response for each player $i$. For brevity, we refer to such strategy profiles as Nash equilibria. 
A \emph{Perfect Bayesian Equilibrium} is the version of subgame perfect equilibrium for games with incomplete information. A pair of strategies $(\sigma_A, \sigma_B)$ is a perfect Bayesian equilibrium if $(i)$ starting from any information set, subsequent play is optimal, and $(ii)$ beliefs are updated consistently with Bayes' rule on every path of play that occurs with positive probability.
Such equilibria are guaranteed to exist in this setting (\cite{fudenberg}).




\begin{observation} \label{obs:neutral_simple} For all $p < g(\mu,\beta)$, in any Nash equilibrium, each player explores with strictly positive probability.
\end{observation}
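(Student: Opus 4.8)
The plan is to argue by contradiction using a single profitable deviation, leaning entirely on the characterization of the Gittins index as the single-player indifference threshold. First I would record the key consequence of the definition of $g = g(\mu,\beta)$: let $V(p)$ denote the value of the single-player one-armed bandit with left arm probability $p$ and right arm prior $\mu$. Since $g$ is the infimum of the success probabilities at which always-left is optimal for a single player, and since $V$ is nondecreasing in $p$, for every $p < g$ the always-left policy is strictly suboptimal, i.e. $V(p) > p/(1-\beta)$. This strict inequality is the only ``analytic'' input the argument needs.

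Next I would set up the contradiction. Suppose that in some Nash equilibrium one player, say Alice, explores with probability $0$. Because the only alternative to exploring (playing R at least once) is to play L in every round, this hypothesis forces Alice to play L forever almost surely, so her expected discounted reward in the equilibrium equals exactly $p/(1-\beta)$. I read ``explores with probability $0$'' as ``almost surely never plays R,'' so this conclusion holds for mixed strategies as well and the payoff collapses to the degenerate value $p/(1-\beta)$.

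Then I would exhibit the deviation: let Alice abandon her equilibrium strategy and instead play the single-player Gittins-optimal strategy, using only her own observations and ignoring Bob entirely. The crucial point is that this deviation guarantees Alice expected reward exactly $V(p)$, \emph{uniformly over Bob's strategy}. This is because, conditioned on Nature's draw $\theta$ of the right arm's success probability, Alice's own pulls of R yield i.i.d.\ Bernoulli$(\theta)$ rewards that are independent of Bob's actions and rewards; hence the reward process Alice faces under this deviation is distributionally identical to the single-player problem and attains value $V(p)$. Since $V(p) > p/(1-\beta)$, this is a strictly profitable deviation, contradicting that Alice is best-responding. By symmetry the same holds for Bob, which completes the proof.

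The main obstacle is not a hard estimate but getting this information-theoretic step exactly right: I must ensure the deviation really recovers the full single-player value independently of what Bob does. This relies precisely on the feedback model (each player observes only their own rewards, and rewards are independent draws given $\theta$), so that ignoring Bob's actions costs Alice nothing relative to the single-player benchmark. Everything else — the strict inequality $V(p) > p/(1-\beta)$ and the collapse of the non-exploring payoff to $p/(1-\beta)$ — is immediate once the deviation's value is pinned down.
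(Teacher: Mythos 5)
Your proposal is correct and follows essentially the same argument as the paper: both derive a contradiction by having the non-exploring player (whose payoff collapses to $p/(1-\beta)$) deviate to the single-player Gittins-optimal strategy that ignores the opponent, which attains the single-player value strictly above $p/(1-\beta)$ whenever $p < g(\mu,\beta)$. Your explicit justification that this deviation recovers the full single-player value uniformly over Bob's strategy (since Alice's rewards are i.i.d.\ given $\theta$ and independent of Bob's actions) is exactly the point the paper makes implicitly with ``ignoring any information from Bob.''
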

\begin{proof}
	Suppose there is a Nash equilibrium with strategies $(S_A, S_B)$, in which one player - say Alice - explores with probability zero. Then $\Ex(\Gamma_A(S_A, S_B)) = p / (1 - \beta)$.
	
	Consider now the modified Alice strategy $S_A'$ of pulling the arm with the highest Gittins index in each round (ignoring any information from Bob). Then since $p < g(\mu,\beta)$, there is $\alpha \in (p, g)$ so that
	$\Ex(\Gamma_A(S_A', S_B)) = \alpha / (1 - \beta)
	$
	This is strictly higher than $p / (1-\beta)$, so $S_A'$ is an improving deviation, in contradiction with $(S_A, S_B)$ being an equilibrium. Then Alice explores with strictly positive probability.
\end{proof}

We will say that players learn from each other under some strategies if the total expected reward of each player is strictly higher than it would be for a single player using an optimal strategy. This can happen if the players infer additional information from each other's actions, beyond the bits that they observe themselves.

\medskip

\noindent \textbf{Theorem} \ref{neutral} [Neutral players learn from each other] (restated). \emph{Consider two neutral players, Alice and Bob, playing a one armed bandit problem with discount factor $\beta \in (0,1)$. The left arm has success probability $p$ and the right arm has prior distribution $\mu$ that is not a point mass. Then in any Nash equilibrium:}
	\begin{enumerate}
		\item \emph{For all $p < g(\mu,\beta)$, with probability $1$
		at least one player explores. Moreover, the probability that no player explores by time $t$ decays exponentially in $t$.}
		\item \emph{Suppose $p \in (p^*, g)$, where $p^*$ is the threshold above which competing players do not explore~\footnote{For the formal definition of $p^*$, see  Theorem~\ref{thm:competing_explore_less_discounted}.}. If the equilibrium is furthermore perfect Bayesian, then every (neutral) player has expected reward strictly higher than a single player using an optimal strategy.}
	\end{enumerate}
\begin{proof}
	Let $\alpha \in (p, g)$ denote the expected reward per round of a single player that follows the strategy of pulling the arm with the highest Gittins index in each round.
	
	\medskip
	
	\noindent \emph{\textbf{Part 1}}: Assume $p < g = g(\mu,\beta) $. Since $p < \alpha$, there exists $\epsilon > 0$ so that $p + \epsilon < \alpha$. Let $\Lambda = \alpha/(1-\beta)$ denote the expected total  reward achievable by a single optimal player.
	Fix a pair of strategies $(S_A, S_B)$ that define a Nash equilibrium. Let $\Phi_k = \Pr(D_k^c)$, where
	$$
	D_k = \Bigl\{\mbox{No player explores in rounds} \; \; \{0, \ldots, k-1\} \;\; \mbox{under strategies} \; \; (S_A, S_B)\,.\Bigr\}
	$$
	
	Since $(S_A, S_B)$ is an equilibrium, we can bound the expected reward of a single player by
	$$
	\Lambda \leq (1 - \Phi_k) \left[ \sum_{t = 0}^{k-1} p \cdot \beta^t + \sum_{t=k}^{\infty} 1 \cdot \beta^t \right]
	$$
	Rearranging the terms gives
	$$
	\Lambda \leq \sum_{t=0}^{k-1} \beta^t \Bigl[ (1-\Phi_k)p + \Phi_k \Bigr] + \sum_{t=k}^{\infty} \beta^t\,.
	$$
	Expanding the terms in the right hand side, we get a bound on $\alpha$:
	\begin{align}
		& \alpha \leq (1 - \beta^k) \Bigl[ p + \Phi_k(1-p) \Bigr] + \beta^k  \implies \frac{\alpha - p}{1-p} - \beta^k \leq \Phi_k\,. \notag
	\end{align}
	Choose $k$ so that $\beta^k \leq (\alpha - p)/2$. Then
	$$
	\Phi_k \geq \frac{\alpha - p}{2} \implies \Pr(D_k) \leq 1 - \frac{\alpha - p}{2}\,.
	$$
	The same argument gives that for every $\ell \geq 0$ we have
	\begin{align}
		\Pr(D_{k(\ell+1)} \; | \; D_{k\ell}) \leq 1 - (\alpha - p)/2 \notag
	\end{align}
	Thus inductively we obtain $\Pr(D_{k\ell}) \leq \Bigl( 1 - (\alpha - p)/2 \Bigr)^{\ell} $. This implies the required bound for $D_t$:
	\begin{align}
		& \Pr(D_t) \leq \left(1 - \frac{\alpha - p}{2} \right)^{\lfloor t/k \rfloor} \,. \notag
	\end{align}
	
	\noindent \emph{\textbf{Part 2}}: Assume that $p \in (p^*, g)$, where $p^*$ is the threshold from equation (\ref{def:p_star}) such that competing players do not explore for any $p > p^*$.
	In any perfect Bayesian equilibrium, the total expected reward of each player is at least $\alpha/(1-\beta)$, i.e. the single player optimum.
	
	Suppose towards a contradiction that there was a perfect Bayesian equilibrium with strategies $(S_A, S_B)$ where at least one of the players - say Bob - was getting exactly the expected reward of a single optimal player: $\Ex(\Gamma_B) = \alpha / (1 - \beta)$. Note that Alice must explore with positive probability regardless of what Bob does, since the single player optimum does better than never exploring.
	
	Consider now the following strategy $S_B'$ for Bob:
	stay left in every round until Alice explores for the first time.
	If Alice's exploration occurs in some round $k$, then from round $k+1$ onwards, let Bob use the optimal strategy that he would play when competing against Alice in the corresponding zero-sum game that starts at round $k$ (where Alice is forced to play right and Bob left in round $k$, and both play optimally afterwards). By Theorem~\ref{thm:competing_explore_less_discounted}, if Alice does explore the right arm at $k$, then when $p \in (p^*, g)$ Bob wins against Alice.
	
	Let $\Gamma_A'$ and $\Gamma_B'$ be the total rewards of the players
	under strategies $(S_A, S_B')$.
	Since the equilibrium is perfect Bayesian, we have $\Ex(\Gamma_A') \geq \alpha / (1 - \beta)$. Moreover, since Alice is not learning anything from Bob's strategy $S_B'$, she can only realize this minimum expected value by using an optimal strategy for a single player, which requires exploring with positive probability.
	\medskip
	
	We classify the trajectories realizable under strategies $(S_A, S_B')$ in two types, depending on whether Alice explores the right arm or not on that trajectory:
	\begin{enumerate}
		\item Alice does not explore: then Bob always plays left too, so  they get the same reward.
		\item Alice explores: then Bob's expected reward is strictly greater than Alice's total reward on that trajectory (since he wins in the corresponding zero sum game that starts with Alice's first exploration).
	\end{enumerate}
	Taking expectation over all possible trajectories and noting that Alice explores with positive probability, we get that Bob's deviation $S_B'$ ensures he gets expected total reward strictly greater than Alice, so
	$\Ex(\Gamma_B') > \Ex(\Gamma_A') \geq \alpha / (1 - \beta) = \Ex(\Gamma_B)\,.$
	Thus deviation $S_B'$ is profitable, in contradiction with $(S_A, S_B)$ being an equilibrium. Then both players get strictly more than the single player optimum.
\end{proof}

\begin{corollary}
	In each perfect Bayesian equilibrium, with positive probability the players do not have the same trajectory.
\end{corollary}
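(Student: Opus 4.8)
The plan is to argue by contradiction, working in the regime $p \in (p^*, g)$ to which the corollary applies. Some restriction on $p$ is unavoidable: for $p > g$ the profile ``both players always play left'' is a perfect Bayesian equilibrium with identical trajectories, so the interesting statement is exactly the one inherited from part 2 of Theorem~\ref{neutral}. So suppose $(S_A, S_B)$ is a perfect Bayesian equilibrium with $p \in (p^*, g)$ in which, with probability $1$, Alice and Bob select the same arm in every round.

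The heart of the argument is to show that under this assumption Alice learns nothing useful from watching Bob, so her equilibrium payoff cannot exceed the single-player optimum $\Lambda = \alpha/(1-\beta)$. Concretely, I would construct a single-player strategy $\widehat{S}$ facing the same two arms that mimics $S_A$ but feeds in, as a surrogate for Bob's past actions, Alice's own past actions. This is well defined because in each round $t$ the round-$t$ choices depend only on the public history through round $t-1$ together with the private bits through round $t-1$; before acting in round $t$, $\widehat{S}$ knows its own actions and rewards through $t-1$ and can therefore assemble the ``doubled'' history in which both coordinates equal Alice's past play. On the equilibrium path Bob's past actions coincide with Alice's with probability $1$, so the surrogate history equals the true public history almost surely, and hence the joint law of (Alice's action sequence, the arm parameter, Alice's reward draws) under $\widehat{S}$ matches its law under $(S_A, S_B)$. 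Consequently $\Ex(\Gamma_A(S_A, S_B)) = \Ex(\Gamma(\widehat{S})) \leq \Lambda$, since no single-player strategy beats the single-player optimum.

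This contradicts part 2 of Theorem~\ref{neutral}, which guarantees that for $p \in (p^*, g)$ every player in a perfect Bayesian equilibrium earns strictly more than $\Lambda$. Hence no perfect Bayesian equilibrium can have Alice and Bob following the same trajectory with probability $1$; equivalently, in every such equilibrium the players take different actions in some round with positive probability, which is the assertion of the corollary.

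I expect the main obstacle to be making the reduction to a single player fully rigorous: one must verify that replacing Bob's genuine actions by Alice's own past actions does not change Alice's conditional action distribution on the full-measure event that the two trajectories coincide, and that off-path discrepancies are irrelevant precisely because that event has probability $1$. Some care is also needed to handle mixed strategies and the simultaneity of the round-$t$ moves (actions in round $t$ depend on history only through round $t-1$, so neither player sees the other's current action before choosing), but once the surrogate-history construction is in place these points are routine.
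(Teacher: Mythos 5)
Your proposal is correct and takes essentially the same route as the paper: the paper's proof is a two-line appeal to part 2 of Theorem~\ref{neutral}, observing that identical trajectories would cap each player's expected reward at the single-player optimum, in contradiction with the strict inequality that theorem provides for $p \in (p^*, g)$. Your surrogate-history construction simply makes rigorous the step the paper leaves implicit (that a player who learns nothing from the other cannot beat the single-player optimum), so the two arguments coincide in substance.
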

\begin{proof}
	The conclusion follows from part 2 of Theorem~\ref{neutral}. If the players always had the same trajectory, then the best total reward achievable would be the one player optimum.
\end{proof}

In contrast, there exist Nash equilibria where the players have the same trajectory with probability $1$. Such play can be supported by using (non-credible) threats in case one of the players deviates from the Nash equilibrium path, as it can be seen from the following example.

\begin{example}[Perfect Bayesian vs. Nash equilibrium] \label{eg:neutral_contrast}
	There is a Nash equilibrium in which both players follow the same trajectory and play the one player optimal strategy of pulling the arm with the highest Gittins index in each round.
	
	To see why this is the case, define the strategies of the players as follows:
	\begin{itemize}
		\item Pull the arm with the highest Gittins index in each round, breaking ties in favor of the left arm if both arms have the same index, as long as the other player did the same in the previous round.
		\item If in some round $k$ one of the players, say Alice, chooses a different action, then Bob switches to playing left forever from round $k+1$ onwards (and similarly if Bob deviates).
	\end{itemize}
	Note that with these strategies, if Alice deviates in some round $k$, then she cannot learn anything from Bob in rounds $\{k+1, k+2, \ldots\}$. Alice also did not learn anything from Bob in rounds $0$ to $k-1$ since they had the same trajectory and in round $k$ since she cannot see Bob's reward in round $k$ and his reward at $k$ does not affect his subsequent actions. Then Alice's expected reward overall cannot be better than the single player optimum, so the deviation is not strictly improving. Thus the strategies form a Nash equilibrium.
\end{example}


\begin{observation} \label{obs:neutral_play_left_at_g}
	For $p \geq g(\mu,\beta)$, there is a perfect Bayesian equilibrium where the players never explore: the pair of strategies in which each player stays left no matter what happens is a perfect Bayesian equilibrium. This is not a Nash equilibrium for $p < g(\mu, \beta)$, since a player can switch to the single player optimum.
\end{observation}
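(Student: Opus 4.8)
The plan is to treat the two halves of the statement separately, doing the (easy) failure of the Nash property for $p<g$ first and then constructing the non-exploring perfect Bayesian equilibrium for $p \ge g$. For the failure direction, fix $p < g(\mu,\beta)$ and consider the profile in which both players play $L$ in every round, so that each collects total expected reward $p/(1-\beta)$. As in the Observation just before Theorem~\ref{neutral}, a single player who ignores the opponent and pulls the arm of highest Gittins index in each round obtains per-round reward $\alpha \in (p,g)$ and hence total reward $\alpha/(1-\beta) > p/(1-\beta)$. Because $\lambda = 0$ and an opponent who plays $L$ forever conveys neither reward nor information, this single-agent deviation is available verbatim and is strictly improving, so the all-$L$ profile is not a best response for either player; it is therefore not a Nash equilibrium, and a fortiori not a perfect Bayesian equilibrium.

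For the existence direction, fix $p \ge g(\mu,\beta)$ and take the profile in which each player plays $L$. The only history of positive probability is the one in which both players play $L$ in every round; along it the risky arm is never sampled, so the belief about $R$ stays at the prior $\mu$, which is (trivially) consistent with Bayes' rule and gives condition $(ii)$. For sequential rationality, I would note that against an opponent who plays $L$ forever the opponent's action is uninformative, and since $\lambda = 0$ the opponent's rewards are irrelevant; hence each player faces exactly the single-agent one-armed bandit with safe probability $p$ and risky prior $\mu$. By the definition of the Gittins index, $p \ge g(\mu,\beta)$ makes ``always $L$'' optimal in this single-agent problem, so on the equilibrium path neither player gains by deviating, and no one explores.

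The step I expect to be the main obstacle is the off-path part of condition $(i)$, which demands optimality at \emph{every} information set, including those reached only after a player has herself deviated to $R$. At such a node the player's posterior $\mu'$ may satisfy $g(\mu',\beta) > p$ --- this happens precisely when $g \le p < M^*$ and the realized samples are favorable --- and then continuing with $L$ is no longer optimal, so the literal ``always $L$ no matter what'' can fail sequential rationality. The clean fix I would adopt is to specify each player's strategy as the full single-agent Gittins-optimal policy with respect to her own posterior rather than the literal ``always $L$''. These two prescriptions coincide on the equilibrium path and at every node whose posterior has Gittins index at most $p$, so the on-path conclusion (never explore) is unchanged, while sequential rationality now holds at \emph{all} information sets because the single-agent Gittins policy is optimal by construction. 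With this reading the verification of the previous paragraph extends from the on-path nodes to every information set, completing the proof.
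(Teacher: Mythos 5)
Your first paragraph (the $p<g$ direction) is correct and is exactly the paper's one-line justification. Your diagnosis of the off-path problem in the second half is also a genuine insight rather than pedantry: for $g \le p < M^*$, the literal profile ``both play $L$ no matter what'' is not sequentially rational at information sets a player reaches only via her own exploration, because her beliefs there are pinned down by Nature (her own samples have positive probability given her deviation, so Bayes' rule applies to them), and favorable samples can push her posterior index above $p$. Only for $p \ge M^*$, where every posterior remains supported in $[0,M^*]$ and hence has index at most $M^*$, is the literal claim unproblematic.

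The gap is in your repair. The sentence ``sequential rationality now holds at all information sets because the single-agent Gittins policy is optimal by construction'' is false in this game: the Gittins policy is optimal only when all information about $R$ comes from the player's own pulls, and under your profile a deviator's continuation is \emph{informative}, which breaks sequential rationality for the \emph{other} player. Concretely, let $\mu_1$ denote the posterior obtained from $\mu$ after a single success; one checks that $g(\mu_1,\beta) > g(\mu,\beta)$ (using the indifference relation at $p=g$ together with $g>m$ from Lemma~\ref{lem:gbound}), so the range $g(\mu,\beta) \le p < g(\mu_1,\beta)$ is nonempty. Take $p$ in this range and suppose Alice deviates and plays $R$ in rounds $0$ and $1$. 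Under your profile her second pull reveals that she saw a success (she continues exactly when her own-sample index exceeds $p$), so Bob's belief entering round $2$ --- computed, as you yourself compute beliefs, by Bayes from Alice's strategy --- is $\mu_1$. Bob's prescription is still ``$L$ forever'' (his own-sample posterior never moved), worth $p/(1-\beta)$, yet ignoring Alice henceforth and running the single-agent optimum from $\mu_1$ is worth strictly more, since $g(\mu_1,\beta)>p$. So condition $(i)$ now fails at Bob's information sets --- precisely the condition the fix was meant to restore, just for the other player. Reading ``her own posterior'' as the full-information posterior (incorporating inferences from the opponent's actions) does not rescue the argument either: once there is an external stream of information, index policies are no longer best responses in general --- a player whose index barely exceeds $p$ may prefer to stay at $L$ and free-ride on the opponent's experiments, which is the free-rider effect of Bolton and Harris~\cite{BH99}. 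A correct construction has to treat play after a first deviation as a genuine two-player equilibrium object --- for instance, pasting in a perfect Bayesian equilibrium of the continuation game, which exists by Fudenberg and Levine~\cite{fudenberg}, and then verifying that the deviator's continuation value keeps the round-zero deviation unprofitable because $p \ge g$ --- rather than as two decoupled single-agent problems.
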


We show players also learn from each other in the finite horizon game. Recall $g_T$ is the index for finite horizon $T$ and $M^*$ is the maximum of the support of $\mu$. Formally, we have:

\begin{theorem}[Neutral players learn from each other, finite horizon]\label{thm:neutral_finite}
	Consider two neutral players, Alice and Bob, playing a one armed bandit problem with horizon $T$. The left arm has success probability $p$ and the right arm has prior distribution $\mu$ that is not a point mass. Then in any subgame perfect equilibrium:
	\begin{enumerate}
		\item For all $p < M^*$, with probability that converges to  $1$ as $T$ grows,
		at least one player explores. Moreover, the probability that no player explores by time $t$ decays polynomially in $t$.
		\item Suppose $(m + M^*)/2 < p < M^* $. Then every (neutral) player has expected reward strictly higher than a single optimal player.
	\end{enumerate}
\end{theorem}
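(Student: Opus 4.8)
The plan is to mirror the structure of the discounted result (Theorem~\ref{neutral}), replacing the Gittins index $g$ by the finite-horizon index $g_T$ and using $\lim_{T\to\infty} g_T = M^*$ (Remark~\ref{rmk:finite_index}) together with the finite-horizon competition bound (Theorem~\ref{thm:competing_explore_less_finite}) in place of the discounted one. Throughout, let $\theta\sim\mu$ denote the random mean of the right arm, let $\xi = p\,\mu(0,p)+\int_p^1 x\,\mathrm{d}\mu(x) = \Ex[\max(p,\theta)]$ be the value of the informed (oracle) single player, and recall that $\xi>p$ exactly when $p<M^*$. Let $\Lambda_T$ be the optimal expected total reward of a single player over horizon $T$; by the low-regret bound used in the proof of Theorem~\ref{thm:competing_explore_less_finite} we have $\Lambda_T \ge \xi\,(T+1) - C\sqrt{T+1}$. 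In any subgame perfect equilibrium each player can deviate to the single-player optimum while ignoring the opponent, so each player's expected total reward is at least $\Lambda_T$; and since no strategy beats the oracle, it is at most $\xi\,(T+1)$.

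For Part~1, first note $g_T\to M^*>p$ forces $p<g_T$ for $T$ large, so the single player strictly prefers to explore. Fix an equilibrium, let $D_k$ be the event that no player explores in rounds $0,\ldots,k-1$, and set $q_k=\Pr(D_k)$. The crux is to bound $\sum_k q_k$ by a regret argument rather than crude per-round bounds. Writing a fixed player's regret against the oracle as $\Ex\big[\sum_{t=0}^{T}(\max(p,\theta)-\gamma(t))\big]$, the equilibrium value $\ge \Lambda_T$ bounds this regret by $C\sqrt{T+1}$. The key observation is that on $D_{t+1}$ no right-arm pull has occurred, so the posterior on $\theta$ is still $\mu$ and $D_{t+1}$ is independent of $\theta$; hence the round-$t$ contribution to regret on $D_{t+1}$ equals $(\xi-p)\,q_{t+1}$, while the contribution off $D_{t+1}$ is nonnegative. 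Summing over $t$ yields
\[
(\xi-p)\sum_{k\ge 1} q_k \;\le\; C\sqrt{T+1}.
\]
Since $q_k$ is nonincreasing, $t\,q_t\le\sum_{k\le t}q_k\le C\sqrt{T+1}/(\xi-p)$, giving polynomial decay $q_t \le \frac{C\sqrt{T+1}}{(\xi-p)\,t}$ in $t$; evaluating at $t=T+1$ gives $q_{T+1}=O(1/\sqrt{T})\to 0$, i.e. with probability tending to $1$ some player explores.

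For Part~2, I would transplant the argument of Theorem~\ref{neutral}(2), using the interval $(m+M^*)/2 < p < M^*$ as the finite-horizon analogue of $(p^*,g)$ (exactly the regime where Theorem~\ref{thm:competing_explore_less_finite}(1) shows the first explorer loses the zero-sum game), and taking $T$ large enough that $p<g_T$, so Alice must explore with positive probability. Suppose for contradiction that some player, say Bob, earns exactly $\Lambda_T$ in a subgame perfect equilibrium $(S_A,S_B)$. Let Bob deviate to $S_B'$: play left until Alice first explores, say in round $k$, and thereafter play the zero-sum-optimal continuation of the horizon-$(T-k)$ game in which Alice is forced right and Bob left in round $k$. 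On trajectories where Alice never explores, both stay left and tie in expectation; on trajectories where she explores, Theorem~\ref{thm:competing_explore_less_finite}(1) (applied at the correct remaining horizon, including the last round where $m<p$) gives $\Ex[\Gamma_B' - \Gamma_A' \mid \text{explore}]>0$. Since Alice explores with positive probability, $\Ex(\Gamma_B')>\Ex(\Gamma_A')$, and combined with $\Ex(\Gamma_A')\ge \Lambda_T$ this makes the deviation profitable, contradicting equilibrium.

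The main obstacle is the two-sided control in Part~1: a naive bound that caps the post-exploration reward at $M^*$ only yields $\sum_k q_k=O(T)$ and hence a constant (non-vanishing) bound on $q_{T+1}$, because the oracle gap $M^*-\xi$ is itself $\Theta(1)$ per round. The regret viewpoint above is precisely what upgrades this to $\sum_k q_k = O(\sqrt{T})$, simultaneously delivering the polynomial-in-$t$ decay and the $T\to\infty$ convergence. In Part~2 the delicate point is the inequality $\Ex(\Gamma_A')\ge\Lambda_T$: switching Bob to the uninformative $S_B'$ removes whatever information Alice gleaned from him, so one must argue (using $\lambda=0$, i.e. that Bob's actions never affect Alice's rewards, only her information) that Alice's equilibrium strategy still secures at least the single-player value, and that the zero-sum continuation result applies uniformly over the random exploration time $k$.
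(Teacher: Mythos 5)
Your Part 2 is essentially the paper's own proof: the same deviation $S_B'$ (stay left until Alice's first exploration, then switch to the optimal zero-sum continuation at the remaining horizon), the same trajectory dichotomy, the same invocation of Theorem~\ref{thm:competing_explore_less_finite}(1) in the range $p > (m+M^*)/2$, and the same appeal to perfection to get $\Ex(\Gamma_A') \geq \Lambda_T$ --- including the delicate point you flag, which the paper also passes over quickly.

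Part 1 is where you take a genuinely different route. The paper runs a renewal/block argument: it fixes block endpoints $T_1 < T_2 < \cdots$ growing geometrically (roughly $T_{j+1} \approx (1+\tfrac{1}{\alpha-p})T_j$), uses the equilibrium property to show that, conditional on no exploration before $T_j$, some player explores in the next block with probability at least $\alpha(\alpha-p)/2$ (exploiting that on the no-exploration event the posterior is still $\mu$, so the game renews), and multiplies these conditional bounds over the roughly $\log t$ blocks preceding $t$ to obtain $\Pr(D_t) \leq C t^{-\psi}$ with $C,\psi$ depending only on $(\mu,p)$, i.e.\ uniformly in $T$. You instead sum Bayesian regret: the equilibrium payoff is squeezed between $\xi(T+1)-C\sqrt{T+1}$ and $\xi(T+1)$, and the independence of $D_{t+1}$ from $\theta$ converts the round-$t$ regret on $D_{t+1}$ into exactly $(\xi-p)q_{t+1}$, giving $\sum_t q_t \leq C\sqrt{T+1}/(\xi-p)$. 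This is correct --- the independence holds because on the no-exploration event all actions are functions of left-arm draws and internal randomization only, and the off-$D_{t+1}$ contribution is nonnegative by conditioning on the pre-action history and Jensen --- and it buys real advantages: it needs only the Nash property (no subgame perfection in Part 1), it bounds the expected number of unexplored rounds by $O(\sqrt{T})$, and it yields the explicit exponent $1/2$ for the full-horizon event, matching and even sharpening the paper's headline claim $\Pr(D_{T+1}) \leq CT^{-\psi}$. What it loses is uniformity in $T$ at intermediate times: your bound $q_t \leq C\sqrt{T+1}/\bigl((\xi-p)\,t\bigr)$ is vacuous for $t \ll \sqrt{T}$, whereas the paper's block bound is intended to hold with constants independent of $T$ for every $t \le T$. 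Under that stronger reading of ``decays polynomially in $t$'' your argument needs a localization step --- e.g.\ rerunning the regret computation in the continuation game after each block, using subgame perfection and the renewal property --- which is precisely the structure the paper's proof is organized around. As written, your proof fully establishes the first sentence of Part 1 and a $T$-dependent version of the second.
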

\begin{proof}
	Recall $g_T$ denotes the index of the right arm.
	Let $\alpha \cdot (T+1)$ be the expected reward of an optimal single player, where $\alpha \in (p, g_T)$.
	
	\medskip
	
	\noindent \textbf{\emph{Part 1}}: Let $p < M^*$. We show  there exist $\psi = \psi(\mu, p) > 0$ and $C < \infty$
	so that the inequality
	$$
	\Pr\Bigl(\mbox{No player explores in rounds } \{0, \ldots, T\} \mbox{ under }(S_A, S_B)\Bigr) \leq C \cdot T^{-\psi}
	$$
	holds for all Nash equilibria $(S_A, S_B)$.
	
	First, since $p < M^*$, there exist $T_1$ and $\alpha > p$ so that for all $T \geq T_1$, the single player optimum over rounds $\{0, \ldots, T\}$ is in expectation at least $\alpha \cdot (T+1)$.
	
	For a sequence $\{T_j\}$ monotonically decreasing (to be specified later) and for $T>T_1$, write 
	$
	D_j = \Bigl\{ \mbox{No player explores in rounds } \{0, \ldots, T\} \mbox{ under } (S_A, S_B) \Bigr\}
	$.
	\medskip
	
	\begin{figure}[H]
		\centering
		\includegraphics[scale=0.9]{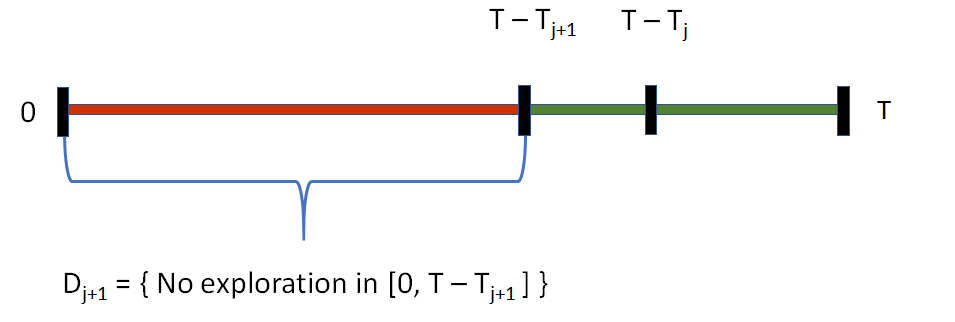}
		\caption{Depiction of the intervals induced by the sequence  ${T_j}$ and the events $D_j$.}
		\label{fig:intervals_finite}
	\end{figure}
	
	\medskip
	
	Fix $j$ and let $k = T_{j+1} - T_j$. A depiction of the intervals induced by the sequence $T_j$ is given in Figure \ref{fig:intervals_finite}. Write $\Phi_j = \Pr(D_j^c \; | \; D_{j+1})$. Since $(S_A, S_B)$ is a Nash equilibrium,
	\begin{align}
		& \alpha \cdot (T_j + 1) \leq (1-\Phi_j) \cdot \Bigl[ p_k + T_j + 1\Bigr] + \Phi_j \cdot \Bigl[ T_{j+1} + 1 \Bigr] \notag \\
		& \alpha \cdot (k + T_j + 1)   \leq T_j + 1 + \Phi_j \cdot (1-p) \cdot k + p \cdot k \,.\notag
	\end{align}
	This implies the inequality:
	$$
	\Phi_j \cdot (1-p) \cdot k \geq (\alpha - p)k - (1-\alpha)(T_j+1)\,.
	$$
	Choose $k$ such that $(\alpha - p) \cdot k \in \Bigl[ T_j+1,T_j+2 \Bigr]$. We obtain
	\begin{align} \label{eq:delta_tj}
		0 \leq T_{j+1} - \left( 1 + \frac{1}{\alpha - p}\right) T_j \leq \frac{2}{\alpha - p}
	\end{align}
	Then
	$$
	\alpha \cdot T_j \leq k \cdot \Phi_j \leq \Phi_j \cdot \left( \frac{T_j + 2}{\alpha - p} \right) \implies \Phi_j \geq \frac{\alpha (\alpha - p) T_j}{T_j + 2} \geq \frac{\alpha(\alpha-p)}{2}\,.
	$$
	Now suppose $T \in (T_{\ell}, T_{\ell+1})$. Then
	\begin{align} \label{eq:bound_dl}
		\Pr(D_{\ell-1}) \leq \Pr(D_{\ell-1} | D_{\ell}) \leq 1 - \frac{\alpha(\alpha-p)}{2} \implies \Pr(D_{\ell}) \leq \left(1 - \frac{\alpha(\alpha-p)}{2} \right)^{\ell-1}
	\end{align}
	Since $\{T_j\}$ grows exponentially, by (\ref{eq:delta_tj}), the claim follows from (\ref{eq:bound_dl}).
	\medskip
	
	\noindent \textbf{\emph{Part 2}}: The proof is similar to the discounted case, so we give a sketch highlighting the differences.
	Let $(S_A, S_B)$ be a perfect Bayesian equilibrium where Bob gets exactly the single player optimum: $\Ex(\Gamma_B) = \alpha  \cdot (T+1)$.
	Since the horizon $T$ is large, Alice explores with positive probability (see part 2 of Theorem~\ref{thm:competing_explore_less_finite}).
	
	Consider deviation $S_B'$ for Bob: stay left in every round until Alice explores for the first time (e.g. in round $k$), and from round $k+1$ onwards, use the optimal zero-sum strategy.
	Let $\Gamma_A'$ and $\Gamma_B'$ be the total rewards
	under $(S_A, S_B')$.
	Since the equilibrium is perfect Bayesian, $\Ex(\Gamma_A') \geq \alpha \cdot (T+1)$. Alice is not learning from Bob under $S_B'$, so she must explore with positive probability.
	We classify the trajectories realizable under $(S_A, S_B')$ in two types:
	\begin{enumerate}
		\item Alice does not explore: then Bob always plays left too, so  they get the same total reward.
		\item Alice explores: then by  Theorem~\ref{thm:competing_explore_less_finite}, Bob gets strictly more than Alice for $p > (m + M^*)/2$.
	\end{enumerate}
	Taking expectation over all possible trajectories and noting that Alice explores with positive probability, it follows that Bob's deviation $S_B'$ ensures he gets expected total reward strictly greater than Alice.
	Then we get
	$\Ex(\Gamma_B') > \Ex(\Gamma_A') \geq \alpha \cdot (T+1) = \Ex(\Gamma_B)\,.$
	Then Bob's deviation $S_B'$ is profitable, in contradiction with the choice of equilibrium strategies $(S_A, S_B)$.
\end{proof}

\section{Long Term Behavior} \label{app:long_term}
We show that in every Nash equilibrium, competing and neutral players converge to playing the same arm forever from some point on. For $\lambda \not \in [-1,0]$, this can fail.

\subsection{Nash Equilibria with Oscillations} 
\label{subsec:oscillations_nash}

In this section we construct Nash equilibria with oscillations when $\lambda > 0$ and $\lambda < -1$.
The next example shows that for $\lambda > 0$ there cannot be a general theorem that the players settle on the same arm in every Nash equilibrium. The construction shows that for every $\lambda > 0$ and every sufficiently large discount factor $\beta < 1$, there is a Nash equilibrium in which one of the players alternates between the two arms infinitely often.

\begin{proposition}[Nash equilibria where players do not converge, $\lambda > 0$] \label{eg:alpha_positive}
	Suppose Alice and Bob are playing a one-armed bandit problem with discount factor $\beta$, where the left arm has success probability $p$ and the right arm has prior distribution $\mu$ that is a point mass at $m > p$.
	
	Then for every $\lambda > 0$, for any discount factor $\beta > 1/(1+\lambda)$, there is a Nash equilibrium in which Bob visits both arms infinitely often.
\end{proposition}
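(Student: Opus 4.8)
The plan is to exhibit the explicit equilibrium from the preceding example, now keeping the period $k$ as a free parameter to be tuned to $\beta$ and $\lambda$ (the case $\lambda=1$, $\beta>1/2=1/(1+1)$ is recovered). Bob's strategy $S_B$ plays $L$ in rounds $0,k,2k,\dots$ and $R$ in all other rounds, independently of history; Alice's strategy $S_A$ plays $R$ as long as Bob has followed $S_B$, and switches to $L$ forever the first time Bob deviates. Since rewards are individual (no collisions) and the right arm has known mean $m>p$, I first dispose of the easy half: because $S_B$ is history-independent, $\Ex(\Gamma_B)$ does not depend on Alice, so her expected utility is $\Ex(\Gamma_A)+\lambda\cdot c$ for a constant $c$, which she maximizes by always playing $R$. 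Hence $S_A$ is a best response on path, and on this path Bob visits both arms infinitely often, so it only remains to check that $S_B$ is a best response to $S_A$.

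For Bob the threat makes his action stream control $\Ex(\Gamma_A)$, so this is the crux. Following $S_B$ gives Bob continuation reward $\frac{m}{1-\beta}-\frac{m-p}{1-\beta^{k}}$ while keeping Alice at $R$ forever, so $\Ex(\Gamma_A)=m/(1-\beta)$. The only candidate profitable deviations play $R$ in a round where $S_B$ prescribes $L$: deviating to $L$ in an $R$-round lowers Bob's own reward and still triggers punishment, so it is never useful. After any first deviation Alice is committed to $L$ forever, so the term $\lambda\Ex(\Gamma_A)$ is thereafter fixed and Bob optimally plays $R$ forever, obtaining own-reward $m/(1-\beta)$ while Alice collects $m+p\beta/(1-\beta)$. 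By the periodicity of $S_B$, the continuation comparison at every $L$-round is identical to the one at round $0$, so it suffices to rule out the single deviation that plays $R$ at round $0$ and $R$ forever.

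Writing the expected utility $\Ex(\Gamma_B)+\lambda\Ex(\Gamma_A)$ on path and under this deviation, the common term $m/(1-\beta)$ cancels and, after collecting the $\lambda$-weighted change in Alice's reward, the on-path-minus-deviation difference equals $(m-p)\bigl[\tfrac{\lambda\beta}{1-\beta}-\tfrac{1}{1-\beta^{k}}\bigr]$, so the no-deviation condition reduces to
\[
\frac{\lambda\beta}{1-\beta}\ \ge\ \frac{1}{1-\beta^{k}}.
\]
The hypothesis $\beta>1/(1+\lambda)$ is exactly the statement $\lambda\beta>1-\beta$, i.e.\ $\lambda\beta/(1-\beta)>1$, while the right-hand side decreases to $1$ as $k\to\infty$; choosing $k$ large enough makes the inequality hold, so $S_B$ is a best response and $(S_A,S_B)$ is a Nash equilibrium with Bob oscillating. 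I expect the main work to be the deviation analysis for Bob—specifically justifying that, once punished, playing $R$ forever is his optimal continuation and that periodicity collapses all $L$-round deviations to the round-$0$ case—after which the reduction to the displayed inequality and its resolution via $\beta>1/(1+\lambda)$ are routine.
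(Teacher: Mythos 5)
Your proposal is correct and follows essentially the same route as the paper's proof: the same strategy pair, the same reduction (via history-independence of $S_B$ and periodicity) to the single deviation in which Bob plays $R$ at round $0$ and forever after, and the same algebra yielding the condition $\lambda\beta/(1-\beta) \geq 1/(1-\beta^k)$, satisfied for large $k$ exactly when $\beta > 1/(1+\lambda)$. Your treatment is in fact slightly more explicit than the paper's on why $L$-deviations in $R$-rounds are dominated and why Bob's optimal post-punishment continuation is to play $R$ forever, which the paper asserts without detail.
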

\begin{proof}
	Let $k \in \mathbb{N}$. Define strategies $(S_A, S_B)$ by:
	\begin{enumerate}
		\item Bob's strategy $S_B$ is to visit the left arm in rounds $0, k, 2k, 3k, \ldots$, and the right arm in the remaining rounds, no matter what Alice does.
		\item Alice's strategy $S_A$ is to stay at the right arm if Bob follows the trajectory above. If Bob ever deviates from $S_B$ for the first time in some round $\ell$, then Alice plays left forever starting with round $\ell+1$.
	\end{enumerate}
	To show the strategies are in Nash equilibrium, consider first the total rewards $\Gamma_A$ and $\Gamma_B$ obtained by Alice and Bob, respectively, on the main path under strategy pair $(S_A, S_B)$ starting from round zero:
	$$
	\Gamma_A  = \frac{m}{1-\beta} \;\; \mbox{  and  } \;\; \Gamma_B = \frac{m}{1-\beta} - \frac{m - p}{1-\beta^k}
	$$
	It is clear that Alice has no incentive to deviate, since her deviation does not change Bob's behavior and adding any round of playing the left arm only worsens her own total rewards.
	Consider next any strategy $S_B' $ of Bob that deviates from $S_B$ in some round $\ell$. Observe first that the best case deviation for Bob is when he deviates in some round $\ell$ divisible by $k$ so that instead of playing left in the next round he plays right.
	Then w.l.o.g. $\ell=0$. The expected total rewards under $(S_A, S_B')$ starting from round zero are
	$$
	\Gamma_A' = m + \beta \cdot \frac{p}{1-\beta} \; \; \mbox{ and } \; \; \Gamma_B' \leq \frac{m}{1-\beta}
	$$
	Then Bob's utility under strategy pairs $(S_A, S_B)$ and $(S_A, S_B')$, respectively, is
	$$
	u_B = \frac{(1 + \lambda) \cdot m}{1-\beta} - \frac{m - p}{1-\beta^k} \; \; \mbox{ and } \; \; u_B'  \leq \frac{m}{1-\beta} + \lambda \left( m + \beta \cdot \frac{p}{1-\beta} \right)
	$$
	If $\beta > 1 / (1 + \lambda)$ and $k$ is large enough, then
	$1 - \beta < \beta \cdot \lambda \cdot (1 - \beta^k)$. This implies that
	$u_B'  < u_B$, so $(S_A, S_B)$ is indeed a Nash equilibrium.
\end{proof}
\begin{figure}[h!]
	\centering
	\includegraphics[scale = 0.8]{oscillations_cooperating.png}
	\caption{Trajectories of the players on the main line under strategies $(S_A, S_B)$ for $\lambda > 0$ and $k=5$. The circles represent time units $0, 1, 2, \ldots$. The left arm has success probability $p$ and the right arm has prior distribution which is a point mass at $m > p$.}
	\label{fig:oscillating_cooperating}
\end{figure}

Note this Nash equilibrium is not Pareto optimal: both players could improve their utility by a joint deviation to always playing right. This Nash equilibrium is also not subgame perfect because it involves a non-credible threat by Alice (of playing left forever after any deviation by Bob).

\begin{remark}
	In the example discussed above, if $\beta < 1/(1+\lambda)$, then the only Nash equilibrium is where both players always play right.
\end{remark}

Proposition~\ref{eg:alpha_positive} does not preclude the possibility that the players settle on the same arm in every perfect Bayesian equilibrium when $\lambda > 0$.


For every $\lambda < -1$, there are perfect Bayesian equilibria where the players do not settle on the same arm in the long term (for some choice of $\mu$ and $\beta$), and where a player oscillates infinitely often between the arms.

\begin{proposition}[Perfect equilibria where competing players do not converge, $\lambda < -1$]  \label{prop:nash_oscillating_competing}
	Suppose Alice and Bob are playing a one-armed bandit problem with discount factor $\beta$, where the left arm has success probability $p$ and the right arm has prior distribution $\mu$ which is a point mass at $m > p$. For every $\lambda < -1$, if the discount factor satisfies $\beta^2 \cdot |\lambda| > 1$, then
	there is a perfect Bayesian equilibrium in which Bob visits both arms infinitely often.
\end{proposition}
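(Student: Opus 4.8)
The plan is to exploit the fact that, because $\mu$ is a point mass at $m$, there is no hidden information about the arms: each player's per-round reward is Bernoulli($m$) under $R$ and Bernoulli($p$) under $L$, independent of the other player's action and revealing nothing about arm parameters. Hence all beliefs are degenerate and Bayes-consistency is automatic, so constructing a perfect Bayesian equilibrium reduces to constructing a subgame-perfect one. Writing the expected stage utility $\gamma_i + \lambda \gamma_j$ as a function of the two actions, I would observe that the game is an infinitely repeated Prisoner's Dilemma with $L$ in the role of \emph{cooperate} and $R$ in the role of \emph{defect}: since $\lambda < -1$ we have $1+\lambda < 0$, so mutual $R$ (payoff $(1+\lambda)m$) is Pareto-dominated by mutual $L$ (payoff $(1+\lambda)p$), while $R$ is strictly dominant in the stage game (own reward $m>p$, and the opponent's reward is unaffected by one's own move). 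A short check of the ordering temptation $>$ reward $>$ punishment $>$ sucker, namely $m+\lambda p > (1+\lambda)p > (1+\lambda)m > p + \lambda m$, confirms the Prisoner's Dilemma structure.

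I would then build the equilibrium by a grim-trigger construction. On the equilibrium path both players alternate, playing $L$ in even rounds and $R$ in odd rounds; this already makes Bob (and Alice) visit both arms infinitely often. As soon as either player departs from this pattern, both revert to playing $R$ forever. The reversion phase is credible because ``both play $R$ forever'' is precisely the repeated static Nash equilibrium, hence subgame perfect; and crucially, because $1+\lambda<0$, this mutual-defection continuation is strictly \emph{worse} for the deviator than staying on the path, so it is a genuine deterrent. This is exactly what distinguishes the present construction from the one in Proposition~\ref{eg:alpha_positive}, where Alice's threat was non-credible: here the threat is a reversion to an equilibrium.

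By the one-shot deviation principle it suffices to rule out single-round deviations at every information set. In the punishment phase a deviation only lowers one's own reward, so it is unprofitable. On the path the only binding case is a player deviating at a cooperate round (where they should play $L$): deviating to $R$ gains the immediate amount $m-p$ but triggers reversion. The non-deviating player still plays $L$ that round, so the stage rewards coincide and the entire difference is deferred to the next round; writing $V^{\text{coop}}=(1+\lambda)G$ with $G=(m+\beta p)/(1-\beta^2)$ the value of the alternating stream common to both players, and $V^{\text{punish}}=(1+\lambda)m/(1-\beta)$, the no-deviation inequality is $p+\beta V^{\text{coop}} \ge m + \beta V^{\text{punish}}$, i.e. $\beta(1+\lambda)\bigl(G - m/(1-\beta)\bigr)\ge m-p$. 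Using $G - m/(1-\beta) = -\beta(m-p)/(1-\beta^2)$ and $1+\lambda = -(|\lambda|-1)$, this collapses exactly to $|\lambda|\beta^2 \ge 1$, which the hypothesis $\beta^2|\lambda|>1$ satisfies strictly. A deviation at a defect round lowers one's own reward and also forfeits the favorable continuation, so it is unprofitable for free, and by symmetry Alice's constraints are identical. Thus the profile is a perfect Bayesian equilibrium in which Bob oscillates forever.

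The main obstacle I anticipate is not the incentive algebra, which is short, but the sign bookkeeping of the continuation values. In particular one must track that the deviator's own reward actually \emph{rises} in the punishment phase (since alternating includes $L$-rounds with reward $p<m$, while punishment is all $R$), so that the deterrent comes \emph{solely} from the $\lambda$-weighted opponent term, whose coefficient has magnitude exceeding $1$. This is precisely the structure that produces the threshold $\beta^2|\lambda|>1$ rather than a condition involving $\beta$ alone, and verifying it requires care that both players follow the same pattern so that the harm inflicted on the opponent in punishment outweighs one's own forgone sacrifice.
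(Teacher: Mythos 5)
Your proof is correct, and it reaches the stated threshold $\beta^2|\lambda| \ge 1$ exactly. The mechanism is the same as the paper's (trigger strategies whose punishment phase is reversion to mutual right-forever, which is a subgame-perfect continuation because each player's action never affects the opponent's rewards, so the deterrent comes entirely from the $\lambda$-weighted opponent term), and both arguments use the observation that the point-mass prior makes perfect Bayesian equilibrium coincide with subgame perfection. But your construction and verification differ genuinely from the paper's. The paper uses an asymmetric profile: on path Alice plays left in every round while Bob plays right only in rounds $0, k, 2k, \ldots$, and the equilibrium check requires taking the period $k$ large; moreover the two players face different binding constraints (Bob's deviation is deterred once $|\lambda|\beta > 1$, Alice's once $|\lambda|\beta^2 > 1$, each for $k$ sufficiently large), and deviations are handled by arguing that any profitable deviation can be replaced by one that plays a fixed arm forever. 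You instead use a symmetric period-two alternation with a grim trigger, verify it via the one-shot deviation principle, and by symmetry get a single incentive constraint that collapses algebraically to $\beta^2|\lambda| \ge 1$ with no auxiliary parameter; your framing of the stage game as a Prisoner's Dilemma (with $L$ as cooperate, valid precisely because $1+\lambda<0$ flips the ranking of mutual outcomes) makes the structure transparent. What each buys: the paper's construction exhibits a richer family of equilibria (oscillations of arbitrary large period, with only one player oscillating, which is closer to the phrasing of the statement), while yours is shorter, symmetric, avoids the large-$k$ limit, and pins down the threshold in the hypothesis as exactly the point where the deviator's own gain $(m-p)/(1-\beta^2)$ is balanced by the inflicted opponent gain $\beta^2(m-p)/(1-\beta^2)$ scaled by $|\lambda|$.
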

Note in this case there is no uncertainty since both arms are known, so perfect Bayesian equilibria coincide with subgame perfect equilibria.\\

\begin{proof}[Proof of Proposition \ref{prop:nash_oscillating_competing}]
	Suppose Alice fixes a trajectory $\tau$ for Bob that requires him to play right in rounds $0, k, 2k, 3k, \ldots$, and to play left in all other rounds. Let $S_A$ be the Alice strategy of playing left as long as Bob follows trajectory $\tau$, while if Bob ever deviates from $\tau$ in some round $s$, then $S_A$ switches to playing right forever from round $s+1$ onwards. Let $S_B$ be the Bob strategy of following  trajectory $\tau$, unless Alice ever deviates from playing left in some round $s$; in that case, $S_B$ switches to playing right forever from round $s+1$ onwards. Whenever the players find themselves off the main line given by this prescription, they just play right forever.
	\begin{figure}[h!]
		\centering
		\includegraphics[scale = 0.9]{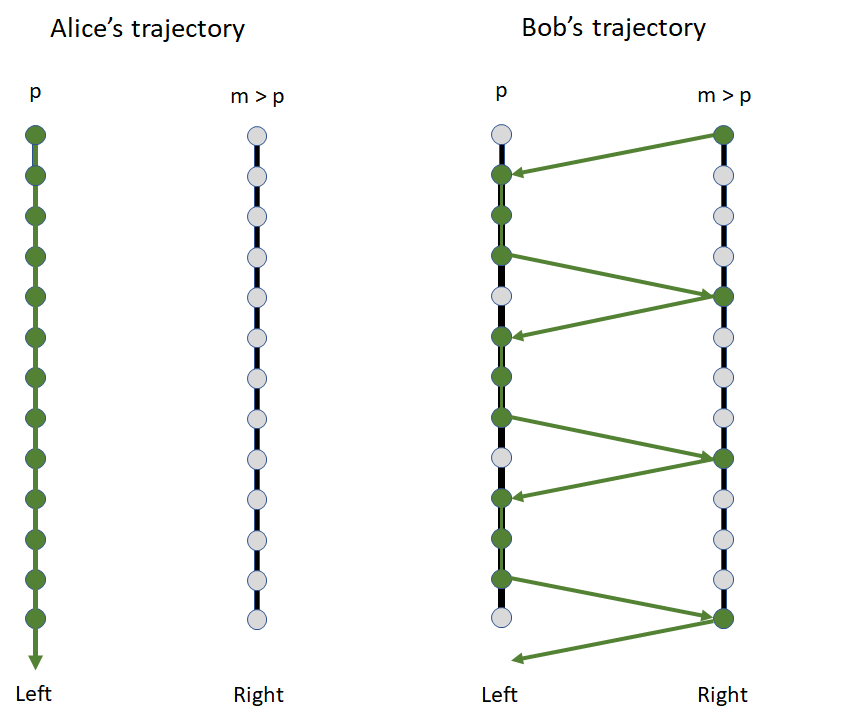}
		\caption{Trajectories of the players on the main line under strategies $(S_A, S_B)$ for $\lambda < -1$ and $k=4$. The circles represent time units $0, 1, 2, \ldots$.}
		\label{fig:oscillating_competing}
	\end{figure}
	
	We claim that $(S_A, S_B)$ represent a subgame perfect equilibrium. 
	The expected total rewards of the players under $(S_A, S_B)$ on the main line of play are:
	\begin{align} \label{eq:rewards_main_line_oscill_competing}
		\Ex(\Gamma_A(S_A,S_B)) = \frac{p}{1-\beta} \; \mbox{ and } \; \Ex(\Gamma_B(S_A, S_B)) = \frac{m-p}{1-\beta^k} + \frac{p}{1-\beta}
	\end{align}
	Then the expected utilities are:
	\begin{align} \label{eq:utils_main_line_oscill_competing}
		\Ex(u_A(S_A,S_B)) &= (1 + \lambda) \cdot \frac{p}{1-\beta} + \lambda \cdot \frac{m-p}{1-\beta^k} \notag \\
		\Ex(u_B(S_A, S_B)) &= (1 + \lambda) \cdot \frac{p}{1-\beta} + \frac{m-p}{1-\beta^k}
	\end{align}

	To check the equilibrium property, it will suffice to compare the utility each player $i$ gets when following $S_i$ with the utility obtained when deviating to playing left forever or to playing right forever. Note the response of the other player to any deviation is to switch to playing right forever, thus if there is an improving deviation $S_i'$ for player $i$, then there is an improving deviation $S_i''$ in which $i$ plays a fixed arm forever. Moreover, since the right arm is always better, that deviation will be to play right forever.

	If Bob switches to playing right from some round on, then the highest gain can be obtained when the switch takes place from round $1$ onwards. In this case, Alice will play right forever from round $2$. Thus
	\begin{align} \label{eq:rewards_bobright_oscill_competing}
		\Ex(\Gamma_A(S_A, Right)) = p + \beta p + \beta^2 \cdot \frac{m}{1-\beta} \; \mbox{ and } \; \Ex(\Gamma_B(S_A, Right)) = \frac{m}{1-\beta}
	\end{align}
	Bob's expected utility is
	\begin{align} \label{eq:utils_bobright_oscill_competing}
		\Ex(u_B(S_A, Right)) = (1+ \lambda) \cdot \frac{m}{1-\beta} + \lambda \cdot [(p - m)(1+\beta)]
	\end{align}
	From equations (\ref{eq:utils_bobright_oscill_competing}) and (\ref{eq:utils_main_line_oscill_competing}), we get that $\Ex(u_B(S_A, Right)) < \Ex(u_B(S_A, S_B))$ whenever
	\begin{align}
		1 + \beta + \ldots + \beta^{k-1} > \frac{1 + \lambda(1 + \beta)}{1 + \lambda} = \frac{|\lambda| \cdot (1+ \beta)-1}{|\lambda|-1}
	\end{align}
	Such a $k$ exists whenever $|\lambda| \beta > 1$.
	
	For Alice the best time to deviate is just before $S_B$ tells Bob to play right (in round $k-1$). 
	\begin{align} \label{eq:utils_alice_rightnonzero_oscill_competing_deviation}
		\Ex(u_A'(k-1, \infty)) = \frac{m}{1-\beta} + \lambda \cdot \left[ p + \beta \cdot \frac{m}{1-\beta} \right]
	\end{align}
	On the other hand, under strategies $(S_A, S_B)$, Alice's utility from round $i$ on is
	\begin{align}  \label{eq:utils_alice_rightnonzero_oscill_competing}
		\Ex(u_A(k-1, \infty)) = \frac{p}{1-\beta} + \lambda  p + \lambda \beta \cdot \left(\frac{m-p}{1-\beta^k} + \frac{p}{1-\beta} \right)
	\end{align}
	Comparing Alice's utility before and after the deviation, we obtain
	\begin{align}
		\Ex(u_A'(k-1, \infty)) &< \Ex(u_A(k-1, \infty)) \notag \\
		\frac{m-p}{1-\beta} + \frac{\lambda \beta (m-p)}{1-\beta} & < \frac{\lambda \beta(m-p)}{1-\beta^k}  \iff \notag \\
		1 + \lambda \beta & < \lambda \beta \cdot \frac{1 - \beta}{1-\beta^k} = \frac{\lambda \beta}{1 + \beta + \ldots + \beta^{k-1}} \iff \notag \\
		1 + \beta + \ldots + \beta^{k-1} &> \frac{|\lambda| \beta}{|\lambda| \beta - 1}
	\end{align}
	This holds whenever $|\lambda| \beta^2 > 1$ and $k$ is large enough.
	%
\end{proof}

\subsection{A Concentration Lemma} \label{app:concentration_lemma}

\begin{lemma} \label{lem:epsilon_bound}
	Suppose Alice is playing a one armed bandit problem, where the left arm has success probability $p$ and the right arm has prior distribution $\mu$ that is not a point mass. Let $R_{\infty}$ denote Alice's total number of pulls of the right arm, $R_n$ the number of pulls by time $n$, and $w_n$ the number of successes in the first $\min\{n, R_{\infty}\}$ draws by Alice.
	
	Then for each $\epsilon > 0$ and $k \in \mathbb{N}$, we have
	\begin{align}  \label{eq:expectation_psi_k}
		\Pr\Bigl(\bigl(|w_k - k \Theta| > k \epsilon  \bigr) \cap \bigl(R_{\infty} \geq k \bigr) \Bigr) \leq 2 e^{-2k\epsilon^2}
	\end{align}
\end{lemma}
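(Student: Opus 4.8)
The plan is to reduce the statement to a single application of Hoeffding's inequality by coupling the adaptively-pulled right-arm rewards with a fixed i.i.d.\ sequence. Recall that $\Theta$ denotes the (random) success probability of the right arm, drawn from the prior $\mu$ before play begins; conditioned on $\Theta = \theta$, the reward of each pull of the right arm is an independent Bernoulli($\theta$) variable. The key device is that Nature may draw all right-arm rewards in advance: fix an infinite sequence $X_1, X_2, \dots$ that, conditioned on $\Theta = \theta$, is i.i.d.\ Bernoulli($\theta$), and reveal $X_i$ to Alice on her $i$-th pull of the right arm. Because any strategy decides \emph{when} to pull the right arm based only on past observations (and the other player's past actions), the $i$-th right-arm reward Alice actually collects equals $X_i$, independently of how adaptively she chooses to explore.

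With this coupling, write $S_k = X_1 + \cdots + X_k$. On the event $\{R_\infty \geq k\}$, Alice has pulled the right arm at least $k$ times, so $\min\{k, R_\infty\} = k$ and the number of successes in her first $k$ right-arm draws is exactly $w_k = S_k$. Consequently the event whose probability we wish to bound satisfies
\begin{align}
\bigl(|w_k - k\Theta| > k\epsilon\bigr) \cap \bigl(R_\infty \geq k\bigr) \subseteq \bigl(|S_k - k\Theta| > k\epsilon\bigr)\,. \notag
\end{align}
This containment is the crucial step: it lets me discard the event $\{R_\infty \geq k\}$ entirely and replace the adaptively-stopped quantity $w_k$ by the prefix sum $S_k$ of a genuinely i.i.d.\ sequence.

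It then suffices to bound $\Pr(|S_k - k\Theta| > k\epsilon)$. I would condition on $\Theta = \theta$: conditionally, $S_k$ is a sum of $k$ independent $[0,1]$-valued variables with mean $k\theta$, so Hoeffding's inequality gives $\Pr\bigl(|S_k - k\theta| > k\epsilon \mid \Theta = \theta\bigr) \leq 2 e^{-2k\epsilon^2}$. Since this bound is uniform in $\theta$, integrating against $\mu$ yields the claimed inequality.

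I expect the main obstacle to be the formalization of the coupling in the first paragraph, namely arguing rigorously that the sequence of right-arm rewards, indexed in the order Alice pulls them, is i.i.d.\ Bernoulli($\Theta$) regardless of the adaptive (and possibly opponent-dependent) rule governing \emph{when} those pulls occur. Once the ``Nature pre-draws the rewards'' viewpoint is set up on the probability space, the containment and the Hoeffding estimate are routine; the subtle point is purely that optional, strategy-driven stopping does not disturb the law of the observed prefix.
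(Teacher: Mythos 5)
Your proposal is correct and follows essentially the same route as the paper: the paper's proof also extends the adaptively-collected rewards to a full i.i.d.\ sequence of $k$ Bernoulli($\Theta$) draws (via ``padding'' extra pulls when $R_n < k$, which is exactly your pre-drawn coupling), uses the same event containment to discard the exploration-count event, and applies Hoeffding conditionally on $\Theta$ before integrating over $\mu$. The only cosmetic difference is that the paper first reduces $\{R_\infty \geq k\}$ to the finite-time events $\{R_n \geq k\}$, whereas your pre-drawing of the whole infinite reward sequence handles $R_\infty$ directly.
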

\begin{proof}
	Recall we are working in the probability space where $\Theta$ is picked according to $\mu$ and every toss of the risky arm is a Bernoulli variable with parameter $\Theta$.
	
	The goal is to show the next inequality holds for every $k$:
	\begin{align} \label{eq:goal_r_infinity}
		\Pr\Bigl( \bigl( |w_k - k \Theta| > k \epsilon \bigr)  \cap \bigl( R_{\infty} \geq k \bigr) \mid \Theta \Bigr) \leq 2 e^{-2k\epsilon^2}
	\end{align}
	
	Fix a value of $k$.
	It will be enough to show that the following inequality holds for every $n$:
	\begin{align} \label{eq:goal_r_n}
		\Pr\Bigl( \bigl( |w_k - k \Theta| > k \epsilon \bigr)  \cap \bigl( R_{n} \geq k \bigr) \mid \Theta \Bigr) \leq 2 e^{-2k\epsilon^2}
	\end{align}
	We define an auxiliary variable $\widetilde{w}_k$ to represent the number of successes in the first $k$ pulls, but where the pulls are padded in case their number is too low. That is, if $R_n \geq k$, then $\widetilde{w}_k = w_k$. Otherwise, if $R_n < k$, then add another $k - R_n$ pulls after time $n$ and let $\widetilde{w}_k$ denote the number of successes in the first $k$ pulls defined this way.
	Note the containment
	$$
	\Bigl( \bigl( |w_k - k \Theta| > k \epsilon \bigr)  \cap \bigl( R_{n} \geq k \bigr) \Bigr)  \subseteq \Bigl( |\widetilde{w}_k - k \cdot \Theta| > k \epsilon \Bigr).
	$$
	Using Hoeffding's inequality applied to the $k$ pulls of the risky arm, we have
	\begin{align} \label{eq:tilde_w}
		\Pr\Bigl( |\widetilde{w}_k - k \cdot \Theta| > k \epsilon \mid \Theta\Bigr) \leq 2 e^{-2k \epsilon^2}
	\end{align}
	This implies inequality (\ref{eq:goal_r_n}), and so  (\ref{eq:goal_r_infinity}) also holds. Taking expectation in (\ref{eq:goal_r_infinity}) implies the required inequality (\ref{eq:expectation_psi_k}).
\end{proof}

\subsection{Neutral Players in the Long Term} \label{app:neutral_players}

\begin{theorem}[Neutral players eventually settle on the same arm] \label{thm:neutral_settle_same_arm}
	Consider two neutral players, Alice and Bob, playing a one armed bandit problem with discount factor $\beta \in (0,1)$. The left arm has success probability $p$ and the right arm has prior distribution $\mu$ such that $\mu(p)=0$. Then in any Nash equilibrium, with probability $1$ the players eventually settle on the same arm.
\end{theorem}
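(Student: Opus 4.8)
The plan is to condition on the realized parameter $\Theta=\theta$ of the right arm; since $\mu(p)=0$ we have $\theta\neq p$ almost surely, so it suffices to work on $\{\theta\neq p\}$. I would first rule out oscillation. For each player $i$ let $R^i_\infty$ be the total number of right-pulls. On $\{R^i_\infty=\infty\}$, Lemma~\ref{lem:epsilon_bound} together with Borel--Cantelli shows that $i$'s empirical success frequency on the right arm converges to $\theta$ almost surely, so $i$'s posterior mean of $\Theta$ converges to $\theta$. Combined with a single comparison of the continuation values $Q(\cdot,R)$ and $Q(\cdot,L)$ at a private history --- where "play right forever'' lower-bounds $Q(\cdot,R)$ and the full-information per-round reward $\max(p,\Theta)$ upper-bounds the optimal continuation after an $L$-move --- this forces $i$ to play the better arm from some round on. Hence $\{R^i_\infty=\infty\}\subseteq\{\theta>p\}$ and on it player $i$ settles on $R$, while on $\{R^i_\infty<\infty\}$ player $i$ trivially settles on $L$. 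In particular each player settles on a single arm, and settling on $R$ forces $\theta>p$.

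The only configuration left to exclude is that the two players settle on different arms; by symmetry assume Alice settles on $L$ and Bob on $R$, and suppose this event $E$ has positive probability. By the previous paragraph $E\subseteq\{\theta>p\}$. The heart of the argument is that Alice must \emph{infer} $\theta>p$ from Bob's behavior. I would work with Alice's filtration $\mathcal F^A_t$ (her own rewards together with Bob's observed actions) and the bounded martingales $X_t=\Pr(\Theta>p\mid\mathcal F^A_t)$ and $\Ex[\Theta\mid\mathcal F^A_t]$. The event $\{\text{Bob settles on }R\}$ is $\mathcal F^A_\infty$-measurable, since Alice observes all of Bob's actions, and it is contained in $\{\theta>p\}$; hence on $E$ we get $X_t\to 1$ and $\Ex[\Theta\mid\mathcal F^A_t]\to\Ex[\Theta\mid\mathcal F^A_\infty]=:\bar\theta_\infty$, where $\bar\theta_\infty>p$ because the limiting posterior of $\Theta$ is supported on $(p,1]$.

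To finish I would use that a best response is sequentially rational at every private history reached with positive probability: if at such a history $h$ the action prescribed by $S_A$ were suboptimal, replacing $S_A$ on the subtree below $h$ by an optimal continuation against $S_B$ would strictly raise Alice's expected payoff, contradicting equilibrium. Applying the same two-sided bounds as before yields $Q(h,R)-Q(h,L)\ge \Ex[\Theta\mid h]-p-\tfrac{\beta p}{1-\beta}\bigl(1-X(h)\bigr)$, whose right-hand side converges to $\bar\theta_\infty-p>0$ along $E$. Thus for all large $t$ on $E$ the optimal action is $R$ while $S_A$ plays $L$ on a positive-probability collection of histories --- a contradiction. Hence $\Pr(E)=0$, and symmetrically for the reversed configuration, so the players settle on a common arm.

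I expect the main obstacle to be exactly this last inference: formalizing that persistent exploration by one player is eventually conclusive evidence to the other, and in particular handling the case where $\theta$ exceeds $p$ only slightly. The resolution is that on $E$ the relevant quantity is not the (never-learned) true $\theta$ but the limiting posterior mean $\bar\theta_\infty$, which is bounded away from $p$, so the confidence Alice gains from Bob's actions eventually dominates the shrinking residual uncertainty. A secondary technical point is justifying on-path sequential rationality for a Nash (rather than perfect Bayesian) equilibrium, which is legitimate precisely because the histories in question occur with positive probability.
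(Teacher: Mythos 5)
Your proposal is correct, and while it shares the paper's overall skeleton --- (i) a player who explores infinitely often identifies the better arm and settles on it, so infinite exploration forces $\Theta > p$; (ii) the other player, observing persistent exploration, must infer $\Theta > p$ and join --- the technical execution is genuinely different. The paper runs a quantitative argument: it fixes $\epsilon, \delta, k$, controls empirical means $\widehat{\Theta}_k$ via the Hoeffding bound of Lemma~\ref{lem:epsilon_bound}, introduces the correction variables $\Psi_k^A$ precisely because the empirical mean can misrepresent the posterior (see the bimodal example following the paper's proof), imposes ``good history'' conditions on the observing player's posterior $\mu_{\widetilde{H}_Q}$, and closes with Markov's inequality and $\epsilon \to 0$. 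You instead work directly with posteriors via L\'evy's zero-one law: on $\{R^i_\infty = \infty\}$ the parameter $\Theta$ is $\mathcal{F}^i_\infty$-measurable, so $\Ex[\Theta \mid \mathcal{F}^i_t] \to \Theta$ there; and since $\{\mbox{Bob settles on } R\}$ is $\mathcal{F}^A_\infty$-measurable and a.s.\ contained in $\{\Theta > p\}$, Alice's posterior mean converges on $E$ to $\bar\theta_\infty > p$. This sidesteps the empirical-versus-posterior discrepancy and the $\Psi_k$ bookkeeping entirely, and your two-sided $Q$-value bound $Q(h,R)-Q(h,L)\ge \Ex[\Theta\mid h]-p-\frac{\beta p}{1-\beta}(1-X(h))$ is exactly the right comparison (the roles are mirrored relative to the paper --- there the non-explorer deviates, here the settler on $L$ does --- which is immaterial by symmetry). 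What the paper's heavier route buys is quantitative content, namely explicit exponential-in-$k$ bounds on the probabilities of the bad events, whereas yours yields only the almost-sure conclusion. Two points in your sketch need routine firming up: $\bar\theta_\infty$ is only pointwise strictly above $p$ on $E$, not bounded away from it, so you must restrict to a positive-probability subset $E \cap \{\bar\theta_\infty > p + \eta_0\}$ and a fixed time $T$ by which the martingales are within tolerance; since there are only finitely many time-$T$ histories, some single positive-probability history then carries a fixed suboptimality margin, which is what the ex-ante improving deviation requires. Also, the deviation argument should be phrased for behavioral strategies (Kuhn's theorem), since the equilibrium may be mixed --- the same level of care the paper's own deviations $S_A'$, $S_B'$ leave implicit.
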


\begin{proof}[Proof of Theorem~\ref{thm:neutral_settle_same_arm}]
	Let $(S_A, S_B)$ be an arbitrary pair of strategies in Nash equilibrium.
	For each round $n$, let $H_n^i$ be the history from player $i$'s point of view up to round $n$; this contains the actions of both players until the end of round $n-1$, the rewards of player $i$ in the first $n-1$ rounds, and the action prescribed to player $i$ by strategy $S_i$ for round $n$.
	Let $N$ be a stopping time for player $i$.
	
	Let $\widehat{\Theta}_k^i$ be the empirical mean of the success probability at the right arm as observed by player $i$ after $k$ explorations, defined if $R_{\infty}^i \geq k$, where $R_{\infty}^i$ is player i's total number of pulls of the right arm.
	
	Let $\epsilon \in (0, 1/2)$. Pick $\delta  \in (0, \epsilon)$ such that $\mu([p-3\delta, p+3\delta]) < \epsilon$.
	Let $\delta_1 \in (0, \delta)$ be such that $\mu([p-3\delta_1,p+3\delta_1]) \leq   \epsilon^2 \delta / 2$. 
	
	We start by showing that if Alice explores at least $k$ times, for $k$ large enough, then she will settle afterwards on the better of the two arms with high probability. Then we will study Bob's behavior and show that he will also pick the better of the two arms if at least one of the players explores $k$ times.
	
	We bound Alice's expected value for pulling the right arm in the future by considering three cases, depending on the value of $\widehat{\Theta}_k^A$.
	
	\medskip
	
	\noindent \emph{\textbf{Case 1:}} $\widehat{\Theta}_k^A \leq p - 2\delta$. The goal will be to show that the following event has small probability
	$$\widetilde{D}_1 = \{(R_{\infty}^A \geq k+1) \cap (\widehat{\Theta}_k^A \leq p - 2 \delta)\}.$$
	Let $N$ be the (random) time of the $k+1$-st exploration by Alice, if it exists, and otherwise $N = \infty$.
	Define the random variable
	\begin{align} \label{def:psi_k}
		\Psi_k^A = \Pr\Bigl((|w_k - k \Theta| > k \delta) \cap ( R_{\infty}^A \geq k+1) \mid H_{N}^A \Bigr)
	\end{align}
	
	Using Lemma~\ref{lem:epsilon_bound}, we get $\Ex(\Psi_k^A) \leq 2 e^{-2k\delta^2}$. From this we deduce by Markov's inequality that
	\begin{align} \label{eq:psi_k_chance}
		\Pr(\Psi_k^A \geq e^{-k\delta^2}) \leq 2 e^{-k\delta^2}
	\end{align}
	We define the event $D_1 = \{\Psi_k^A \leq e^{-k\delta^2}\} \cap \widetilde{D}_1$. By Lemma~\ref{lem:epsilon_bound}, we have
	\begin{align} \label{eq:bound_D1_tilde_minus_D}
		\Pr(\widetilde{D}_1 \setminus D_1) \leq 2e^{-k\delta^2}
	\end{align}
	If $|\widehat{\Theta}_k^A - \Theta| \leq \delta$ and $D_1$ holds, then $\Theta \leq p - \delta$.
	Let $\Gamma_A(N,\infty) = \sum_{j=0}^{\infty} \beta^j \cdot \gamma_A(N+j)$ denote Alice's total normalized reward starting from round $N$ under strategy $S_A$.
	Then Alice's total reward satisfies:
	\begin{small}
		\begin{align}
			\mathbbm{1}_{D_1} \Ex\Bigl[ \Gamma_A(N, \infty) \mid H_{N}^A , \Theta \Bigr] \leq \mathbbm{1}_{D_1} \Bigl[ \mathbbm{1}_{|\Theta - \widehat{\Theta}_k^A| \leq \delta} \left( p - \delta + \frac{p \beta}{1-\beta} \right) + \mathbbm{1}_{|\Theta - \widehat{\Theta}_k^A| > \delta} \frac{1}{1-\beta} \Bigr]
		\end{align}
	\end{small}
	
	Taking expectation over $\Theta$ gives (since $D_1 \subseteq \{R_{\infty}^A \geq k+1\}$):
	\begin{small}
		\begin{align} \label{eq:bound_exp_SA_N}
			\mathbbm{1}_{D_1} \Ex\Bigl[ \Gamma_A(N,\infty) \mid H_{N}^A \Bigr] & \leq \mathbbm{1}_{D_1} \Bigl[p - \delta + \frac{p \beta}{1-\beta} + \Pr\Bigl(\{R_{\infty}^A \geq k+1\} \cap |\Theta - \widehat{\Theta}_k^A| > \delta \mid H_{N}^A \Bigr) \cdot \frac{1}{1-\beta} \Bigr] \notag \\
			& = \mathbbm{1}_{D_1} \Bigl[ \frac{p}{1-\beta} - \delta + \Psi_k^A  \cdot \frac{1}{1-\beta} \Bigr] \notag \\
			& \leq \mathbbm{1}_{D_1} \Bigl[ \frac{p}{1-\beta} - \delta + \frac{e^{-k\delta^2}}{1-\beta} \Bigr]
		\end{align}
	\end{small}
	Consider an alternative strategy $S_A'$ as follows: play $S_A$ until time $N$ when it is about to do its $k+1$-st exploration. If $\widehat{\Theta}_k^A \leq p - 2 \delta$ and $\Psi_k^A \leq e^{-k\delta^2}$, then play left forever, otherwise continue with $S_A$. Note that $S_A'$ differs from $S_A$ only on the event $D_1$.
	
	Using inequality (\ref{eq:bound_exp_SA_N}), on the event $D_1$ we have that for all $k > \delta^{-2} \cdot \log{1/(\delta(1-\beta))}$:
	\begin{align} \label{eq:bound_case1_star}
		\Ex\Bigl[ \Gamma_A(N,\infty) - \Gamma_A'(N,\infty) \mid H_{N}^A \Bigr] \leq \Bigl( \frac{e^{-k\delta^2}}{1-\beta} - \delta \Bigr) < 0
	\end{align}
	
	Since the original strategies $(S_A, S_B)$ are in equilibrium, we have
	\begin{align}
		0 \leq \Ex(\Gamma_A) - \Ex(\Gamma_A') & = \Ex\Bigl[\beta^N \Gamma_A(N,\infty) - \beta^N \Gamma_A'(N,\infty) \Bigr] \notag \\
		& = \Ex\Bigl[ \beta^N \mathbbm{1}_{D_1} \Ex\Bigl( \Gamma_A(N,\infty) - \Gamma_{A'}(N,\infty) \mid H_{N}^A \Bigr) \Bigr]
	\end{align}
	
	This forces $\Pr(D_1) = 0$ by (\ref{eq:bound_case1_star}).
	By inequality (\ref{eq:bound_D1_tilde_minus_D}), we get that $\Pr(\widetilde{D}_1) \leq 2e^{-k\delta^2}$.
	
	\bigskip
	
	\noindent \emph{\textbf{Case 2:}} $\widehat{\Theta}_k^A \geq p + 2\delta$.
	Define $M$ as the first round after the $k$-th exploration at which Alice does not explore. If there is no such time, then $M = \infty$. The goal will be to bound the probability of the event
	\begin{align}
		\widetilde{D}_2 = \{(M < \infty) \cap (\widehat{\Theta}_k^A \geq p + 2\delta)\}
	\end{align}
	
	Recall ${H}_M^A$ is the history from Alice's point of view until the beginning of round $M$ and $\Phi_k^A$ as the random variable
	\begin{align} \label{def:phi_k}
		\Phi_k^A = \Pr\Bigl((|w_k - k \Theta| > k \epsilon) \cap (  M < \infty ) \mid {H}_{M}^A \Bigr)
	\end{align}
	
	Using Lemma~\ref{lem:epsilon_bound}, we get $\Ex(\Phi_k^A) \leq 2 e^{-2k\epsilon^2}$. From this we deduce by Markov's inequality that
	\begin{align}
		\Pr(\Phi_k^A \geq e^{-k\epsilon^2}) \leq 2 e^{-k\epsilon^2}
	\end{align}
	
	Define the event $D_2 = \{\Phi_k^A \leq e^{-k \delta^2} \} \cap \widetilde{D}_2$. By Lemma~\ref{lem:epsilon_bound}, we have $$\Pr(\widetilde{D}_2 \setminus D_2) \leq 2e^{-k\delta^2}\,.$$
	
	Consider Alice's strategy $S_A$. If $M$ is finite, then at round $M$ Alice is playing left by definition. We will upper bound Alice's total reward from round $M$ onwards and compare it to the expected reward obtained by playing right in all rounds $t \geq M$. The following inequality is immediate:
	$\max\{\Theta,p\} \leq \Theta + p \cdot \mathbbm{1}_{\Theta < p}\,.$
	By taking expectation and conditioning on the history, on the event $D_2$ we get
	$\Ex(\mathbbm{1}_{\Theta < p} \mid {H}_{M}^A) \leq e^{-k\delta^2}.$
	Then on the event $D_2$, Alice's expected total reward from round $M$ onwards under strategy $S_A$ satisfies
	\begin{align} \label{eq:gamma_total_case2_ub}
		\Ex\Bigl[\Gamma_A(M,\infty) \mid {H}_M^A\Bigr] & \leq p + \frac{\beta}{1-\beta} \cdot \Ex(\Theta \mid {H}_M^A) + \frac{\beta \cdot p}{1-\beta} \cdot \Pr\Bigl((\Theta < p) \cap (M < \infty) \mid {H}_M^A\Bigr) \notag \\
		& \leq p + \frac{\beta}{1-\beta} \cdot \Ex(\Theta \mid {H}_M^A) + \frac{\beta \cdot p}{1-\beta} \cdot \Phi_k^A
	\end{align}
	
	On the other hand, on the event $D_2$, by playing right forever starting with round $M$, Alice's expected reward is $\Gamma_A'$, which has expectation:
	\begin{align} \label{eq:gamma_prime_case2_ub}
		\Ex\Bigl[\Gamma_A'(M,\infty) \mid {H}_{M}^A\Bigr] = \frac{ \Ex(\Theta \mid {H}_M^A)}{1-\beta}
	\end{align}
	
	If $\widehat{\Theta}_k^A - \Theta \leq \delta$ and the event $D_2$ holds, then $\Theta \geq p + \delta$.
	On the event $D_2$ we have
	\begin{align} \label{eq:gamma_oneround_case2_lb}
		\Ex\Bigl[\Theta \mid {H}_M^A\Bigr] \geq p + \delta - e^{-k\delta^2}
	\end{align}
	By combining inequalities (\ref{eq:gamma_total_case2_ub}), (\ref{eq:gamma_prime_case2_ub}), and (\ref{eq:gamma_oneround_case2_lb}), we get that on the event $D_2$ the following inequalities hold for all $k$ large enough so that $e^{-k\delta^2} < \delta(1-\beta)$.
	\begin{small}
		\begin{align} \label{eq:delta_right_forever_neutral}
			\Ex\Bigl[\Gamma_A(M, \infty) \mid {H}_{M}^A\Bigr] - \Ex\Bigl[\Gamma_A'(M,\infty) \mid {H}_{M}^A\Bigr] & \leq p + \frac{\beta \cdot p}{1-\beta} \cdot \Phi_k^A + \frac{\beta}{1-\beta} \cdot \Ex(\Theta \mid {H}_M^A) - \frac{ \Ex(\Theta \mid {H}_M^A)}{1-\beta} \notag \\
			& \leq p + \frac{\beta p}{1-\beta} \cdot e^{-k\delta^2} - \Ex(\Theta \mid {H}_M^A) \notag \\
			& \leq p + \frac{\beta}{1-\beta} \cdot e^{-k\delta^2} - p - \delta + e^{-k\delta^2} \notag \\
			& = \frac{e^{-k\delta^2}}{1-\beta} - \delta < 0
		\end{align}
	\end{small}
	Consider the alternative strategy $S_A'$ defined as follows: play $S_A$ until time $M$ when it is about to play left. If $\Phi_k^A \leq e^{-k\delta^2}$ and $\widehat{\Theta}_k^A \geq p + 2 \delta$, then play right forever. Otherwise, continue with $S_A$. Similarly to Case 1, since $(S_A, S_B)$ is an equilibrium, we obtain that $\Ex(\Gamma_A) - \Ex(\Gamma_A') \geq 0$. Moreover, the strategies $S_A$ and $S_A'$ have different rewards only on the event $D_2$, so
	\begin{align}
		0 \leq \Ex(\Gamma_A) - \Ex(\Gamma_A') & = \Ex\Bigl[\Gamma_A(M,\infty) - \Gamma_A'(M,\infty)\Bigr] \notag \\
		& = \Ex\Bigl[\beta^M \mathbbm{1}_{D_2}  \Ex\Bigl(\Gamma_A(M,\infty) - \Gamma_A'(M,\infty)  \mid {H}_{M}^A \Bigr) \Bigr]
	\end{align}
	
	By (\ref{eq:delta_right_forever_neutral}), we get $\Pr(D_2) = 0$. Recall $\Pr(\widetilde{D}_2 \setminus D_2) \leq 2e^{-k\delta^2}$. Then
	$\Pr(\widetilde{D}_2) \leq 2e^{-k\delta^2}$.
	
	\bigskip
	
	\noindent \emph{\textbf{Case 3:}} $\Theta \in [p - 3 \delta, p + 3 \delta]$.
	By choice of $\delta$, we have $\mu([p - 3 \delta, p + 3 \delta]) < \epsilon$.
	
	\bigskip
	
	\noindent \emph{\textbf{Case 4:}} $|\Theta - \widehat{\Theta}_k^A| > \delta$ and $R_{\infty}^A \geq k$. By Lemma~\ref{lem:epsilon_bound}, this happens with probability at most $2 \epsilon^2$.
	
	\bigskip
	
	Let $\tau_k^A$ be the round in which Alice explores for the $k$-th time (if $R_{\infty}^A < k$, then $\tau_k^A = \infty$). From cases $(1-4)$, we obtain that for $k$ additionally satisfying the inequality $e^{-k\delta^2} < \epsilon$ we have
	\begin{align} \label{eq:pr_alice_wrong}
		\Pr\Bigl(\exists \; t > \tau_k^A \; : \; \mbox{ Alice pulls the worse arm at time } t\Bigr) \leq 7\epsilon
	\end{align}
	
	We consider now Bob's behavior and show that if Alice explores at least $k+1$ times, then Bob will explore at all later times with high probability.
	Let $Q > \tau_{k+1}^A$ be the first round after $\tau_{k+1}^A$ where $S_B$ plays left; if there is no such round, then $Q = \infty$.
	
	Let $\widetilde{H}_{Q}$ be Alice's public history (i.e. containing the sequence of arms she played, but not her rewards) running from round $0$ until the end of round $Q-1$. If $Q = \infty$, then $\widetilde{H}_Q$ is the whole history. Note that $\widetilde{H}_Q$ is observable by Bob. 
	
	Let $\mu_{\widetilde{H}_Q}$ denote Bob's posterior distribution for $\Theta$ (at the end of round $Q - 1$) given $\widetilde{H}_Q$:
	\begin{align} \label{eq:posterior_bob}
		{\mu}_{\widetilde{H}_Q}(a,b) =   \Pr\Bigl(\Theta \in (a,b) \mid \widetilde{H}_Q\Bigr) \cdot \mathbbm{1}_{Q < \infty}
	\end{align}
	
	For $Q < \infty$, the history $\widetilde{H}_Q$ is said to be ``good'' if $\mu_{\widetilde{H}_Q}\bigl(p,p+ 2\delta\bigr) < \sqrt{\epsilon}$ and
	$\mu_{\widetilde{H}_Q}\bigl(0,p\bigr) < \epsilon \delta$.
	Define strategy $S_B'$ as follows: play $S_B$ until the beginning of round $Q$, when $S_B$ plays left. If the history is good, then play right forever. Otherwise, continue with $S_B$. The goal will be to compare the total reward $\Gamma_B$ from strategy $S_B$ with the reward $\Gamma_B'$ from $S_B'$.
	
	Taking expectation over the history in (\ref{eq:posterior_bob}) gives
	$$\Ex\bigl[\mu_{\widetilde{H}_Q}\bigl(a,b\bigr)\bigr] = \Pr\bigl(\{\Theta \in (a,b) \} \cap \{Q < \infty\}\bigr)\,.$$
	We additionally require that
	$$\frac{6e^{-k\delta_1^2}}{1-\beta} \leq \epsilon^2 \delta /2\,.$$
	By choice of $\delta$, $\delta_1$, and $k$, we get
	\begin{align} \label{eq:exp_bound_barmu}
		& \bullet \; \; \;  \Ex\bigl[\mu_{\widetilde{H}_Q}\bigl(p,p+2\delta\bigr)\bigr]  \leq \mu\bigl(p,p+2\delta\bigr) \leq \epsilon \notag \\
		& \bullet \; \; \; \Ex\bigl[\mu_{\widetilde{H}_Q}\bigl(0,p\bigr)\bigr]  \leq \Pr\bigl(\{\Theta \in [0,p]\} \cap \{R_{\infty}^A \geq k+1\}\bigr)
		\leq \mu(p-3\delta_1, p) + \frac{6e^{-k\delta_1^2}}{1-\beta} \leq \epsilon^2 \delta
	\end{align}
	Applying Markov's inequality in (\ref{eq:exp_bound_barmu}) gives:
	$
	\Pr\bigl( \mu_{\widetilde{H}_Q}\bigl(p,p+2\delta\bigr) \geq \sqrt{\epsilon}, \; Q < \infty \bigr) \leq \sqrt{\epsilon}
	$
	and
	$
	\Pr\bigl(\mu_{\widetilde{H}_Q}\bigl(0,p\bigr) \geq \epsilon \delta, \; Q < \infty \bigr) \leq \epsilon
	$. Combining these implies:
	\begin{align} \label{eq:bad_history_finite_Q_neutral}
		\Pr\bigl((Q < \infty) \cap (\widetilde{H}_Q \mbox{ is bad})\bigr)
		\leq 2 \sqrt{\epsilon}\,.
	\end{align}
	
	\medskip
	
	We claim that $\Pr\bigl(Q < \infty \mbox{ and } \widetilde{H}_Q \mbox{ is good}\bigr) = 0$. On the event that $\widetilde{H}_Q$ is good, we have
	\begin{align} \label{eq:exp_gamma_B}
		\Ex\Bigl[\Gamma_B(Q, \infty) \mid \widetilde{H}_Q \Bigr] & \leq p + \frac{\beta}{1-\beta} \Ex\Bigl[ \max(\Theta,p) \mid \widetilde{H}_Q \Bigr] \notag \\& \leq p + \frac{\beta}{1-\beta} \Bigl( \Ex\bigl[\Theta \mid \widetilde{H}_Q\bigr] + p \cdot \Pr(\Theta < p \mid \widetilde{H}_Q) \Bigr) \notag \\
		& \leq p + \frac{\beta}{1-\beta} \Ex\bigl[\Theta \mid \widetilde{H}_Q \bigr] + \epsilon \delta
	\end{align}
	On the other hand, the reward from strategy $S_B'$ on the event that $\widetilde{H}_Q$ is good is
	\begin{align}\label{eq:exp_gamma_B_prime}\Ex\Bigl[\Gamma_B'(Q, \infty) \mid \widetilde{H}_Q\Bigr] = \frac{\Ex\bigl[\Theta \mid \widetilde{H}_Q\bigr]}{1-\beta}\,.
	\end{align}
	Using inequalities (\ref{eq:exp_gamma_B}) and (\ref{eq:exp_gamma_B_prime}), the difference between the rewards under $S_B$ and $S_B'$ from round $Q$ onwards given a good history $\widetilde{H}_Q$ satisfies the inequality:
	\begin{align} \label{eq:delta_b}
		\Delta(\widetilde{H}_Q) = \Ex\Bigl[\Gamma_B(Q, \infty) - \Gamma_B'(Q,\infty) \mid \widetilde{H}_Q\Bigr] \leq p - \Ex\Bigl[\Theta \mid \widetilde{H}_Q\Bigr] + \epsilon \delta
	\end{align}
	For any good history $\widetilde{H}_Q$, we can bound $\Ex[\Theta \mid \widetilde{H}_Q]$ as follows:
	\begin{align}
		\Ex\Bigl[\Theta \mid \widetilde{H}_Q\Bigr] & \geq (p+2\delta) \cdot  \mu_{\widetilde{H}_Q}\bigl(p+2\delta,1\bigr) + p  \cdot \mu_{\widetilde{H}_Q}\bigl(p,p+2\delta\bigr) \notag \\
		& = p \cdot \mu_{\widetilde{H}_Q}\bigl(p,1\bigr) + 2 \delta \cdot \mu_{\widetilde{H}_Q}\bigl(p+2\delta,1\bigr)
	\end{align}
	Then on the event that $\widetilde{H}_Q$ is good:
	$
	\Ex[\Theta \mid \widetilde{H}_Q] \geq p (1-\epsilon \delta) + \delta \geq p +  \delta/2
	$. Replacing in (\ref{eq:delta_b}) implies that
	\begin{align} \label{eq:delta_hq_bound}
		\Delta(\widetilde{H}_Q) = \Ex\Bigl[\Gamma_B(Q, \infty) - \Gamma_B'(Q,\infty) \mid \widetilde{H}_Q\Bigr] < -\delta/3
	\end{align}
	for all good histories $\widetilde{H}_Q$ when $\epsilon < 1/2$.
	Then the difference between the rewards under strategies $S_B$ and $S_B'$ satisfies
	\begin{align} \label{eq:delta_b_goodhq_neutral}
		0 \leq \Ex(\Gamma_B - \Gamma_B') = \Ex\Bigl[\beta^Q \cdot \Delta(\widetilde{H}_Q) \cdot \mathbbm{1}_{\{\widetilde{H}_Q \; \mbox{is} \; \mbox{good}\}}\Bigr] \,.
	\end{align}
	This forces $\Pr\bigl((\widetilde{H}_Q \; \mbox{is good}) \cap (Q < \infty)\bigr) = 0$. Combining (\ref{eq:bad_history_finite_Q_neutral}) and (\ref{eq:delta_b_goodhq_neutral}) implies
	\begin{align} \label{eq:q_finite_ub}
		\Pr(Q < \infty) \leq 2 \sqrt{\epsilon}
	\end{align}
	
	By (\ref{eq:q_finite_ub}), the probability that Alice explores infinitely often and Bob plays left infinitely often is bounded by $2 \sqrt{\epsilon}$ for every $\epsilon$, so the probability is zero. Similarly with roles reversed.
	
	Now we can conclude the proof of convergence to the same arm. If both players explore finitely many times, then they converge to the left arm. Otherwise, one of the players - say Alice - explores infinitely often. This implies that Bob plays right for all but finitely many rounds. Reversing the roles, Alice also plays right for all but finitely many rounds.
\end{proof}

We give an example showing why the $\Psi_k^A$ condition is needed. This example has the property that the empirical mean $\Theta_k^A$ is significantly below $p$, yet the posterior mean is larger than $p$.

\begin{example}
	Let $p = 0.7$ and $\mu$ be the uniform distribution on $[0,1/4] \cup [3/4,1]$. Suppose that in some history Alice obtains after $k$ observations a value of $\widehat{\Theta}_k^A = 0.65 < p$. Then since $\mu$ has zero mass in $[1/4,3/4]$, it is much more likely that $\Theta \in [3/4,1]$ rather than $\Theta \in [0,1/4]$. In this case, $\Psi_k^A$ is large and the strategy $S_A'$ does not tell Alice to stop exploring, even though $\widehat{\Theta}_k^A < p - 2 \delta$ for small $\delta > 0$.
\end{example}

\subsection{Competing Players in the Long Term} \label{app:settle_competing}

\begin{theorem}[Competing players eventually settle on the same arm] \label{thm:competing_settle_same_arm}
	Consider two competing players, Alice and Bob, playing a one armed bandit problem with discount factor $\beta \in (0,1)$. The left arm has success probability $p$ and the right arm has prior distribution $\mu$ with $\mu(p)=0$. Then in any Nash equilibrium, with probability $1$ the players eventually settle on the same arm.
\end{theorem}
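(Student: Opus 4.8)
The plan is to mirror the proof of Theorem~\ref{thm:neutral_settle_same_arm} for the neutral case, reusing the concentration Lemma~\ref{lem:epsilon_bound}, and to reduce the statement to two assertions: (A) if a player explores at least $k$ times, then for $k$ large this player plays the better of the two arms at every later round with probability at least $1-7\epsilon$ (the analogue of~(\ref{eq:pr_alice_wrong})); and (B) if one player, say Alice, explores infinitely often, then Bob plays the same arm as Alice for all but finitely many rounds. Granting (A) and (B), the conclusion follows from the same dichotomy as in the neutral case: if both players explore finitely often they settle on the left arm; otherwise the player who explores infinitely often settles, by (A) and because $\mu(p)=0$ forces $\Theta\neq p$ almost surely, on the better arm, which is necessarily the right arm, and by (B) the other player joins there.

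To establish (A) and (B) I would run exactly the stopping-time deviations of the neutral proof: the time $N$ of the $(k{+}1)$-st exploration together with the low-empirical-mean event (Case~1), the first non-exploration time $M$ with the high-empirical-mean event (Case~2), the negligible Cases~3--4, and for (B) the first left-play time $Q$ after $\tau_{k+1}^A$ with the good/bad-history dichotomy for Bob's posterior. The only structural change is that each deviation must now raise the zero-sum payoff $\Gamma_A-\Gamma_B$ (resp.\ $\Gamma_B-\Gamma_A$) rather than the deviator's own reward. Hence, alongside the lower bound on the deviator's own continuation gain already produced in the neutral proof (a fixed positive quantity of order $\delta$ in Cases~1--2 and of order $\Theta-p$ in the $Q$-analysis), I must produce a matching \emph{upper} bound on the change in the opponent's continuation reward and show it is strictly smaller.

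The hard part will be exactly this control of the opponent's reaction, which is absent in the neutral analysis because there each player's reward depends only on their own pulls. Two features make it delicate: the opponent observes the deviation and may react, and, since the game is zero-sum, a player can rationally pull an inferior arm in order to \emph{mislead} the other about $\Theta$, in the spirit of Aumann--Maschler~\cite{aumann_book}. I would neutralize both as follows. First, each deviation ends with the deviator committing to a \emph{fixed} arm forever, so after the switch the opponent receives no further information and their per-round reward is bounded above by the better-arm rate $\max(\Theta,p)/(1-\beta)$. Second, on the event where a given deviation is invoked the opponent is already near-optimal: in the $Q$-analysis Alice has explored at least $k{+}1$ times and, by (A) applied to her, plays the right arm at all large rounds, so her continuation reward is already within an exponentially small amount of $\Theta/(1-\beta)$ and Bob's switch cannot raise it; the subtle ``misleading'' gain of Cases~1--2 is the genuine crux, and I would bound it by applying the same stopping deviations symmetrically to the opponent, so that a player repeatedly drawn to the worse arm would themselves possess a profitable stopping deviation. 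As a clean quantitative backstop, the copying strategy of Theorem~\ref{thm:competing_explore_less_discounted} shows that by switching at round $N$ to copying the other player's previous action a player guarantees expected continuation utility at least $-\beta^N$; hence in any Nash equilibrium the expected continuation payoff difference at round $N$ lies in $[-\beta^N,\beta^N]$, which caps how much either deviation could have been suppressing the opponent's reward.

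Assembling these bounds yields $\Delta(\Gamma_A-\Gamma_B)>0$ (resp.\ $\Delta(\Gamma_B-\Gamma_A)>0$) for each deviation on the corresponding event, forcing that event to have probability $0$ exactly as in the neutral proof; letting $\epsilon\to 0$ then gives (A) and (B), and with them the theorem. I expect the most technical point to be making the ``opponent is already near-optimal'' step uniform over the random stopping times, so that the exponentially small slack from Lemma~\ref{lem:epsilon_bound} dominates the opponent's possible gain after discounting by $\beta^N$, $\beta^M$, or $\beta^Q$. It is worth noting that this detailed concentration argument is genuinely needed rather than a soft one, since Proposition~\ref{prop:nash_oscillating_competing} shows that for $\lambda<-1$ oscillating equilibria exist; the argument must therefore use $\lambda=-1$ precisely at the points where the copying guarantee is invoked.
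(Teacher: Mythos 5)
Your scaffolding matches the paper's: Lemma~\ref{lem:epsilon_bound}, stopping-time deviations that end with the deviator committing to a fixed arm, the good/bad-history dichotomy for the public history, and the reduction to your assertions (A) and (B). You also correctly identify the crux (the Aumann--Maschler misleading incentive). But your proposed resolution of that crux is not the mechanism that makes the proof work, and as stated it fails. You claim that ``a player repeatedly drawn to the worse arm would themselves possess a profitable stopping deviation.'' That is exactly what the misleading incentive defeats: on the event $D$ that Alice explores at time $N_{k+1}$ with low empirical mean, her deviation to left-forever need \emph{not} be profitable, because under the original play her own loss may be outweighed by the damage her misleading pulls inflict on Bob's reward, which enters her utility with a minus sign. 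What the equilibrium property actually yields is only a constraint, the analogue of (\ref{eq:alice_final_lower_bound_competing_1b}): her continuation utility $u^A_{N+1}$ must be at least of order $\delta$ on $D$, i.e.\ the misleading must pay off later. The contradiction then comes from a second, symmetric constraint --- Bob's own left-forever deviation forces his continuation utility to be at least about $-\delta\epsilon$ on essentially the same event --- combined with the \emph{pathwise} zero-sum identity $u^A + u^B = 0$: the two lower bounds sum to roughly $5\delta\Pr(D) - 11\delta\epsilon$, which must equal zero, forcing $\Pr(D)$ small. It is the sum over both players, not either player's incentive alone, that yields the contradiction. Your copying ``backstop'' cannot substitute for this: it bounds continuation utility only unconditionally (by $\pm\beta^N$), whereas the argument needs bounds restricted to the events $D$, $D'$ whose probability is the very quantity being driven to zero; and it ignores the measurability point that $D$ is Alice-measurable while $D'$ is Bob-measurable, which the paper handles by approximating both by a common event $D^*$ defined through $\Theta$ and controlling symmetric differences.

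Two further gaps. First, in your (B)/$Q$-analysis you argue that Alice is ``already within an exponentially small amount of $\Theta/(1-\beta)$'' by applying (A); but (A) is an unconditional probability-$(1-7\epsilon)$ statement, and you need it \emph{conditional} on the event that Bob's stopping time is finite with a good history --- an event whose probability may be far smaller than $\epsilon$ (you are trying to prove it is zero), so the conditional near-optimality does not follow. The paper avoids this entirely: it caps Alice's reward by the trivial $\max(\Theta,p)/(1-\beta)$, bounds Bob's deviation utility by the posterior mass $\mu_{\widetilde{H}_M}(0,p) < \epsilon\delta$, and closes again with the zero-sum identity together with Alice's own right-forever deviation bound. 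Second, the case where only one player explores infinitely often ($\Lambda_f$) needs a separate argument from the case $\Lambda_\infty$: there Bob may never accumulate $k$ explorations, so his concentration event is unavailable, and the paper instead invokes Levy's zero-one law to show that for large $k$ Alice's history makes her nearly certain Bob will never explore again, which is what makes her deviation calculus close in that sub-case. Your uniform treatment silently assumes the two-sided concentration that exists only when both players explore infinitely often.
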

In this setting there is an additional difficulty compared to the neutral case: players might refrain from switching to the optimal arm in order to not trigger an adverse reaction from the opponent. The key to overcoming this difficulty is realizing that it is impossible for both players to benefit from repeatedly pulling the inferior arm. Each player might subjectively believe for some time that they are playing the better arm, but if a player keeps exploring, then their subjective evaluation of the risky arm will converge to the objective reality. \\

\begin{proof}[Proof of Theorem~\ref{thm:competing_settle_same_arm}]
	Let $(S_A, S_B)$ be an optimal strategy pair. Note the strategies may be randomized.
	
	As in the neutral case, for each $n$, let $H_n^i$ be the history from player $i$'s point of view up to round $n$.
	Let $N$ be a stopping time for player $i$; this means that for every fixed $n$, the event $N = n$ is determined by $H_n^i$.
	Then $H_N^i$ is a history for player $i$ up to round $N$.
	
	Recall $\widehat{\Theta}_k^i$ is the fraction of successes in the first $k$ explorations by player $i$ and $R_{\infty}^i$ is the total number of explorations by player $i$.
	
	\medskip
	
	Define the events
	\begin{itemize}
		\item $\Lambda_f = \{\mbox{Alice explores infinitely often and Bob explores finitely many times}\}$
		\item $\Lambda_{\infty} = \{\mbox{Both players explore infinitely often}\}.$
	\end{itemize}
	
	\medskip
	
	Choose the following parameters:
	\begin{align} \label{eq:epsilon_bound_competing}
		\bullet &  \; \; \; \epsilon \in \Bigl(0, \frac{1-\beta}{2}\Bigr) \; \mbox{ and } \notag \\
		\bullet &  \; \; \;  \delta \in (0, \epsilon) \; \mbox{ such that } \; \mu([p-10\delta, p + 10\delta]) < \epsilon \; \mbox{ and } \mu(p - 8\delta) = 0
	\end{align}
	
	\noindent \emph{\textbf{Case 1:}} $\Theta < p$. We consider two sub-cases, depending on whether both players explore infinitely often or only one player does (note that on the event that both players explore finitely often the conclusion holds immediately).
	
	\medskip
	
	\emph{\textbf{Case 1.a:}} $\Theta < p$, Alice explores infinitely often and Bob finitely many times (i.e., $\Lambda_f$ holds).
	
	Define the event
	$S_k = \{\mbox{Bob explores at or after Alice's k-th exploration}\}$.
	Let $$\epsilon_1 = \delta(1-\beta)/4\,.$$
	
	Let $N_{\ell}$ be the time of Alice's $\ell$-th exploration if it exists, and otherwise $N_{\ell} = \infty$.
	Let $\mathcal{B}_1(k)$ be the collection of histories $H_{N_k}^{A}$ such that $\Pr(\Lambda_f \cap S_k \mid H_{N_k}^A) > \epsilon_1$ and $\mathcal{B}_2(k)$ the collection of histories $H_{N_k}^A$ for which $\Pr(\Lambda_f^c \mid H_{N_k}^A) > \epsilon_1$. Applying the Levy zero-one law gives
	\begin{align} \label{eq:A_cap_HK_B2}
		\lim_{k \to \infty} \Pr\Bigl(\Lambda_f \cap \bigl(H_{N_k}^A \in \mathcal{B}_2(k)\bigr)\Bigr) = 0\,.
	\end{align}
	We also have $\bigcap_{k=0}^{\infty} (\Lambda_f \cap S_k) = \emptyset$ by the definition of $\Lambda_f$ and $S_k$. Together with (\ref{eq:A_cap_HK_B2}), this implies that for large enough $k$ the next two conditions are satisfied:
	\begin{align} \label{eq:def_k_competing}
		\Pr(\Lambda_f \cap S_k) \leq \delta \epsilon_1 \; \mbox{ and } \Pr\Bigl(\Lambda_f \cap \bigl(H_{N_k}^A \in \mathcal{B}_2(k)\bigr)\Bigr) \leq \epsilon_1\,.
	\end{align}
	Fix $k$ so that (\ref{eq:def_k_competing}) is satisfied for all $k' \geq k$.
	Letting $k' = k+1$ and taking expectation over the history gives $$\Pr(\Lambda_f \cap S_{k+1}) = \Ex(\Pr(\Lambda_f \cap S_{k+1}) \mid H_{N_{k+1}}^A) \leq \delta \epsilon_1\,.$$
	
	The collection of \emph{bad} histories is defined as $\mathcal{B} = \mathcal{B}_1(k+1) \cup \mathcal{B}_2(k+1)$. Note that if the history $H_{N_{k+1}}^A$ is good, then
	\begin{align} \label{eq:upper_bound_s_k+1}
		\Pr(S_{k+1} \mid H_{N_{k+1}}^A) = \Pr(\Lambda_f^c \mid H_{N_{k+1}}^A) + \Pr(\Lambda_f \cap S_k \mid H_{N_{k+1}}^A) \leq 2 \epsilon_1
	\end{align}
	
	\smallskip
	
	By Markov's inequality, we obtain:
	\begin{align} \label{eq:pr_bad_history}
		\Pr(H_{N_{k+1}}^A \in \mathcal{B}_1(k+1)) & = \Pr\Bigl(\Pr(\Lambda_f \cap S_{k+1} \mid H_{N_{k+1}}^A) > \epsilon_1\Bigr) \notag \\
		& \leq \frac{\Ex(\Pr(\Lambda_f \cap S_{k+1} \mid H_{N_{k+1}}^A))}{\epsilon_1}
		\leq \frac{\delta \epsilon_1}{\epsilon_1} = \delta
	\end{align}

	By (\ref{eq:pr_bad_history}) and (\ref{eq:def_k_competing}), we have
	\begin{align} \label{eq:A_cap_H_k_bad}
		\Pr(\Lambda_f \cap (H_{N_{k+1}}^A \mbox{ is bad})) \leq \epsilon_1 + \delta
	\end{align}
	
	\medskip
	
	Then we can define, similarly to Theorem~\ref{thm:neutral_settle_same_arm}, the event $\widetilde{D}_1$ and the random variable $\Psi_k^A$ (both from Alice's point of view):
	\begin{align}
		\widetilde{D}_1 & = \{(R_{\infty}^A \geq k+1) \cap (\widehat{\Theta}_k^A \leq p - 2 \delta)\} \notag \\
		\Psi_k^A & = \Pr\Bigl((|w_k - k \Theta| > k \delta) \cap ( R_{\infty}^A \geq k+1) \mid H_{N_{k+1}}^A \Bigr) \notag
	\end{align}
	Define the event $$D_1 = \{\Psi_k^A \leq e^{-k\delta^2}\} \cap \widetilde{D}_1 \cap \{H_{N_{k+1}}^A \mbox{ is good}\}\,.$$
	
	Using inequality (\ref{eq:bound_exp_SA_N}) in Theorem \ref{thm:neutral_settle_same_arm}, we get that Alice's total reward from round $N_{k+1}$ on is bounded as follows:
	\begin{align} \label{eq:bound_exp_SA_N_competing}
		\mathbbm{1}_{D_1} \Ex\Bigl[ \Gamma_A(N_{k+1},\infty) \mid H_{N_{k+1}}^A \Bigr] & \leq \mathbbm{1}_{D_1} \Bigl[ \frac{p}{1-\beta} - \delta + \frac{e^{-k\delta^2}}{1-\beta} \Bigr]
	\end{align}
	Bob's expected total reward given Alice's information is bounded, using (\ref{eq:upper_bound_s_k+1}), by
	\begin{align} \label{eq:bound_exp_SB_N_competing_nonexploringbob}
		\mathbbm{1}_{D_1} \Ex\Bigl[ \Gamma_B(N_{k+1},\infty) \mid H_{N_{k+1}}^A \Bigr] & \geq \mathbbm{1}_{D_1} \Bigl[ \frac{p}{1-\beta} \cdot (1- 2\epsilon_1) \Bigr]
	\end{align}
	Combining inequalities (\ref{eq:bound_exp_SA_N_competing}) and (\ref{eq:bound_exp_SB_N_competing_nonexploringbob}) implies that Alice's expected utility satisfies
	\begin{align} \label{eq:bound_exp_uA_N_competing}
		\mathbbm{1}_{D_1} \Ex\Bigl[ u_A(N_{k+1},\infty) \mid H_{N_{k+1}}^A \Bigr] & = \mathbbm{1}_{D_1} \Ex\Bigl[ \Gamma_A(N_{k+1},\infty) - \Gamma_B(N_{k+1},\infty) \mid H_{N_{k+1}}^A \Bigr] \notag \\
		& \leq \mathbbm{1}_{D_1} \Bigl[ \frac{p}{1-\beta} - \delta + \frac{e^{-k\delta^2}}{1-\beta} - \frac{p}{1-\beta} \cdot (1- \delta(1-\beta)/2) \Bigr] \notag \\
		& = \mathbbm{1}_{D_1} \Bigl[ \frac{p \delta}{2} - \delta + \frac{e^{-k\delta^2}}{1-\beta} \Bigr]
	\end{align}
	We require $k$ to be large enough so that
	\begin{align} \label{eq:competing_1a_bound_k}
		\frac{e^{-k\delta^2}}{1-\beta} \leq \delta \epsilon
	\end{align}
	Consider instead the following strategy $S_A'$ for Alice: play $S_A$ until time $N_{k+1}$ when it is about to do its $k+1$-st exploration. If
	$\widehat{\Theta}_k^A \leq p - 2\delta$ and $\Psi_k^A \leq e^{-k\delta^2}$ and $\{H_{N_{k+1}}^A \mbox{ is good}\}$,
	then play left forever. (In other words, Alice deviates from $S_A$ exactly on the event $D_1$.)
	Otherwise, continue with $S_A$.
	Then Alice's expected utility from round $N_{k+1}$ on under strategy pair $(S_A', S_B)$ satisfies
	\begin{align} \label{eq:bound_exp_uA_prime_N_competing}
		\mathbbm{1}_{D_1} \Ex\Bigl[ u_A'(N_{k+1},\infty) \mid H_{N_{k+1}}^A \Bigr] \geq \mathbbm{1}_{D_1} \Bigl[ \frac{p}{1-\beta} - \Bigl( \frac{p}{1-\beta} + \frac{e^{-k\delta^2}}{1-\beta} \Bigr) \Bigr] = \mathbbm{1}_{D_1} \Bigl[  \frac{-e^{-k\delta^2}}{1-\beta} \Bigr]
	\end{align}
	
	From inequalities (\ref{eq:bound_exp_uA_N_competing}), (\ref{eq:competing_1a_bound_k}) (\ref{eq:bound_exp_uA_prime_N_competing}), we get that on the event $D_1$, for $\epsilon < 1/4$, the inequality $u_A'(N_{k+1}, \infty) > u_A(N_{k+1}, \infty)$ holds. Since the original strategy pair $(S_A, S_B)$ is optimal, we get that $\Pr(D_1) = 0$.
	
	Now consider the event $\bigl( \Lambda_f \cap (\Theta < p) \bigr) \setminus D_1$ and note that it is contained in the union of the following events:
	\begin{itemize}
		\item $\Theta \in [p-3\delta, p]$. Note that $\mu([p-3\delta, p]) < \epsilon$.
		\item $\Lambda_f \cap (|\widehat{\Theta}_k^A - \Theta| > \delta)$. By Lemma 4, the probability of this event is bounded by $2 e^{-2k\delta^2} \leq 2 \delta \epsilon$.
		\item $\Psi_k^A > e^{-k\delta^2}$. By inequality (\ref{eq:psi_k_chance}), the probability of this event is at most $2e^{-k\delta^2} \leq 2 \delta \epsilon$.
		\item $\Lambda_f \cap (H_{N_{k+1}}^A \mbox{ is bad})$. By (\ref{eq:A_cap_H_k_bad}), the probability of this event is at most $\epsilon_1 + \delta$.
	\end{itemize}
	
	Since $\Pr(D_1) = 0$, we conclude that
	$$
	\Pr\Bigl(\Lambda_f \cap (\Theta < p)\Bigr) = \Pr\Bigl(\bigl( \Lambda_f \cap (\Theta < p) \bigr) \setminus D_1\Bigr) \leq 7 \epsilon
	$$
	Thus taking $\epsilon \to 0$ gives $\Pr(\Lambda_f \cap (\Theta < p)) = 0$, so case (1.a) occurs with probability zero.
	
	\medskip
	
	\emph{\textbf{Case 1.b:}} $\Theta < p$, Both players explore infinitely often (i.e., $\Lambda_{\infty}$ holds).
	
	\medskip
	
	For each random time $N \in \mathbb{N}$ at which player $i$ has explored at least $k$ times, define
	$$
	\Psi_k^i = \Pr\Bigl((|\widehat{\Theta}_k^i - \Theta| > \delta) \cap (R_{\infty}^i \geq k+1) \mid H_{N}^i \Bigr)
	$$
	
	For each player $i$ and each $\ell \in \mathbb{N}$, let $\tau_{\ell}^i$ be the (random) time of the $\ell$-th exploration by that player.
	Let $N$ be the first time strictly greater than $\max\{\tau_k^A, \tau_k^B\}$ at which Alice explores. If there is no such time, then $N = \infty$.
	
	Define the event
	$$D = \{\widehat{\Theta}_k^A < p - 8\delta\} \cap \{\Psi_k^A \leq \delta \} \cap \{\max{(\tau_k^A, \tau_k^B)} < N < \infty\}.$$
	Let $u_{N}^A(S_A, S_B) = \Gamma_A(N, \infty) - \Gamma_B(N, \infty)$ be Alice's normalized expected utility from round $N$ onwards under $(S_A, S_B)$ and define $u_{N}^A(Left, S_B)$ similarly when she plays  left forever from round $N$ on the event $D$.
	Since strategy pair $(S_A, S_B)$ is optimal, Alice does at least as well by playing $S_A$ as she would do by playing left from round $N$ onwards on the event $D$:
	\begin{align} \label{eq:uA_bound_above_competing_1b}
		\Ex\Bigl[u_N^A(S_A, S_B) \mathbbm{1}_D \mid H_N^A\Bigr] & \geq \Ex\Bigl[u_N^A(Left, S_B) \mathbbm{1}_D \mid H_N^A\Bigr] \notag \\
		& \geq \Bigl( p - \Ex\Bigl[\gamma_B(N) \mid H_N^A\Bigr] - \frac{\beta \cdot e^{-k\delta^2}}{1-\beta} \Bigr) \mathbbm{1}_D,
	\end{align}
	since Bob's action at time $N$ is not influenced by Alice's action at time $N$.
	On the other hand, Alice's utility can be bounded from above by
	\begin{small}
		\begin{align}
			\Ex\Bigl[u_N^A(S_A, S_B) \mathbbm{1}_D \mid H_N^A\Bigr] & \leq  \Ex\Bigl[\gamma_A(N) \mathbbm{1}_D \mid H_N^A\Bigr] - \Ex\Bigl[\gamma_B(N)\mathbbm{1}_D \mid H_N^A\Bigr] + \Ex\Bigl[u_{N+1}^A(S_A, S_B) \mathbbm{1}_D \mid H_N^A\Bigr] \notag
		\end{align}
	\end{small}
	Alice's expected reward in round $N$ on the event $D$ is at most
	\begin{align}
		\Ex\Bigl[\gamma_A(N) \mathbbm{1}_D \mid H_N^A\Bigr] & \leq \Ex\Bigl[\Theta \cdot \mathbbm{1}_D \mid H_N^A\Bigr]\notag \\
		& = \Ex\Bigl[\Theta \cdot \mathbbm{1}_D \mathbbm{1}_{|\widehat{\Theta}_k^A - \Theta| > \delta} \mid H_N^A\Bigr] + \Ex\Bigl[\Theta \cdot \mathbbm{1}_D \mathbbm{1}_{|\widehat{\Theta}_k^A - \Theta| \leq \delta} \mid H_N^A\Bigr]  \notag \\
		& \leq \Psi_k^A \cdot \mathbbm{1}_D + \Ex\bigl[ (p-7 \delta)\mathbbm{1}_D \mid H_N^A \bigr] \notag \\
		& \leq (p - 6\delta) \mathbbm{1}_D
	\end{align}
	Then we can bound Alice's normalized expected utility from round $N$ on by
	\begin{small}
		\begin{align} \label{eq:uA_bound_below_competing_1b}
			\Ex\Bigl[u_N^A(S_A, S_B) \mathbbm{1}_D \mid H_N^A\Bigr] & \leq  (p - 6\delta) \mathbbm{1}_D -  \Ex\Bigl[\gamma_B(N) \mathbbm{1}_D \mid H_N^A\Bigr] +  \Ex\Bigl[u_{N+1}^A(S_A, S_B) \mathbbm{1}_D \mid H_N^A\Bigr]
		\end{align}
	\end{small}
	Let $\eta \in (0, \delta)$ so that $\mu([p-8\delta-2\eta, p-8\delta+2\eta]) < \delta(1-\beta)\epsilon$.
	Select $k$ large enough so that
	\begin{align} \label{eq:condition_k_competing_1b}
		\frac{e^{-k\eta^2}}{1-\beta} \leq \delta \epsilon
	\end{align}
	Comparing inequality (\ref{eq:uA_bound_below_competing_1b}) to (\ref{eq:uA_bound_above_competing_1b}) yields
	\begin{align} \label{eq:alice_final_lower_bound_competing_1b}
		\Ex\Bigl[u_{N+1}^A(S_A, S_B) \mathbbm{1}_D \mid H_N^A\Bigr] \geq \Bigl(6\delta - \frac{\beta \cdot e^{-k \delta}}{1-\beta}\Bigr) \mathbbm{1}_D \geq 5\delta \cdot \mathbbm{1}_D
	\end{align}
	Taking expectation over the history gives
	\begin{align} \label{eq:alice_final_D_lower_bound_competing_1b}
		\Ex\Bigl[u_{N+1}^A(S_A, S_B) \mathbbm{1}_D\Bigl] \geq 5\delta \cdot \Pr(D)
	\end{align}
	Define the event $$D^* = \Bigl\{\Theta < p - 8\delta\Bigr\} \cap \Bigl\{\max{(\tau_k^A, \tau_k^B)} < N < \infty\Bigr\}\,.$$
	We argue that the event $D^*$ is well approximated by the event $D$ by showing their symmetric difference is small. By the choice of $k$ in (\ref{eq:condition_k_competing_1b}), we have
	$$\Pr(\Psi_k^A \geq \delta) \leq 2\delta(1-\beta)\epsilon \; \mbox{ and } \; \Pr(|\widehat{\Theta}_k^A - \Theta| > \eta) \leq 2\delta(1-\beta)\epsilon\,.$$
	The symmetric difference is included in the union of the following events:
	$$
	D^* \triangle D \subset  \Bigl\{\Theta \in \bigl[p-8\delta-2\eta, p-8\delta+2\eta\bigr] \Bigr\} \cup \Bigl\{|\widehat{\Theta}_k^A - \Theta| > \eta \Bigr\} \cup \Bigl\{\Psi_k^A \geq \delta \Bigr\}
	$$
	By choice of $\delta, \eta$, and $k$, we obtain
	$$\Pr(D^* \triangle D) = \Ex\bigl(|\mathbbm{1}_D - \mathbbm{1}_{D^*}| \bigr) \leq 5 \delta(1-\beta)\epsilon
	$$
	This implies that
	\begin{align} \label{eq:symmetric_difference_u_bound_1b_competing}
		\Ex\Bigl[u_{N+1}^A(S_A, S_B) |\mathbbm{1}_{D^*} - \mathbbm{1}_{D}| \Bigl] \leq \frac{1}{1-\beta} \cdot 5 \delta(1-\beta)\epsilon = 5 \delta \epsilon
	\end{align}
	Combining (\ref{eq:symmetric_difference_u_bound_1b_competing}) with (\ref{eq:alice_final_D_lower_bound_competing_1b}) implies
	\begin{align} \label{eq:lower_bound_u_alice_1b_competing}
		\Ex\Bigl[u_{N+1}^A(S_A, S_B) \mathbbm{1}_{D^*}\Bigl] \geq 5\delta \cdot \Pr(D) - 5 \delta \epsilon
	\end{align}
	Define an event similar to $D$ for Bob:
	$$
	D' = \Bigl\{\bigl(\widehat{\Theta}_k^B < p - 8 \delta\bigr) \cap
	\bigl( \Psi_k^B \leq \delta \bigr) \cap \bigl(\max\{\tau_k^A, \tau_k^B\} < N \bigr)
	\Bigr\}
	$$
	Then Bob's normalized expected utility from round $N$ onwards given his history satisfies
	\begin{align}
		\Ex\Bigl[ u_{N+1}^B(S_A, S_B) \cdot \mathbbm{1}_{D'} \mid H_N^B \Bigr] & \geq \Ex\Bigl[ u_{N+1}^B(S_A, Left) \cdot \mathbbm{1}_{D'} \mid H_N^B\Bigr]  \\
		& \geq -\frac{e^{-k\delta^2}}{1-\beta} \mathbbm{1}_{D'} \geq - \delta \epsilon
	\end{align}
	A similar argument to the proof of (\ref{eq:symmetric_difference_u_bound_1b_competing}) gives
	\begin{align} \label{eq:lower_bound_u_bob_1b_competing}
		\Ex\Bigl[u_{N+1}^B(S_A, S_B) \mathbbm{1}_{D*}\Bigr] \geq -\delta \epsilon - 5 \delta \epsilon
	\end{align}
	Since the game is zero-sum, adding (\ref{eq:lower_bound_u_alice_1b_competing}) and (\ref{eq:lower_bound_u_bob_1b_competing}) gives
	$$
	0 = \Ex\Bigl[u_{N+1}^A(S_A, S_B) \mathbbm{1}_{D*}\Bigr] + \Ex\Bigl[u_{N+1}^B(S_A, S_B) \mathbbm{1}_{D*}\Bigr] \geq 5\delta \cdot \Pr(D) - 11 \delta \epsilon
	$$
	This implies that $\Pr(D) \leq 3 \epsilon$.
	\medskip
	
	Note the containment
	$$
	\Bigl\{ \bigl(\Theta < p - 9 \delta\bigr) \cap \bigl(\max\{\tau_k^A, \tau_k^B\}< N < \infty \bigr) \Bigr\} \subset D \cup \Bigl\{\bigl(|\Theta - \widehat{\Theta}_k^A| \geq \delta\bigr) \cup \bigl(\Psi_k^A > \delta \bigr)\Bigr\}
	$$
	Since $\Pr(D) \leq 3 \epsilon$ and $\Pr(|\Theta - \widehat{\Theta}_k^A| \geq \delta) \leq \epsilon$ and $\Pr(\Psi_k^A > \delta) \leq \epsilon$, we get
	$$
	\Pr\Bigl(\bigl(\Theta < p - 9 \delta\bigr) \cap \bigl(\max\{\tau_k^A, \tau_k^B\}< N < \infty \bigr) \Bigr) \leq 5 \epsilon \,.
	$$
	By the choice of $\delta$, since $\bigl(R_{\infty}^A = R_{\infty}^B = \infty\bigr) \subseteq \bigl(\max\{\tau_k^A, \tau_k^B\}< N < \infty \bigr)$, this implies that
	$$
	\Pr\Bigl( \bigl(\Theta < p\bigr) \cap \bigl(R_{\infty}^A = R_{\infty}^B = \infty\bigr) \Bigr) \leq 6 \epsilon
	$$
	
	Letting $\epsilon \to 0$ implies that case (1.b) occurs with probability zero.
	
	\medskip
	
	\noindent \emph{\textbf{Case 2}}. $\Theta > p$. We define $M$ as a stopping time with the following property: $M$ is the first time where $M > \tau_k^A $ and $M > \tau_k^B  $ and Bob plays left at time $M$ under $S_B$. If there is no such time, then $M = \infty$. 
	Let $S_B'$ be the following Bob strategy: play $S_B$ until time $M-1$. From round $M$ on, play right.
	
	Similarly to the neutral players case, let $\widetilde{H}_{M}$ be Alice's public history (i.e. containing the sequence of arms she played but not her rewards) running from round $0$ until the end of round $M-1$. If $M = \infty$, then $\widetilde{H}_M$ is the whole history. Note that $\widetilde{H}_M$ is observable by both players at the beginning of round $M+1$. For $M < \infty$, define
	$$
	\widetilde{H}_M \mbox{ is {good} } \iff \mu_{\widetilde{H}_M}\bigl(p,p+ 2\delta\bigr) < \sqrt{\epsilon} \; \mbox{and} \;
	\mu_{\widetilde{H}_M}\bigl(0,p\bigr) < \epsilon \delta\,.
	$$
	Let $D = \{(\widetilde{H}_M \mbox{ is good}) \cap (M < \infty)\}$.
	
	\medskip
	
	By combining cases $(1.a)$ and $(1.b)$ for competing players, we get that
	$$
	\Pr\Bigl( (\Theta < p) \cap (R_{\infty}^A = \infty)) = 0
	$$
	Thus, for sufficiently large $k$, we get
	\begin{align} \label{eq:1a1b_competing_combo_ub}
		\Pr\Bigl( (\Theta < p) \cap (R_{\infty}^A \geq k+1)) < \epsilon^2 \delta\,.
	\end{align}
	We follow the formulas from (\ref{eq:posterior_bob}) to (\ref{eq:bad_history_finite_Q_neutral}) of the analysis for neutral players, with the only change that in (\ref{eq:exp_bound_barmu}) we use the bound from (\ref{eq:1a1b_competing_combo_ub}). Thus we obtain  the same bound as in (\ref{eq:bad_history_finite_Q_neutral}):
	\begin{align}  \label{Q:bound_bad_competing}
		\Pr\bigl((M < \infty) \cap (\widetilde{H}_M \mbox{ is bad})\bigr)
		\leq 2 \sqrt{\epsilon}\,.
	\end{align}
	Define $S_B'$ as the following Bob strategy: play $S_B$ until the end of round $M-1$. At round $M$, if $D$ holds, then play right forever. Otherwise, continue with $S_B$.
	On the event $D$, since $\mu_{\widetilde{H}_M}(0,p) < \epsilon \delta$, by playing right forever from round $M+1$ onwards Bob guarantees a minimal utility of at least
	$$
	\Ex\Bigl[\bigl(u_{M+1}^B(S_A, S_B') \bigr) \mathbbm{1}_{D} \mid \widetilde{H}_M \Bigr] \geq - \frac{\epsilon \delta}{1-\beta} \cdot \mathbbm{1}_D
	$$
	
	Let $\gamma_i(M)$ be the reward of player $i$ in round $M$ under strategies $(S_A, S_B)$ and $\gamma_i'(M)$ the reward of player $i$ in round $M$ under strategies $(S_A, S_B')$. Note that $\gamma_A(M) = \gamma_A'(M)$.
	Then
	\begin{small}
		\begin{align}
			\Ex\Bigl[\bigl(u_{M}^B(S_A, S_B') \bigr) \mathbbm{1}_{D} \mid \widetilde{H}_M \Bigr]
			& =
			\Ex\Bigl[\bigl(\gamma_B'(M) - \gamma_A'(M) \bigr) \mathbbm{1}_{D} \mid \widetilde{H}_M \Bigr] + \Ex\Bigl[\bigl(u_{M+1}^B(S_A, S_B') \bigr) \mathbbm{1}_{D} \mid \widetilde{H}_M \Bigr] \notag  \\
			& \geq \left(p + \frac{\delta}{2}\right) \mathbbm{1}_{D} - \Ex\Bigl[\bigl(\gamma_A(M) \bigr) \mathbbm{1}_{D} \mid \widetilde{H}_M \Bigr] - \frac{\epsilon \delta}{1-\beta} \cdot \mathbbm{1}_D
		\end{align}
	\end{small}
	For strategy pair $(S_A, S_B)$ we get
	\begin{small}
		\begin{align}
			\Ex\Bigl[\bigl(u_{M}^B(S_A, S_B) \bigr) \mathbbm{1}_{D} \mid \widetilde{H}_M \Bigr]
			& =
			\Ex\Bigl[\bigl(\gamma_B(M) - \gamma_A(M) \bigr) \mathbbm{1}_{D} \mid \widetilde{H}_M \Bigr] + \Ex\Bigl[\bigl(u_{M+1}^B(S_A, S_B) \bigr) \mathbbm{1}_{D} \mid \widetilde{H}_M \Bigr] \notag  \\
			& = p \cdot  \mathbbm{1}_D - \Ex\Bigl[\bigl(\gamma_A(M) \bigr) \mathbbm{1}_{D} \mid \widetilde{H}_M \Bigr] + \Ex\Bigl[\bigl(u_{M+1}^B(S_A, S_B) \bigr) \mathbbm{1}_{D} \mid \widetilde{H}_M \Bigr]
		\end{align}
	\end{small}
	Since $\Ex\Bigl[\bigl(u_{M}^B(S_A, S_B) \bigr) \mathbbm{1}_{D} \mid \widetilde{H}_M \Bigr]  \geq \Ex\Bigl[\bigl(u_{M}^B(S_A, S_B') \bigr) \mathbbm{1}_{D} \mid \widetilde{H}_M \Bigr] $, by
	combining the previous inequalities we obtain
	\begin{align}
		& p \cdot  \mathbbm{1}_D - \Ex\bigl[\bigl(\gamma_A(M) \bigr) \mathbbm{1}_{D} \mid \widetilde{H}_M \bigr] + \Ex\bigl[\bigl(u_{M+1}^B(S_A, S_B) \bigr) \mathbbm{1}_{D} \mid \widetilde{H}_M \bigr] \geq \notag \\
		& \left(p + \frac{\delta}{2}\right) \mathbbm{1}_D - \Ex\Bigl[\bigl(\gamma_A(M) \bigr) \mathbbm{1}_{D} \mid \widetilde{H}_M \Bigr] - \epsilon \delta \implies \notag \\
		& \Ex\bigl[\bigl(u_{M+1}^B(S_A, S_B) \bigr) \mathbbm{1}_{D} \mid \widetilde{H}_M \bigr] \geq \frac{\delta}{2} \cdot  \mathbbm{1}_D
	\end{align}
	
	On the other hand, Alice can ensure a utility of at least $-\epsilon \delta/(1-\beta)$ from round $M+1$ onwards by always playing right, and so
	\begin{align}
		\Ex\bigl[\bigl(u_{M+1}^A(S_A, S_B) \bigr) \mathbbm{1}_{D} \mid \widetilde{H}_M \bigr] \geq - \frac{\epsilon \delta}{1-\beta} \cdot  \mathbbm{1}_D
	\end{align}
	
	Since the game is zero-sum, we obtain that
	\begin{align}
		0 = \Ex\bigl[u_{M+1}^B(S_A, S_B) \bigr] + \Ex\bigl[u_{M+1}^A(S_A, S_B) \bigr] \geq \Bigl( \frac{\delta}{2} - \frac{\epsilon \delta}{1-\beta} \Bigr) \cdot \Pr(D)
	\end{align}
	By choice of $\epsilon$ in (\ref{eq:epsilon_bound_competing}) we have $\delta / 2 - \epsilon \delta/(1-\beta) > 0$, so $\Pr(D) = 0$.
	Combined with inequality (\ref{Q:bound_bad_competing}), we get that $\Pr(M < \infty) < 2 \sqrt{\epsilon}$.
	This implies
	$$\Pr\Bigl((R_{\infty}^A = R_{\infty}^B = \infty) \cap \mbox{\{Bob plays left infinitely often\}}\Bigr) < 2 \sqrt{\epsilon}
	$$
	By letting $\epsilon \to 0$, we conclude the latter probability must be zero.
	By symmetry, we obtain that if Bob and Alice explore infinitely often, then they both settle on the right arm from some point on with probability $1$.
\end{proof}

\section{Competitive Play -- Improved Bounds for a Uniform Prior} \label{app:competitive}

In this section we give improved bounds for the thresholds in the case of an arm with a uniform prior, for the case where the players are competing ($\lambda = -1$). In particular, we show that both players will explore the risky arm for all $p < 5/9$ and will not explore for all $p > 2 - \sqrt{2}$.

\begin{proposition} \label{prop:5_9}
	Let arm $L$ have a known probability $p$ and arm $R$ with a uniform prior that is common knowledge. Then both players will explore arm $R$ with positive probability for all $p < 5/9$ as the discount factor $\beta \to 1$.
\end{proposition}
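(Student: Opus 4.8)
The plan is to follow the framework of the proof of Theorem~\ref{thm:non_myopic_discounted} and show that, for the uniform prior, the net value $v_0^{(p)}$ stays positive for all $p < 5/9$ once $\beta$ is close to $1$. First I would specialize the constants: for the uniform prior on $[0,1]$ we have $m = 1/2$ and $w = 1/12$, the one-step posterior means are $m_0 = 1/3$ and $m_1 = 2/3$, and the two-step posterior means are $3/4$, $1/2$, $1/4$ (each branch having marginal probability $1/3$), since the posterior after $a$ successes and $b$ failures is $\mathrm{Beta}(a+1,b+1)$ with mean $(a+1)/(a+b+2)$. The target threshold can be written as $5/9 = m + \tfrac{2}{3}w$, so the task is to improve the generic bound $\widetilde p \geq m + \beta w/2$ (which tends to $m + w/2 = 13/24$) up to $m + \tfrac{2}{3}w$ in the limit $\beta \to 1$.

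The key observation is that the generic argument throws away the value of Alice's \emph{second} exploration: in inequalities (\ref{eq:v0_RL_bound}) and (\ref{eq:v0_RR_bound}) the branch where Alice plays $R$ in round one (and hence explores in both rounds $0$ and $1$) is controlled only through the sign estimates $v_2(\cdot,L) \geq 0$ and $v_2(R,\cdot) \geq 0$ together with the crude Gittins lower bound of Lemma~\ref{lem:gbound}, discarding the information gained from that second draw. For the uniform prior these continuation values are exactly computable, because after two draws the posterior mean takes one of the three rational values above and Alice's subsequent behaviour reduces to the clean rule ``play $R$ forever if the posterior exceeds $p$, otherwise switch to $L$''. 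I would therefore recompute $v_0(R,L)$ and $v_0(R,R)$ retaining the informational gain of the second draw, replacing the estimates of (\ref{eq:v0_RL_bound})--(\ref{eq:v0_RR_bound}) by their exact uniform-prior values (or sharp lower bounds), while keeping the one-step bound (\ref{eq:v0_SL_bound}) for the branch where Alice plays strategy $S$.

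With these sharpened building blocks I would recombine the branches through a mixture of Alice's round-one continuations --- the mixture $\tfrac{R+S}{2}$ of Theorem~\ref{thm:non_myopic_discounted}, or a reweighted mixture tuned to the uniform prior --- using $\widetilde v_t(x,y) \leq v_t(x,y)$ so that handing Bob Alice's round-zero record only weakens the bound and the informed-Bob reduction remains valid. Passing to the limit $\beta \to 1$, the refined lower bound on $\min_{y \in \{L,R\}} v_0(\cdot,y)$ should become positive precisely when $\delta = p - m < \tfrac{2}{3}w = 1/18$, i.e. for all $p < 5/9$, which is the claim.

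The main obstacle is controlling the two-exploration continuation values against a fully informed, adversarial Bob who shadows Alice: once Bob copies her actions her informational edge is only transient, so I must show that the value of the extra draw genuinely survives the copying in the undiscounted limit and track the resulting finite transient difference in $\Gamma_A - \Gamma_B$ exactly. A secondary point to verify is the regime where the true bias $\Theta$ sits close to $p$; but the uniform prior has no atom at $p$ and the relevant posterior means $1/3, 2/3, 1/4, 3/4$ are bounded away from every $p \in (1/2, 5/9)$, so the posterior-ambiguity difficulty that arises in the long-run analysis does not appear here.
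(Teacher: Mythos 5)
Your reduction (force Alice to $R$ and Bob to $L$ in round zero, then show Alice wins the zero-sum game for all $p<5/9$ as $\beta\to 1$) is the same as the paper's, your specialized constants ($m=1/2$, $w=1/12$, $m_0=1/3$, $m_1=2/3$, $5/9=m+\tfrac23 w$) are correct, and your instinct that the $S$-branch is the right engine is also correct: after fixing a sign slip in (\ref{eq:v0_SL_bound}) (since $m_1-p=m_1-m-\delta$, the correct bound is $v_0(S,L)\ge -\delta(1+\beta m)+\beta w$), pure strategy $S$ against Bob playing $L$ is positive precisely for $\delta<\beta w/(1+\beta m)\to \tfrac1{18}$, i.e.\ $p<5/9$. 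The gap is in your recombination step: no mixture that puts fixed positive weight on the $R$ continuation can certify $5/9$ against Bob playing $L$, no matter how sharply you evaluate $v_0(R,L)$. The reason is the copying cap you yourself flag as an ``obstacle'': Bob can answer Alice's two explorations by playing $L$ through round two and thereafter copying her previous move, which forces $v_2(R,L)\le \beta^2\bigl(\tfrac{1+p^2}{2}-p\bigr)=\beta^2\tfrac{(1-p)^2}{2}$, and this holds even if Alice learned $\Theta$ exactly from her two draws. Hence $v_0(R,L)\le -\delta(1+\beta)+\beta^2\tfrac{(1-p)^2}{2}\to -\tfrac19+\tfrac{8}{81}=-\tfrac{1}{81}<0$ at $p=5/9$, while the bound you retain for the $S$ term, $\tfrac1{12}-\tfrac{3\delta}{2}$, tends to $0$ there; so the $\tfrac{R+S}{2}$ value against $L$ is negative near the threshold (it certifies only $p\lesssim 0.553$). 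Nor can you shift weight toward $S$: the cancellation of the unknown $v_2(R,R)$ between $v_0(R,R)$ and $v_0(S,R)$ --- the framework's only handle on the Bob-plays-$R$ case --- occurs only at weight exactly $\tfrac12$; with more weight on $S$ the term $(1-2\alpha)v_2(R,R)$ has negative coefficient, and you would need an upper bound on $v_2(R,R)$ that the framework does not supply.

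The genuinely missing piece is therefore a direct lower bound on $v_0(S,R)$, specifically on the branch where Alice sees a $0$ and Bob explores, and here your proposed ``clean rule'' (retire to $L$ when the posterior drops below $p$) is not maxmin play and fails badly: Bob holds his own draw, his belief and index stay above $p$ as $\beta\to1$, so against a retired Alice he keeps exploring and out-earns her by order $1/(1-\beta)$. The paper closes exactly this branch with a different device: Alice plays $L$ once and then copies Bob's move with a one-round lag, so the continuation difference telescopes to a constant, and Bob's per-round future reward is bounded by the relaxation in which he is handed both of Alice's bits, $\Ex(\gamma_B(t\mid 0,R))\le \tfrac23\Ex\max(p,X_1)+\tfrac13\Ex\max(p,X_2)$ with $X_1\sim\mathrm{Beta}(1,3)$ and $X_2\sim\mathrm{Beta}(2,2)$, which yields $\lim_{\beta\to1}\Ex(\Gamma_A-\Gamma_B)\ge \tfrac16+\tfrac{p^3}{6}-\tfrac{p^2}{2}>0$ for all $p\le 0.65$ when Bob explores. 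The binding constraint is then the Bob-plays-$L$ case, handled by pure $S$, which is exactly where $5/9$ comes from. Without this copying-plus-relaxation argument, or an equivalent substitute for the branch in which Alice has seen bad news while her opponent explores, your plan cannot reach $5/9$.
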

\begin{proof}
	To prove this, we consider the scenario where in round zero Bob is forced to play $L$ and Alice is forced to play $R$, while the players can play optimally afterwards. Then we show that Alice wins for all $p < 5/9$ as $\beta \to 1$ regardless of Bob's strategy, which will imply that in any equilibrium the players will both play the right arm in round zero.
	
	Consider now the scenario where they are forced to play as described above and define the following strategy for Alice:
	\begin{enumerate}
		\item If the bit observed in round zero is 0, then play L in round 1. Then
		\begin{itemize}
			\item If Bob played R in round 1, then play L in round 2 and from round 3 onwards copy Bob's arm from the previous round.
			\item If Bob played L in round 1, then play optimally from round 2 onwards.
		\end{itemize}
		\item Else, if the bit observed in round zero is 1, then play R again in round 1. Play optimally from round 2 onwards.
	\end{enumerate}
	We show that Alice wins in expectation regardless of Bob's counterstrategy. For round zero we have $\Ex(\gamma_A(0)) = 1/2$ and $\Ex(\gamma_B(0)) = p$.
	Alice's expected payoff in round 1 depends on whether she got a 0 or a 1 in round zero: $$\Ex(\gamma_A(1)) = 1/2 \cdot p + 1/2 \cdot 2/3 =  1/2 \cdot p + 1/3\,.$$
	
	To analyze Alice's payoff in round two and Bob's maximum expected reward overall we consider two cases, depending on Bob's move in round one. We assume that Bob plays optimally from round two onwards. Note also that Bob's decision for what arm to play in round one cannot depend on Alice's bit, so it is in fact determined at round zero.
	
	\medskip
	
	\noindent \emph{Case 1}: Bob plays R in round one. Denote by $\gamma_A(t | 1, R)$ Alice's reward in round $t$ given that she saw a 1 in round zero and that Bob's strategy is to play R in round one, and similarly for $\gamma_A(t | 0, R)$. Then Alice's expected total reward is:
	\begin{small}
		\begin{align} \label{eq:ev_Alice}
			\Ex(\Gamma_A) & = \Ex(\gamma_A(0)) + \Ex(\gamma_A(1)) \cdot \beta + \Ex(\gamma_A(2)) \cdot \beta^2 + \sum_{t = 3}^{\infty} \Ex(\gamma_A(t)) \cdot \beta^t =  \frac{1}{2} + \beta\left(\frac{p}{2} + \frac{1}{3} \right) \notag \\
			& \; \; \;  + \beta^2 \left(\frac{p}{2} + \frac{\Ex(\gamma_A(2|1,R))}{2}   \right)+ \frac{1}{2}  \sum_{t = 3}^{\infty} \Ex(\gamma_A(t | 1,R)) \cdot \beta^{t} + \frac{1}{2}  \sum_{t = 3}^{\infty} \Ex(\gamma_B(t-1 | 0, R)) \cdot \beta^{t} \,.
		\end{align}
	\end{small}
	Bob's expected total reward is:
	\begin{align} \label{eq:ev_Bob_rightcase}
		\Ex(\Gamma_B) = p + \frac{1}{2} \cdot \beta + \frac{1}{2}  \sum_{t= 2}^{\infty} \Ex(\gamma_B(t | 0, R)) \cdot \beta^t + \frac{1}{2}\sum_{t= 2}^{\infty} \Ex(\gamma_B(t | 1, R)) \cdot \beta^t
	\end{align}
	
	In the case where Alice observes a 1 in round zero, then Alice has an advantage from round two onwards, so $\Ex(\gamma_A(t | 1,R)) \geq \Ex(\gamma_B(t | 1,R))$ for all $ t \geq 2$.
	Using this fact and simplifying the expressions, we obtain that
	Alice's net gain in the zero sum game is
	\begin{align}
		\Ex(\Gamma_A) - \Ex(\Gamma_B) 
		& \geq \frac{1}{2} + \beta \left(\frac{p}{2}  + \frac{1}{3} \right) + \frac{p \beta^2}{2} - p - \frac{\beta}{2} - \frac{1 - \beta}{2} \cdot \sum_{t = 2}^{\infty} \Ex(\gamma_B(t | 0, R)) \cdot \beta^t \notag
	\end{align}
	
	To bound $\Ex(\gamma_B(t | 0, R))$ we consider two scenarios, depending on whether Bob saw a zero or a one in round 1. Moreover, by round two Bob knows the bit seen by Alice in round zero since her behavior is different depending on that bit. Let $X_1$ and $X_2$ be two  random variables with densities $3(1-x)^2$ and $6x(1-x)$ for all $x \in [0,1]$ respectively. Then
	\begin{align} \label{eq:vbbound}
		\Ex(\gamma_B(t | 0, R)) & \leq \frac{2}{3} \cdot \Ex(\max(p, X_1)) + \frac{1}{3} \cdot \Ex(\max(p, X_2)) = 1/3 - 1/3 \cdot p^3\notag
	\end{align}
	
	Then as $\beta \to 1$, the difference in rewards is
	$$\lim_{\beta \to 1} \left[ \Ex(\Gamma_A) - \Ex(\Gamma_B) \right] \geq 1/6 + p^3 / 6 - p^2/2\,.$$
	Then Alice wins in expectation for all $p \leq 0.65$.
	
	\medskip
	
	\noindent \emph{Case 2}: Bob plays L in round one. Note we will write again $\gamma_A(t|\emph{b}, L)$ to denote Alice's reward in round $t$ given that Alice saw the bit $b$ in round zero and Bob plays left in round one. Bob's expected total reward is
	$\Ex(\Gamma_B)) = p + p \cdot \beta + \sum_{t= 2}^{\infty} \Ex(\gamma_B(t)) \cdot \beta^t$,
	while Alice's is 
	\begin{align}
		\Ex(\Gamma_A) & = \frac{1}{2} + \beta \left(\frac{p}{2} + \frac{1}{3} \right) + \frac{1}{2} \cdot \sum_{t = 2}^{\infty} \Ex(\gamma_A(t | 1,L)) \cdot \beta^{t} + \frac{1}{2} \cdot \sum_{t = 2}^{\infty} \Ex(\gamma_B(t-1 | 0, L)) \cdot \beta^{t}. \notag
	\end{align}
	
	Again Alice has an informational advantage so she can at least equalize from round two onwards regardless of the value of the bit observed in round zero, and so $ \Ex(\gamma_A(t | b, L)) \geq  \Ex(\gamma_B(t | b, L))$ for all $ t \geq 2$ and $b \in \{0,1\}$.
	The difference in expected payoffs can be bounded in this case by $$ \Ex(\Gamma_A) - \Ex(\Gamma_B) \geq 1/2  + \beta \left(p/2 + 1/3  \right) - p (1 + \beta )\,.$$
	Taking $\beta \to 1$ gives
	$$\lim_{\beta \to 1} \left[ \Ex(\Gamma_A) - \Ex(\Gamma_B) \right] \geq 5/6 - 3p/2\,.$$
	Then Alice wins for all $p < 5/9$ when Bob plays left.
	
	\medskip
	
	Taking into account both cases implies the players explore for all $p < 5/9$ as required.
\end{proof}

\begin{proposition} \label{prop:upper_bound_uniform}
	Let arm $L$ have a known probability $p > 2 - \sqrt{2}$ and arm $R$ with a uniform prior, both of which are common knowledge. Then with probability $1$ the players will not explore arm $R$ in any equilibrium.
\end{proposition}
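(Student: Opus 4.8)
The plan is to run the argument of Theorem~\ref{thm:competing_explore_less_discounted} with the same lag-copying strategy $S_B$ for Bob, but to replace the crude estimate $g/(1-\beta)$ used there for Alice's post-exploration value by a sharper bound tailored to the uniform prior. As in that proof, it suffices to show that the first player to explore has strictly negative net payoff; since the zero-sum game has value $0$, this forces both players to stay at the left arm in every equilibrium.

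First I would recall the exact bookkeeping from Theorem~\ref{thm:competing_explore_less_discounted}. If Bob plays $S_B$ (stay left until Alice first plays right in some round $k$, play left once more, and from round $k+2$ copy Alice's previous move) and Alice first explores in round $k$, then
\begin{align} \label{eq:prop_uniform_setup}
\Ex(\Gamma_A) - \Ex(\Gamma_B) = \beta^k \Bigl[ m\beta - p(1+\beta) + (1-\beta) U \Bigr],
\end{align}
where $U$ is Alice's value, discounted to round $k$, for the problem in which she is forced to play right in round $k$ and then plays optimally. Because Bob is merely copying her, she learns nothing from him, so $U$ is exactly a single-player value.

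The key step is to bound $U$ without computing the Gittins value function of the uniform prior. I would pass to the over-optimistic relaxation in which, immediately after her single round-$k$ observation, Alice is handed the true parameter $\Theta$ for free; extra information can only raise her value, and in the resulting full-information problem she plays the better arm forever from round $k+1$. Hence
\begin{align} \label{eq:prop_uniform_U}
U \le m + \frac{\beta}{1-\beta}\, \Ex\bigl[\max(\Theta,p)\bigr], \qquad \Ex\bigl[\max(\Theta,p)\bigr] = \int_0^1 \max(x,p)\, dx = \frac{1+p^2}{2},
\end{align}
using $m = 1/2$ for the uniform prior. Substituting (\ref{eq:prop_uniform_U}) into (\ref{eq:prop_uniform_setup}) and simplifying, the bracket is negative exactly when $1 - 2p + \beta(1-p)^2 < 0$. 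Since this expression is nondecreasing in $\beta$, it is largest at $\beta = 1$, where it equals $p^2 - 4p + 2$; this is negative precisely for $p > 2 - \sqrt 2$. Therefore the inequality holds for every $\beta \in (0,1)$ whenever $p > 2 - \sqrt 2$, so the first explorer's net payoff is strictly negative, and optimal play precludes exploration by either player.

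I expect the main obstacle to be the estimate (\ref{eq:prop_uniform_U}) on $U$: one must bound Alice's continuation value tightly enough to beat the generic threshold, whose $\beta \to 1$ limit is only $(m+g)/2 = 3/4 > 2 - \sqrt 2$, while avoiding an explicit index computation. The full-information relaxation is what makes this tractable; the points to verify carefully are that it is a genuine upper bound (since revealing $\Theta$ never hurts) and that the resulting inequality remains valid uniformly in $\beta$, not merely in the limit $\beta \to 1$.
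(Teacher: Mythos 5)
Your proposal is correct and follows essentially the same route as the paper: Bob's lag-copying strategy combined with the full-information bound $\Ex[\max(\Theta,p)] = (1+p^2)/2$ on Alice's post-exploration rewards, leading to the same condition $p^2 - 4p + 2 < 0$, i.e.\ $p > 2-\sqrt{2}$. The only cosmetic differences are that the paper fixes the first exploration at round zero and takes $\beta \to 1$ at the end (implicitly relying on the same monotonicity in $\beta$ that you verify explicitly), so your write-up is, if anything, slightly more careful about uniformity in $\beta$.
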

\begin{proof}
	To analyze this, we study the game where round zero is fixed such that Alice is at the right arm and Bob is at the left arm.
	
	Consider the following strategy for Bob: stay at arm $L$ in rounds zero and one, then in each round $t$, where $t \geq 2$, copy Alice's move from round $t-1$. Suppose Alice plays optimally given Bob's strategy.
	Then Alice's expected total is:
	$$
	\Ex(\Gamma_A) = 1/2 + \sum_{t=1}^{\infty} \Ex(\gamma_A(t)) \cdot \beta^t\,.
	$$
	For Bob, note that $\Ex(\gamma_B(t)) = \Ex(\gamma_A(t-1))$ for all $t \geq 2$.
	Then his expected total reward is
	\begin{align}
		\Ex(\Gamma_B) & = p \cdot (1 + \beta) + \sum_{t =2}^{\infty} \Ex(\gamma_A(t-1)) \cdot \beta^t
		= p \cdot (1 + \beta) + \sum_{t =1}^{\infty} \Ex(\gamma_A(t)) \cdot \beta^{t+1} \notag
	\end{align}
	Bob's expected net gain is
	$$\Ex(\Gamma_B) - \Ex(\Gamma_A) = p \cdot (1 + \beta) - 1/2 - (1- \beta) \cdot \sum_{t=1}^{\infty} \Ex(\gamma_A(t)) \cdot \beta^t\,.$$
	Let $X$ be a random variable with a uniform prior.
	Then Alice's expected value at round $t$ can be bounded by the expected maximum of $p$ and $X$, so
	\begin{align} 
		\Ex(\gamma_A(t)) \leq \Ex(\max\left(p, X\right)) = \int_{0}^{p} p \mathop{dx} + \int_{p}^{1} x \mathop{dx} = \frac{1}{2} + \frac{p^2}{2}
	\end{align}
	Using the bound on $\Ex(\gamma_A(t))$ above gives
	$$
	\Ex(\Gamma_B) - \Ex(\Gamma_A) \geq p \cdot (1 + \beta) - \frac{1}{2} - (1- \beta) \cdot \sum_{t=1}^{\infty}  \left(\frac{1}{2} + \frac{p^2}{2}\right) \cdot \beta^t
	$$
	Taking $\beta \to 1$ gives $$\lim_{\beta \to 1} \left[\Ex(\Gamma_B) - \Ex(\Gamma_A)\right] \geq 2p - 1/2 -  \left(1/2 + 1/2 \cdot p^2\right) = 2p - 1 - 1/2 \cdot p^2 \,.$$
	Then for all $p > 2 - \sqrt{2}$ Bob's net gain is strictly positive, thus Alice is at a disadvantage by playing the right arm in round zero. It follows that Alice is better off by playing the left arm in round zero, and so the players will never explore arm $R$ in any equilibrium.
\end{proof}

\begin{figure}[H]
	\centering
	\includegraphics[scale=0.64]{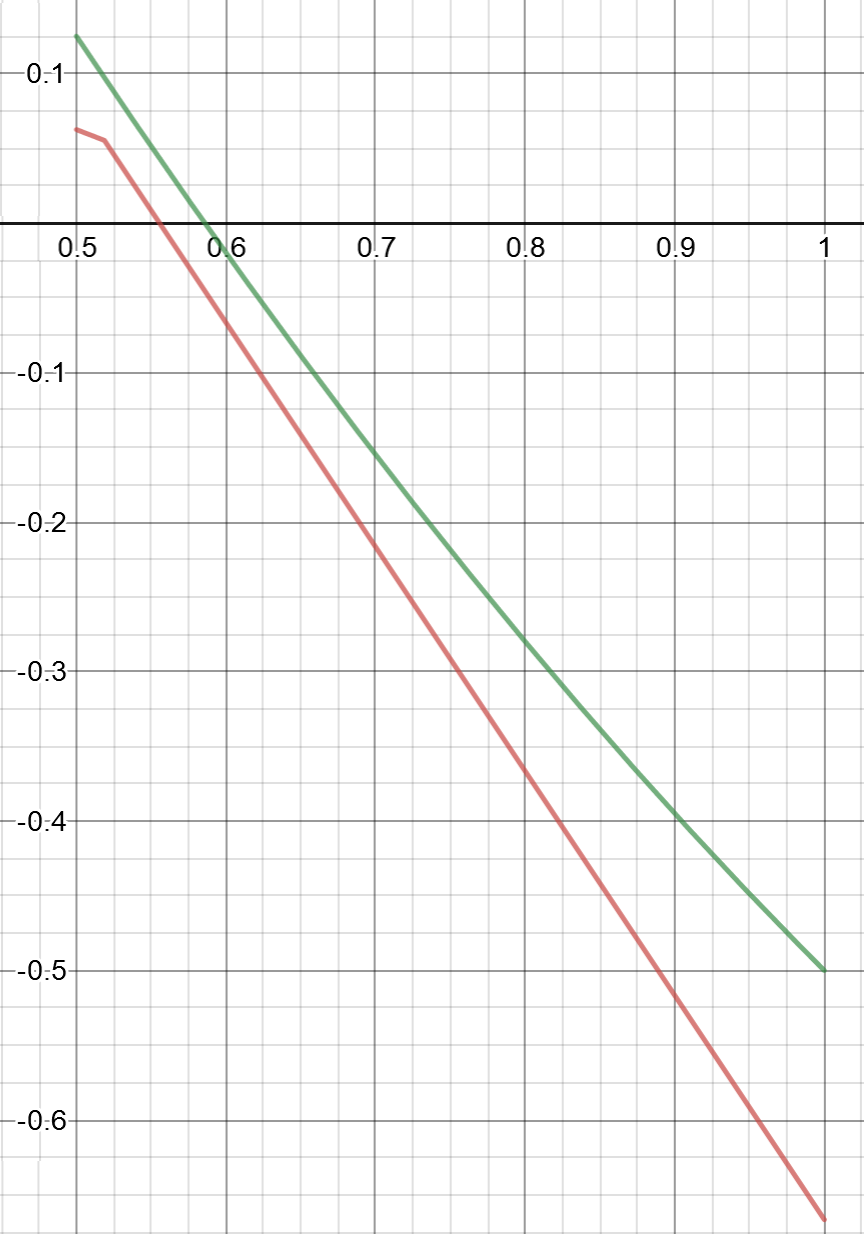}
	\caption{Bounds as a function of $p \in [0.5, 1]$ where the right arm has a uniform prior and $\beta \to 1$: the red line shows the lower bound on Alice's net gain given by the function $\mbox{lb}(p) = \min\{1/6 + p^3/6 - p^2/2, 5/6 - 3p/2\}$ (Proposition~\ref{prop:5_9}) when in round zero she starts at the right arm and Bob starts at left, after which they both play optimally. The green line shows the corresponding upper bound on Alice's net gain given by $\mbox{ub}(p) =p^2/2 + 1 - 2p$ (Proposition~\ref{prop:upper_bound_uniform}).}
	\label{fig:uniform_bounds}
\end{figure}

\section{Concluding Remarks and Questions} \label{sec:future_directions}

Several open questions arise from this work, such as understanding whether randomization is required sometimes, whether $p^* = \widetilde{p}$, and whether in the setting with multiple risky arms neutral and competing players settle eventually on the same arm with probability $1$ in every Nash equilibrium. More precisely, we propose the following questions:
\begin{enumerate}
	\item Do competing and neutral players always have optimal pure strategies or is randomization required sometimes?
	\item Recall the threshold $p^* = p^*(\mu, \beta, \lambda)$, defined in  Theorem~\ref{thm:competing_explore_less_discounted}, is the smallest number so that for all $p > p^*$ and in all Nash equilibria, the players do not explore.
	Similarly, $\widetilde{p} = \widetilde{p}(\mu, \beta, \lambda)$ from Theorem~\ref{thm:non_myopic_discounted}, is the largest number so that for all $p < \widetilde{p}$, the right arm is explored in some Nash equilibrium.
	\begin{enumerate}[(a)]
		\item Observation~\ref{obs:neutral_play_left_at_g} shows that for neutral players we have $\widetilde{p}(\mu, \beta, 0) \geq g(\mu, \beta)$ and Theorem~\ref{thm:cooperation} implies that $\widetilde{p}(\mu, \beta, 1) > g(\mu, \beta)$.
		Is $\widetilde{p}(\mu, \beta, 0) > g(\mu, \beta)$?
		\item Is $p^* = \widetilde{p}$ true in general?
		\item Are $p^*$ and $\widetilde{p}$ monotone in $\beta$ and $\lambda$?
	\end{enumerate}
	\item Consider the following variant of the model in this paper: each player learns the other player's rewards, but with a delay of $k$ rounds. Can one describe explicitly Nash equilibria in this variant as \cite{BH99} do in the model with perfect monitoring?
	\item Can one induce more exploration in this model by incorporating patent protection? For example, if Alice explores an arm for the first time, then a patent for her could mean that Bob cannot pull that arm for $k$ rounds afterwards, or alternatively, Bob should give Alice a fraction $\alpha$ of his reward whenever he pulls that arm in the $k$ rounds following her first exploration.
	\item When there are multiple risky arms, do neutral and competing players eventually settle with probability $1$ on the same arm in every Nash equilibrium? For neutral players, this is a conjecture by \cite{rotschild} made explicit by \cite{aoyagi}, who solved the case of two arms with finitely supported priors.
	\item Consider the setting with multiple risky arms.
	If there exist optimal pure strategies, can they be obtained from an index analogous to the Gittins index for a single player?
\end{enumerate}





\bibliographystyle{emss}
\bibliography{multiplayer_bib}


\end{document}